\newcounter{mynotes}
\declaretheorem[within=section]{theorem}
\declaretheorem[sibling=theorem]{corollary}
\declaretheorem[sibling=theorem]{lemma}
\declaretheorem[sibling=theorem]{claim}
\declaretheorem[sibling=theorem]{definition}
\declaretheorem[sibling=theorem]{proposition}
\crefname{conjecture}{Conjecture}{Conjectures}
\crefname{claim}{Claim}{Claims}
\crefname{remark}{Remark}{Remarks}
\crefname{Lemma+Definition}{Lemma+Definition}{Lemma+Definition}
\newcounter{termcounter}
\renewcommand{\thetermcounter}{\Alph{termcounter}}
\crefname{term}{term}{terms}
\def\term{\@ifnextchar[\term@optarg\term@noarg}
\def\term@optarg[#1]#2{%
  \textup{(#1)}%
  \def\@currentlabel{#1}%
  \def\cref@currentlabel{[][2147483647][]#1}%
  \cref@label[term]{#2}}
\def\term@noarg#1{%
  \refstepcounter{termcounter}%
  \textup{(\thetermcounter)}%
  \cref@label[term]{#1}}
\newcommand{\msf}[1]{\mathsf {#1}}
\newcommand{\ignore}[1]{}
\newcommand{\dist}{\mathrm{dist}}
\newcommand{\Brac}[1]{\left[#1 \right]}
\newcommand{\set}[1]{\left\{#1\right\}}
\definecolor{DSred}{rgb}{1,0,0}
\renewcommand{\leq}{\leqslant}
\renewcommand{\geq}{\geqslant}
\renewcommand{\ge}{\geqslant}
\renewcommand{\le}{\leqslant}
\renewcommand{\epsilon}{\varepsilon}
\newcommand{\eps}{\epsilon}
\newcommand{\R}{\mathbb{R}}
\newcommand{\C}{\mathbb{C}}
\newcommand{\Z}{\mathbb{Z}}
\newcommand{\N}{\mathbb{N}}
\newcommand{\F}{\mathbb{F}}
\newcommand{\T}{\mathbb{T}}
\newcommand{\U}{\mathbb{U}}
\newcommand{\cB}{\mathcal B}
\newcommand{\cH}{\mathcal H}
\newcommand{\cP}{\mathcal P}
\newcommand{\cT}{\mathcal T}
\newcommand{\Esymb}{{\bf E}}
\newcommand{\Psymb}{{\bf Pr}}
\DeclareMathOperator*{\E}{\Esymb}
\DeclareMathOperator*{\Var}{\Vsymb}
\DeclareMathOperator*{\ProbOp}{\Psymb}
\renewcommand{\Pr}{\ProbOp}
\newcommand{\Ex}[1]{\E\Brac{#1}}
\newcommand{\expo}[1]{{\mathsf{e}\left(#1\right)}}
\title{Estimating the distance from testable affine-invariant properties}
\author{Hamed Hatami\\
McGill University\\
\texttt{hatami@cs.mcgill.ca}
\and
Shachar Lovett\\
UC San Diego\\
\texttt{slovett@cs.ucsd.edu}
}
\def\Var{\mathbf{Var}}
\begin{document}
\maketitle

\begin{abstract}
Let $\cP$ be an affine invariant property of functions $\F_p^n \to [R]$ for fixed $p$ and $R$. We show that if $\cP$ is locally testable with a constant number of queries, then one can estimate the distance of a function $f$ from $\cP$ with a constant number of queries. This was previously unknown even for simple properties such as cubic polynomials over $\F_2$.

Our test is simple: take a restriction of $f$ to a constant dimensional affine subspace, and measure its distance from $\cP$. We show that by choosing the dimension large enough, this approximates with high probability the global distance of $f$ from $\cP$. The analysis combines the approach of Fischer and Newman [SIAM J. Comp 2007] who established a similar result for graph properties, with recently developed tools in higher order Fourier analysis, in particular those developed in Bhattacharyya et al. [STOC 2013].
\end{abstract}

\section{Introduction}\label{sec:intro}

Blum, Luby, and Rubinfeld~\cite{BLR} observed that given a function $f:\F_p^n \rightarrow \F_p$, it is possible to inquire the value of $f$ on a few random points, and accordingly probabilistically distinguish between the case that $f$ is a linear function and the case that $f$ has to be modified on at least $\eps>0$ fraction of points to become a linear function. Inspired by this observation, Rubinfeld and Sudan~\cite{866666} defined the concept of property testing which is now a major area of research in theoretical computer science. Roughly speaking, to test a function for a property means to examine the value of the function on a few random points, and accordingly (probabilistically) distinguish between
the case that the function has the property and the case that it is not too close to any function with that property.

The focus of our work is on testing properties of multivariate functions over finite fields. Fix a prime $p \geq 2$ and an integer $R \geq 2$ throughout. Let $\F = \F_p$ be a prime field and $[R]=\{0,\ldots,R-1\}$. We consider properties of functions $f : \F^n \to [R]$.
We are interested in testing the \emph{distance} of a function $f:\F^n \to [R]$ to a property. Here the distance corresponds to the minimum fraction of the points on which the function can be modified in order to satisfy the property. Fischer and Newman~\cite{MR2318716} showed that it is possible to estimate the distance from a graph to any given \emph{testable} graph property. In this article we extend this result to the algebraic setting of affine-invariant properties on functions $f:\F^n \to [R]$. Furthermore we show that the Fischer-Newman test can be replaced by a more natural one: pick a sufficiently large subgraph $H$ randomly and estimate the distance of $H$ to the property. Analogously, in our setting, we pick a sufficiently large affine subspace of $\F^n$ randomly, and measure the distance of the restriction of the function to this subspace from the property.

\subsection{Testability}
Given a property $\cP$ of functions in $\{\F^n \to [R] \ | \ n \in \N\}$,
we say that $f : \F^n \to [R]$ is {\em $\eps$-far} from $\cP$ if
$$\min_{g \in \cP} \Pr_{x \in \F^n}[f(x) \neq g(x)] > \eps,$$
and we say that it is {\em $\eps$-close} otherwise.

\begin{definition}[Testability]\label{testable}
A property $\cP$ is said to be {\em testable} (with two-sided error)
if there is a function $q: (0,1) \to \N$ and an algorithm $T$ that, given as input a parameter $\eps > 0$ and oracle
access to a function $f: \F^n \to [R]$, makes at most $q(\eps)$
queries to the oracle for $f$, accepts with probability at least $2/3$ if $f \in \cP$ and
rejects with probability at least $2/3$ if $f$ is $\eps$-far
from $\cP$.
\end{definition}

Note that if we do not require any restrictions on $\cP$, then the algebraic structure of $\F^n$ becomes irrelevant, and $\F^n$ would be treated as a generic set of size $|\F|^n$. To take the algebraic structure into account, we have to require certain ``invariance'' conditions.

We say that a property $\cP \subseteq \{\F^n \to [R]\ | \ n \in \N\}$ is \emph{affine-invariant} if for any $f \in \cP$ and  any affine transformation $A:
\F^n \to \F^n$, we have $Af  :=f \circ A \in \cP$ (an affine transformation
$A$ is of the form $L+c$ where $L$ is linear and $c$ is a constant vector in $\F^n$).
Some well-studied examples of affine-invariant properties
include Reed-Muller codes (in other words, bounded degree
polynomials) \cite{BFL, BFLS,FGLSS,RS,AKKLR} and Fourier sparsity
\cite{GOSSW}. In fact, affine invariance seems to be a common feature of most interesting
properties that one would classify as ``algebraic''.  Kaufman and
Sudan in \cite{KS08} made explicit note of this phenomenon and initiated a general study of the testability of
affine-invariant properties (see also~\cite{MR2863292}). In particular, they asked for necessary
and sufficient conditions for the testability of affine-invariant properties. This question initiated
an active line of research, which have led to a near complete characterization of testable affine invariant properties,
at least in the regime of one-sided error~\cite{BCSX09, KSV12, Sha09, BGS10,BFL12,BFHHL13}.

It is not difficult to see that for affine-invariant properties testability has an equivalent ``non-algorithmic'' definition through the distribution of restrictions to affine subspaces. We will describe a restriction of $\F^n$ to an affine subspace of dimension $k$ by an affine embedding $A:\F^k \to \F^n$
(an affine embedding is an injective affine transformation). The restriction of $f:\F^n \to [R]$ to the subspace is then given by $Af:\F^k \to [R]$.

\begin{proposition}\label{prop:testing}
An affine-invariant property $\cP$ is testable if and only if for every $\eps>0$, there exist a constant $k$ and a set $\cH \subseteq \{\F^k \to [R]\}$, such that for a function $f:\F^n \to [R]$ and a random affine embedding $A:\F^k \to \F^n$ the following holds. If $f \in \cP$, then
$$\Pr[Af \in \cH]>2/3,$$
and if $f$ is $\eps$-far from $\cP$, then
$$\Pr[Af  \not\in \cH]>2/3.$$
\end{proposition}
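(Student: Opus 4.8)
The plan is to prove the two directions separately, with all the work in the ``only if'' direction. The ``if'' direction is immediate from the definition: given $\eps>0$, take the promised constant $k$ and set $\cH \subseteq \{\F^k \to [R]\}$, and let the tester pick a uniformly random affine embedding $A : \F^k \to \F^n$, read the $p^k$ values of $f$ on the image of $A$ to form $Af : \F^k \to [R]$, and accept iff $Af \in \cH$ (for the finitely many $n < k$ it can simply read all of $f$). This makes $q(\eps) = p^k$ queries and, by hypothesis, accepts every $f \in \cP$ with probability $>2/3$ and rejects every $f$ that is $\eps$-far with probability $>2/3$, which is exactly \cref{testable}.

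For ``only if'', fix $\eps>0$ and let $T$ be a tester for $\cP$ with this parameter. I would first apply two standard reductions, each multiplying the number of queries only by a constant: amplify the success probability of $T$ from $2/3$ to $1-\delta$ for a small absolute constant $\delta$ (say $\delta = 1/10$) by independent repetition and majority vote, and make $T$ non-adaptive by reading in advance every point that can be reached along any branch of its decision tree. Thus we may assume $T$ draws a $q$-tuple $\mathbf{x} = (x_1,\dots,x_q) \in (\F^n)^q$ from a distribution $\cD$, reads $(f(x_1),\dots,f(x_q))$, and accepts with a probability $\mathrm{acc}(f,\mathbf{x}) \in [0,1]$ depending only on that vector, with $\E_{\mathbf{x}\sim\cD}[\mathrm{acc}(f,\mathbf{x})] \ge 1-\delta$ when $f \in \cP$ and $\le \delta$ when $f$ is $\eps$-far from $\cP$.

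Next I would symmetrize $\cD$ using affine invariance. Since $\cP$ is affine-invariant and the distance to $\cP$ is preserved under affine bijections of $\F^n$, the modified tester that first applies a uniformly random affine bijection $\sigma$ of $\F^n$ and then runs $T$ on $\sigma(x_1),\dots,\sigma(x_q)$ is still a valid tester with the same guarantees, and its query distribution $\cD'$ is invariant under the action of the affine group on $q$-tuples. The orbits of that action are the finitely many \emph{affine types} of $q$-tuples, and a tuple of any fixed type spans an affine subspace of dimension at most $k := q-1$. The key computation is that applying a uniformly random affine embedding $A : \F^k \to \F^n$ to a fixed tuple of type $\tau$ (viewed inside $\F^k$) produces a uniformly random tuple of type $\tau$ in $\F^n$; hence a draw from $\cD'$ can be generated by picking such an $A$ and then choosing the $q$ query points inside the image of $A$ according to a suitable (type-mixture) distribution on $(\F^k)^q$. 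Consequently the entire tester is equivalent to: pick a uniform affine embedding $A : \F^k \to \F^n$, read $Af : \F^k \to [R]$, and accept with some probability $\alpha(Af) \in [0,1]$ depending only on $Af$, where $\E_A[\alpha(Af)] \ge 1-\delta$ if $f \in \cP$ and $\E_A[\alpha(Af)] \le \delta$ if $f$ is $\eps$-far.

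Finally I would discard the internal randomness by setting $\cH := \{ h : \F^k \to [R] \mid \alpha(h) \ge 1/2 \}$. Markov's inequality applied to $1-\alpha(Af)$ gives $\Pr_A[Af \notin \cH] \le 2\delta < 1/3$ when $f \in \cP$, and applied to $\alpha(Af)$ gives $\Pr_A[Af \in \cH] \le 2\delta < 1/3$ when $f$ is $\eps$-far, which yields the required $k$ and $\cH$. I expect the symmetrization plus type-decomposition step to be the main obstacle: one must set up ``uniformly random affine embedding'' carefully and verify that pushing a fixed tuple of a given affine type through a random embedding of $\F^k$ reproduces the uniform distribution on the corresponding affine orbit, so that the symmetric query distribution genuinely becomes a mixture of restrictions to a single constant-dimensional affine subspace.
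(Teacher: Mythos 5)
Your proposal is essentially correct, and it is worth noting that the paper offers no proof of \cref{prop:testing} at all (it is asserted as ``not difficult to see''), so what you have written is precisely the folklore argument the authors are invoking: the Goldreich--Trevisan-style canonicalization (amplify, fix coins to remove adaptivity, symmetrize over the affine group, decompose the symmetric query distribution into orbits/affine types, realize each orbit as the image of a fixed representative tuple in $\F^k$ under a uniformly random affine embedding, and threshold the resulting acceptance function $\alpha$). Your key computation is sound: for a fixed tuple $t$ in $\F^k$, a uniformly random injective affine $A:\F^k\to\F^n$ restricts to a uniformly random injective affine map on the affine span of $t$ (the number of injective extensions depends only on dimensions), so $At$ is uniform on the set of tuples in $\F^n$ with the same affine-dependency type; together with the independence of $A$ from the type/representative drawn, this legitimately rewrites the symmetrized tester as ``pick $A$, accept with probability $\alpha(Af)$'', and the Markov/threshold step then gives $\cH$ with the stated $2/3$ bounds.

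The one caveat you should make explicit is that the acceptance function $\alpha$ you extract, and hence $\cH$, may depend on $n$ (the original tester's query distribution and decision rule may vary with $n$), whereas the proposition as literally written quantifies $\cH$ before $n$. This is not a defect you can repair: with the paper's definition of affine invariance (closure under affine maps $\F^n\to\F^n$ only, with no link between different $n$), the $n$-uniform statement is simply false for degenerate properties --- e.g.\ the property consisting of the constant-$0$ function in even dimensions and the constant-$1$ function in odd dimensions is testable, yet the all-zero function on $\F^k$ would have to lie both inside and outside a single $\cH$. So the correct reading of \cref{prop:testing} allows $\cH$ to depend on $n$ (with only $k$ a constant), which is exactly what your construction delivers and is all that the paper ever uses: in \cref{cor:testing2} the functions $f,g$ live on the same $\F^n$, and in \cref{lem:rankrestrict} only the per-$n$ conclusion with $n$-independent dimension $m$ is needed. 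With that reading made explicit, your proof is complete; the remaining items you flagged (handling $n<k$, the bookkeeping that the decision conditioned on $A$ depends only on $Af$) are routine and handled correctly in your sketch.
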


\subsection{Our contribution}

For a property $\cP$ and a positive real $\delta$, let $\cP_\delta$ denote the set of all functions that are $\delta$-close to the property. Our main result is the following theorem.

\begin{theorem}\label{thm:main}
For every testable affine-invariant property $\cP$ and every $\delta>0$, the property $\cP_\delta$ is testable.
\end{theorem}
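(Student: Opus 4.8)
The plan is to use the characterization of testability via restrictions to affine subspaces (\cref{prop:testing}), and show that the distance of $f$ from $\cP$ can be approximated by the distance of a random restriction $Af:\F^k\to[R]$ from $\cP$, for $k$ a sufficiently large constant depending on $\delta$ (and on the testability parameters of $\cP$). Concretely, the test for $\cP_\delta$ with proximity parameter $\eps$ will be: pick a random affine embedding $A:\F^k\to\F^n$, compute $\mathrm{dist}(Af,\cP)$ exactly (this is a constant-size computation since $k$ is constant), and accept iff it is at most roughly $\delta + \eps/2$. Completeness is the statement that if $f$ is $\delta$-close to $\cP$ then a random restriction is, with high probability, not much more than $\delta$-far; soundness is the statement that if $f$ is $(\delta+\eps)$-far from $\cP$ then a random restriction is, with high probability, more than $(\delta+\eps/2)$-far.

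The easy direction is completeness-type monotonicity: if $f$ is $\delta$-close to $\cP$, witnessed by $g\in\cP$ with $\Pr_x[f(x)\neq g(x)]\le\delta$, then for any affine embedding $A$ we have $\mathrm{dist}(Af,Ag)=\Pr_y[f(Ay)\neq g(Ay)]$, and over a random $A$ this has expectation $\le\delta$; since $Ag\in\cP$ by affine invariance, $\E_A[\mathrm{dist}(Af,\cP)]\le\delta$, and a Markov/concentration argument bounds the probability that $\mathrm{dist}(Af,\cP)$ exceeds $\delta+\eps/2$. The hard direction is the reverse: I must show that a random small restriction does not \emph{underestimate} the distance by more than $\eps/2$, i.e.\ if every restriction to a $k$-dimensional subspace is (typically) $(\delta+\eps/2)$-close to $\cP$, then $f$ itself is $(\delta+\eps)$-close to $\cP$. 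This is where the Fischer--Newman strategy enters: one wants to ``stitch together'' the local corrections on overlapping subspaces into a single global function $g\in\cP$ that is close to $f$. The obstacle is that local corrections on different affine subspaces need not be consistent on their intersections, so naive stitching fails.

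The resolution, following Fischer--Newman adapted to the affine setting, is to first pass to a \emph{structured} approximation of $f$: using higher-order Fourier analysis (the regularity / decomposition machinery from Bhattacharyya et al.), decompose $f$ as a function of a bounded-complexity collection of polynomials of bounded degree, up to a small $L^1$ error and with tiny unstructured (Gowers-uniform) part. On such a structured $f$, the distance to $\cP$ — and the distance of restrictions to $\cP$ — is (approximately) a function only of the \emph{atom distribution} (the joint distribution of the values of the defining polynomials together with the value of $f$). The key technical step is an \emph{equidistribution / sampling} statement: for a bounded-complexity polynomial factor, a random affine subspace of sufficiently large constant dimension sees essentially the same atom distribution as $\F^n$ (this is where tools from \cite{BFHHL13} — near-orthogonality and the inverse theorem for Gowers norms, plus equidistribution of regular polynomial factors on subspaces — do the work). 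Hence $\mathrm{dist}(Af,\cP)\approx\mathrm{dist}(f,\cP)$ with high probability over $A$, in \emph{both} directions, and the test is correct. I also need to handle the dependence of $k$: it must be chosen after the structure theorem fixes the complexity of the decomposition, which in turn depends on the error parameters derived from $\eps$ and from the query complexity $q(\cdot)$ of the given tester for $\cP$; tracking this quantifier order carefully is the main bookkeeping burden.

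The step I expect to be the genuine obstacle is the equidistribution-on-subspaces lemma for the combined object ``$f$ together with its polynomial factor'': one needs that restricting a high-rank polynomial factor to a random affine subspace of bounded dimension keeps it high-rank (so that its restricted atoms are still equidistributed) \emph{and} that the restricted $f$ still aligns with these atoms in the same way. The first part is a known property of regular factors; the second requires controlling the unstructured part of $f$ under restriction, which is exactly the kind of statement that the Gowers-norm machinery of Bhattacharyya et al.\ provides, but assembling it into the precise quantitative form needed here — uniformly over all $\cP$-correcting functions $g$ that could witness the distance — is the delicate part. Once that lemma is in hand, the two-sided estimate and hence \cref{thm:main} follow by the argument sketched above.
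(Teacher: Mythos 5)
Your test and your completeness argument are exactly the paper's, but the soundness argument has a genuine gap: the pivotal assertion that ``on a structured $f$, the distance to $\cP$ --- and the distance of restrictions to $\cP$ --- is approximately a function only of the atom distribution'' is stated, not proved, and proving it (uniformly in the ambient dimension, so that the same function of $(\Gamma,\mathrm{degrees},\mathrm{depths})$ governs both $\F^n$ and $\F^m$) is essentially the entire content of the theorem. Note that $\dist(f,\cP)$ is a property of the $\{0,1\}$-valued function $f$, not of its $[0,1]$-valued projection $\E[f|\cB]$ onto a regular factor: the nearest member of $\cP$ may correlate with the pseudorandom part of $f$ inside atoms, so no piece of the regularity machinery directly says that this distance is determined by the atom statistics. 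Your proposal also never uses the object that makes soundness provable, namely a hypothesized function $h\in\cP$ on $\F^m$ that is $(\delta+\eps/2)$-close to $Af$; without it there is nothing to transport back to $\F^n$, and the two-sided estimate $\dist(Af,\cP)\approx\dist(f,\cP)$ is simply a restatement of what must be shown.

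The paper does not establish your principle; it argues by contradiction as follows. Fixing a good embedding $A$ (where \cref{claim:1} and \cref{lem:rankrestrict} guarantee the restricted factor stays regular with matching degrees and depths --- this is the step you single out as the main obstacle, and it is the comparatively routine part) and a corrector $h\in\cP$ with $\dist(Af,h)\le\delta+\eps/2$, it decomposes $h$ over a syntactic refinement $\tilde{\cB}_1$ of $\tilde{\cB}_0$, pulls the new polynomials back via $P_i=A'Q_i$ so that $h_1=\E[h|\tilde{\cB}_1]$ has a pullback $\phi=\cT^{-1}h_1$ measurable with respect to a high-rank factor $\cB_1$ on $\F^n$, and then corrects $f$ \emph{inside each atom} to obtain $\psi$ with $\E[\psi|\cB_1]=\phi$ and, crucially, $\|f-\psi\|_1=\|\E[f|\cB_1]-\phi\|_1\le\delta+\eps/2+O(\gamma)$ (\cref{claim:3} and \cref{claim:4}); this identity is exactly what converts closeness of atom averages into closeness of actual $\{0,1\}$-valued functions, the point your atom-distribution principle glosses over. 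Finally, the rounded function $g$ is certified to be $(\eps/8)$-close to $\cP$ not by any ``distance is a function of atom data'' statement but by the distributional characterization of testability: $\mu_{g,k}\approx\mu_{\phi,k}\approx\mu_{h_1,k}\approx\mu_{h,k}$ via \cref{lemma:close_sampling_gowers}, \cref{lemma:mu_regular}, and \cref{cor:testing3}/\cref{cor:close_to_cP_gowers}, which contradicts $f$ being $(\delta+\eps)$-far. These steps --- exploiting the local corrector $h$, transporting it through the shared factor, the within-atom correction, and the certification of closeness to $\cP$ via restriction distributions --- are the real crux and are absent from your outline; filling them in would amount to reproducing the paper's argument rather than following a shorter route.
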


\cref{thm:main} says that for every $\eps,\delta>0$ one can probabilistically distinguish between functions that are $\delta$-close to the property and the functions that are $(\delta+\eps)$-far from the property using only a constant number of queries. In fact the test is very natural. We show that there exists a constant $k_{\eps,\delta,\cP}$ such that for a random affine embedding $A:\F^k \to \F^n$, with probability at least $2/3$, $\dist(Af, \cP)$ provides a  sufficiently accurate estimate of $\dist(f,\cP)$. Hence our test will be the following: Pick a random affine embedding $A:\F^k \to \F^n$. If $\dist(Af, \cP)< \delta+\frac{\eps}{2}$ accept, otherwise reject.
This corresponds to taking $\cH=\left\{h: \F^k \to [R] \ | \ \dist(h,\cP) \le \delta+\frac{\eps}{2}\right\}$ in \cref{prop:testing}.

We note that previously it was unknown if one can test distance to even simple properties, such as cubic polynomials over $\F_2$. The reason was that one specific natural test (the Gowers norm, or derivatives test) was shown not to perform well for such properties. Our work shows that a natural test indeed works, albeit the number of queries have to grow as a function of $\eps$. We do not know if this is necessary for simple properties, such as cubic polynomials over $\F_2$, and leave this as an open problem.

On a technical level, our work combines two technologies developed in previous works. The first is the work of Fischer and Newman~\cite{MR2318716} which obtained similar results for graph properties. The second is higher order Fourier analysis, in particular a recent strong equidistribution  theorem established in Bhattacharyya et al.~\cite{BFHHL13}. From a high level, the approach for the graph case and the affine-invariant case are similar. One applies a regularization process, which allows to represent a graph (or a function) by a small structure. Then, one argues that a large enough random sample of the graph or function should have a similar small structure representing it. Hence, properties of the main object can be approximated by properties of a large enough sample of it. Fischer and Newman~\cite{MR2318716} implemented this idea in the graph case. We follow a similar approach in the algebraic case, which inevitably introduces some new challenges. One may see this result as an outcome of the large body of work on higher-order Fourier analysis developed in recent years. Once the machinery was developed, we can now apply it in various frameworks which were not accessible previously.

\subsection{Proof overview}
Let $R=2$ for the simplicity of exposition, e.g. we consider functions $f:\F^n \to \{0,1\}$. Let $\cP$ be an affine invariant property of functions $\{\F^n \to \{0,1\}: n \in \N\}$ which is locally testable, and fix $\eps,\delta>0$. We want to show that there exists an $m$ (which depends only on $\cP,\eps,\delta$) such that the following holds. Let $f:\F^n \to \{0,1\}$ be a function, and let $\tilde{f}$ be the restriction of the function to a random $m$-dimensional affine subspace of $\F^n$. Then
\begin{itemize}
\item Completeness: If $f$ is $\delta$-close to $\cP$ then, with high probability, $\tilde{f}$ is $(\delta+\eps/2)$-close to $\cP$.
\item Soundness: If $f$ is $(\delta+\eps)$-far from $\cP$ then, with high probability, $\tilde{f}$ is $(\delta+\eps/2)$-far from $\cP$.
\end{itemize}
Once we show that we are done, as the local test computes the distance of $\tilde{f}$ from $\cP$. If it is below $\delta+\eps/2$ we declare that $f$ is $\delta$-close to $\cP$; otherwise we declare it is $(\delta+\eps)$-far from $\cP$. The test correctness follows immediately from the completeness and soundness. We next argue why these hold.

Let us first fix notations. Let $A:\F^m \to \F^n$ be a random full rank affine transformation. Then, a restriction of $f$ to a random $m$-dimensional affine subspace can be equivalently described by $\tilde{f}=Af$. The proof of the completeness is simple. If $f$ is $\delta$-close to a function $g:\F^n \to \{0,1\}$ which is in $\cP$, then with high probability over a random restriction, the distance of $Af$ and $Ag$ is also at most $\delta+o_m(1)$. This is true because a random affine subspace is pairwise independent with regards to whether an element is contained in it. This, combined with Chebyshev's inequality implies the result. Then, by choosing $m$ large enough we get the error term down to $\eps/2$.

The main work (as in nearly all works in property testing) is to establish soundness. That is, we wish to show that if a function $f$ is far from $\cP$ then, with high probability, a random restriction of it is also from from the property. The main idea is to show that if for a typical restriction $Af$ is $\delta$-close to a function $h:\F^m \to \{0,1\}$ which is in $\cP$, then $h$ can be ``pulled back'' to a function $g:\F^n \to \{0,1\}$ which is both roughly $\delta$-close to $f$ and also very close to $\cP$. This will contradict our initial assumption that $f$ is $(\delta+\eps)$-far from $\cP$. In order to do so we apply the machinery of higher order Fourier analysis. The first description will hide various ``cheats'' but will present the correct general outline. We then note which steps need to be fixed to make this argument actually work.

First, we apply the assumption that $\cP$ is locally testable to derive there exist a constant dimension $k=k(\cP,\eps)$ so that a random restriction to a $k$-dimensional subspace can distinguish functions in $\cP$ from functions which are $\eps/4$-far from $\cP$. We want to decompose $f$ to ``structured'' parts which we will study, and ``pseudo-random'' parts which do not affect the distribution of restrictions to $k$-dimensional subspaces. In order to do so, for a function $f:\F^n \to \{0,1\}$ define by $\mu_{f,k}$ the distribution of its restriction to $k$-dimensional subspaces. That is, for $v:\F^k \to \{0,1\}$ let
$$
\mu_{f,k}[v] = \Pr_A[Af=v].
$$
We need to slightly generalize this definition to functions where the output $f(x)$ can be random. In our context, a randomized function is a function $f:\F^n \to [0,1]$, which describes a distribution over functions $f':\F^n \to \{0,1\}$, where for all $x$ independently $\Pr[f'(x)=1]=f(x)$. We extend the definition of $\mu_{f,k}$ to randomized functions by $\mu_{f,k}[v] = \E_{A,f'} \mu_{f',k}[v]$. By our definition, if two functions $f,g:\F^n \to [0,1]$ have distributions $\mu_{f,k}$ and $\mu_{g,k}$ close in statistical distance, then random restrictions to $k$-dimensional affine subspaces cannot distinguish $f$ from $g$. This will be useful in the analysis of the soundness.

We next decompose our function $f$ based on the above intuition. The formal notion of pseud-randomness we use is that of Gowers uniformity. Informally, the $d$-th Gowers uniformity measures correlation with polynomials of degree less than $d$. However, it turns to capture much more than that. For example, one can show that by choosing $d$ large enough ($d=p^k$ suffices) then for any functions $f,g:\F^n \to [0,1]$, if $\|f-g\|_{U^d}$ is small enough then $\mu_{f,k}$ and $\mu_{g,k}$ are close in statistical distance. Thus, it makes sense to approximate $f$ as
$$
f=f_1 + f_2
$$
where $f_1$ is structured (to be explained soon) and $\|f_2\|_{U^d}$ is small enough. This will allow us to replace $f$ with $f_1$ for the purposes of analyzing its restrictions to $k$-dimensional subspaces. The structure of $f_1$ is as follows: it is a function of a constant number $C=C(\cP,\eps)$ of polynomials of degree less than $d$. That is,
$$
f_1(x)=\Gamma(P_1(x),\ldots,P_C(x)),
$$
where $P_1,\ldots,P_C$ are polynomials and $\Gamma:\F^C \to \{0,1\}$ is some function (not necessarily a low degree polynomial). The benefit of this decomposition is that $f_1$ is ``dimension-less'' in the sense that $\Gamma$ does not depend on $n$; however, the polynomials $P_1,\ldots,P_C$ do depend on $n$. One can however ``regularize'' these polynomials in order to obtain ``random-looking'' (or high rank) polynomials. It can be shown that all properties of high rank polynomials are governed just by their degree (which is at most $d$), hence essentially the entire description of $f_1$ does not depend on $n$.

The next step is to show that the same type of decomposition can be applied to the restriction $Af$ of $f$. Clearly, $Af=Af_1 + Af_2$. We show that with high probability over the choice of $A$,
\begin{itemize}
\item $Af_1=\Gamma(Q_1(x),\ldots,Q_C(x))$ where $Q_i=AP_i$ are the restrictions of $P_1,\ldots,P_C$; and
$Q_1,\ldots,Q_C$ are still of ``high enough rank'' to behave like random polynomials.
\item $\|Af_2\|_{U^d} \approx \|f_2\|_{U^d}$ so we can still approximate $Af \approx Af_1$ with respect to the distribution of their restrictions to random $k$-dimensional subspaces.
\end{itemize}

We next apply the same decomposition process to $h$, which we recall is the assumed function (in $m$ variables) which is $(\delta+\eps/2)$-close to $Af$. By choosing the conditions of regularity of $h$ slightly weaker than those of $f$ (but still strong enough), we get that we can decompose
$$
h = h_1 + h_2
$$
where
$$
h_1(x)=\Gamma'(Q_1(x),\ldots,Q_{C'}(x))
$$
for some $C'>C$ and $\|h_2\|_{U^d}$ is very small. The important aspect here is that, we can approximate $h$ by the structured function $h_1$, and moreover that the polynomials $Q_1,\ldots,Q_C$ which compose $Af_1$ are part of  the description of $h_1$. That is, both $Af_1$ and $h_1$ can be defined in terms of the same basic building blocks (high rank polynomials $Q_1,\ldots,Q_C$).

The next step is to ``pull back'' $h$ to a function defined on $\F^n$. An easy first step is to pull back $h_1$. We need to define for $C<i \le C'$ pullback polynomials $P_i:\F^n \to \{0,1\}$ of $Q_i:\F^n \to \{0,1\}$ such that both $Q_i = A P_i$; and such that $P_1,\ldots,P_{C'}$ are of high rank. This can be done for example by letting $P_i = D Q_i$ for any affine map $D:\F^n \to \F^m$ for which $AD$ is the identity map on $\F^m$. This provides a pull-back $\phi$ of the ``coarse'' description of $f_1$ of $h_1$, but does not in general generate a function close to $f$ (it makes sense, since we still haven't used the finer ``pseudo-random'' structure of $f$). Formally, we set $\phi(x)=\Gamma'(P_1(x),\ldots,P_{C'}(x))$. However, we can already show something about $\phi$: it is very close to $\cP$. More concretely, its distribution over restrictions to $d$-dimensional subspaces is very close to that of $h$. Hence, the tester which distinguishes function in $\cP$ from those $(\eps/4)$-far from $\cP$ cannot distinguish $\phi$ from functions in $\cP$, hence $\phi$ must be $(\eps/4)$-close to $\cP$.

The next step is to define a more refined pull-back of $f$. Define an atom as a subset $\{x \in \F^n:P_1(x)=a_1,\ldots,P_{C'}(x)=a_{C'}\}$ for values $a_1,\ldots,a_{C'} \in \F$. Note that the functions $f_1,h_1$ are constant over atoms. We next define $\psi:\F^n \to [0,1]$ by redefining $\phi$ inside each atom, so that the average over the atoms of $\phi,\psi$ is the same, but such that $\psi$ is as close as possible to $f$ given this constraint. For example, if in an atom the average of $f$ is higher than the value $\phi$ assigns to this atom (and so it needs to be reduced to match $\phi$), we set for all $x$ in this atom $\psi(x)=0$ if $\phi(x)=0$ and $\psi(x)=\alpha$ if $f(x)=1$, where $\alpha$ is appropriately chosen so that the averages match. We then show that $\psi$ is a proper pull-back of $h$ in the sense that
\begin{itemize}
\item The distance between $f,\psi$ is very close to the distance between $Af,h$, which we recall is at most $\delta+\eps/2$.
\item $\psi$ is nearly $\eps+4$ close to $\cP$ in the distributional sense.
\end{itemize}
To finalize, we show that sampling a function $g:\F^n \to \{0,1\}$ based on $\psi$ has the same properties, which shows that $f$ is not $(\delta+\eps)$-far from $\cP$.

Let us remark on a few technical points overlooked in the above description. First, there are the exact notions of ``high rank polynomials''. It turns that in order to make this entire argument work, one needs to consider
more general objects, called non-classical polynomials. We rely on a series of results on the distributional properties of high-rank non-classical polynomials, in particular these recently established in~\cite{BFHHL13}. Also, the decomposition theorems are actually to three parts,
$$
f=f_1+f_2+f_3,
$$
where $f_1$ is structured as before, $\|f_2\|_2$ is somewhat small (but not very small) and $\|f_3\|_{U^d}$ is very small. This requires a somewhat more refined analysis to make the argument work, but does not create any significant change in the proof outline as described above.

\subsection{Comparison with graph property testing}

The main outline of our proof follows closely that of Fischer and Newman~\cite{MR2318716}. They study graph properties, where decompositions are given by the Szemer\'{e}di regularity lemma. Their test, in the notation above, can be described as measuring the distance between $\Gamma$ and all potential $\Gamma'$ which can be achieved from graphs that have the property. Our argument (when applied to graph properties instead of affine invariant properties)  shows that a much more natural test achieves the same behaviour: choose a random small subgraph and measure its distance from the property. In quantitative terms it is hard to compare the two results, as both get outrageous bounds coming from the bounds in the regularity lemma. So, we view this part of our work as having contribution in the simplicity of the test, and not in terms of the simplicity of the proof or the quantitative bounds (which are both very similar).

The more challenging aspect of our work is to take this approach and carry it out in the affine invariant settings. The main reason is that in the affine invariant setup the structural parts have more structure in them than in the graph setting. In the graph setup, the structure of a graph can be represented by a constant size graph with weighted edges. In the affine invariant case, the structured part is a constant size function applied to polynomials. However there will be \emph{no constant bound} on  the number of variables, and they can grow as $n$ grows. So, at first glance, these ``compact descriptions'' have sizes which grow with the input size; this is very different from the graph case. The reason these compact descriptions are useful is because, as long as the polynomials participating in them are ``random enough'', then their exact definitions do  not matter, just a few simple properties of them (their degree, and ``depth'' for non-classical polynomials). This is fueled by the recent advances on higher-order Fourier analysis. In essence, the state of the art has reached a stage where these tools are powerful enough to simulate the counterpart arguments which were initially developed in the context of graph properties. Still, the affine invariant case is more complex, and there are some problems which we do not know yet how to handle. For example,
\begin{itemize}
\item A complete classification of one-sided testable properties (e.g. can properties of ``infinite complexity'' be locally testable?); See~\cite{BFHHL13}
\item A complete classification of two-sided testable properties.
\item Properties where any non-trivial distance from them can be witnessed by a constant number of queries (also called correlation testing~\cite{HL11}). For example, can one
test correlation to cubics over $\F_2$ using a constant number of queries?
\end{itemize}

\section{Background}

We need to recall some definitions and results about  higher order Fourier analysis. Most of the material in this section is directly quoted from the full version of~\cite{BFHHL13}.

\paragraph{Notation}
We shorthand $\F=\F_p$ for a prime finite field. For $f:\F^n \to \C$ we denote $\|f\|_1 = \E[|f(x)|]$, $\|f\|_2^2 = \E[|f(x)|^2]$ where $x \in \F^n$ is chosen uniformly and $\|f\|_{\infty} = \max |f(x)|$. Note that $\|f\|_1 \le \|f\|_2 \le \|f\|_{\infty}$. The expression $o_{m}(1)$ denotes quantities which approach zero as $m$ grows. We shorthand $x \pm \eps$ for any quantity in $[x-\eps, x+\eps]$.

\subsection{Uniformity norms and non-classical polynomials}

\begin{definition}[Multiplicative Derivative]\label{multderiv}
Given a function $f: \F^n \to \C$ and an element $h \in \F^n$, define
the {\em multiplicative derivative in direction $h$} of $f$ to be the
function $\Delta_hf: \F^n \to \C$ satisfying $\Delta_hf(x) =
f(x+h)\overline{f(x)}$ for all $x \in \F^n$.
\end{definition}

The \emph{Gowers norm} of order $d$ for a function $f: \F^n \to \C$ is the
expected multiplicative derivative of $f$ in $d$ random directions at
a random point.

\begin{definition}[Gowers norm]\label{gowers}
Given a function $f: \F^n \to \C$ and an integer $d \geq 1$, the {\em
  Gowers norm of order $d$}  for $f$ is given by
$$\|f\|_{U^d} = \left|\E_{y_1,\dots,y_d,x \in \F^n} \left[(\Delta_{y_1}
\Delta_{y_2} \cdots \Delta_{y_d}f)(x)\right]\right|^{1/2^d}.$$
\end{definition}
Note that as $\|f\|_{U^1}= |\Ex{f}|$ the Gowers norm of order $1$ is only a semi-norm. However for $d>1$, it is not difficult to show that
$\|\cdot\|_{U^d}$ is indeed a norm.

If $f = e^{2\pi i P/p}$ where $P: \F^n \to \F$ is a polynomial
of degree $< d$, then $\|f\|_{U^d} = 1$. If $d < p$ and $\|f\|_\infty \le 1$, then in fact, the
converse holds, meaning that any  function $f: \F^n \to \C$ satisfying
$\|f\|_\infty \le 1$ and $\|f\|_{U^d} = 1$ is of this form. But when $d \geq p$, the converse
is no longer true. In order to characterize functions $f : \F^n \to
 \C$ with $\|f\|_\infty \le 1$ and $\|f\|_{U^d}=1$, we define the notion of {\em
  non-classical polynomials}.

Non-classical polynomials might not be necessarily $\F$-valued. We need to
introduce some notation.
Let $\T$ denote the circle group $\R/\Z$. This is an abelian group
with group operation denoted $+$. For an integer $k \geq 0$, let $\U_k$
denote $\frac{1}{p^k} \Z/\Z$, a subgroup of $\T$.
Let $\iota: \F \to \U_1$ be the injection $x \mapsto \frac{|x|}{p} \mod 1$, where $|x|$ is the standard map from $\F$ to
$\set{0,1,\dots,p-1}$. Let $\msf{e}: \T \to \C$ denote the character
$\expo{x} = e^{2\pi i x}$.
\begin{definition}[Additive Derivative]\label{addderiv}
Given a function\footnote{We try to adhere to the following convention: upper-case letters (e.g. $F$ and
  $P$) to denote functions mapping from $\F^n$ to $\T$ or to $\F$,
  lower-case   letters (e.g. $f$ and $g$) to denote functions mapping
  from $\F^n$ to $\C$, and upper-case Greek letters (e.g. $\Gamma$ and
$\Sigma$) to denote functions mapping  $\T^C$ to $\T$. By  abuse of notation, we sometimes conflate $\F$ and $\iota(\F)$.} $P:
\F^n \to \T$ and an element $h \in \F^n$, define
the {\em additive derivative in direction $h$} of $f$ to be the
function $D_hP: \F^n \to \T$ satisfying $D_hP(x) = P(x+h) - P(x)$
for all $x \in \F^n$.
\end{definition}
\begin{definition}[Non-classical polynomials]\label{poly}
For an integer $d \geq 0$, a function $P: \F^n \to \T$ is said to be a
{\em non-classical polynomial of degree $\leq d$} (or simply a
{\em polynomial of degree $\leq d$}) if for all $y_1,
\dots, y_{d+1}, x \in \F^n$, it holds that
\begin{equation}\label{eqn:poly}
(D_{y_1}\cdots D_{y_{d+1}} P)(x) = 0.
\end{equation}
The {\em degree} of $P$ is the smallest $d$ for which the above holds.
A function $P : \F^n \to \T$ is said to be a {\em classical polynomial of degree
$\leq d$} if it is a non-classical polynomial of degree $\leq d$
whose image is contained in $\iota(\F)$.
\end{definition}

It is a direct consequence that a function $f :
\F^n \to \C$ with $\|f\|_\infty \leq 1$ satisfies $\|f\|_{U^{d+1}} =
1$ if and only if $f = \expo{P}$ for a
(non-classical) polynomial $P: \F^n \to \T$ of degree $\leq d$.

\begin{lemma}[Lemma 1.7 in \cite{TZ11}]\label{struct}
A function $P: \F^n \to \T$ is a polynomial of degree $\leq d$ if and
only if $P$ can be represented as
$$P(x_1,\dots,x_n) = \alpha + \sum_{0\leq d_1,\dots,d_n< p; k \geq 0:
  \atop {0 < \sum_i d_i \leq d - k(p-1)}} \frac{ c_{d_1,\dots, d_n,
  k} |x_1|^{d_1}\cdots |x_n|^{d_n}}{p^{k+1}} \mod 1,
$$
for a unique choice of $c_{d_1,\dots,d_n,k} \in \set{0,1,\dots,p-1}$
and $\alpha \in \T$.  The element $\alpha$ is called the {\em
  shift} of $P$, and the largest integer $k$ such that there
exist $d_1,\dots,d_n$ for which $c_{d_1,\dots,d_n,k} \neq 0$ is called
the {\em depth} of $P$. Classical polynomials correspond to
polynomials with $0$ shift and $0$ depth.
\end{lemma}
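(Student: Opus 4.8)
The plan is to prove both implications together with uniqueness. For the ``if'' direction --- that every function of the displayed form is a non-classical polynomial of degree $\le d$ --- the key computation is the action of the additive derivative $D_h$ on a single monomial $M_{\vec d,k}(x) := \frac{|x_1|^{d_1}\cdots|x_n|^{d_n}}{p^{k+1}}$. Writing, coordinate by coordinate, $|x_i+h_i| = |x_i| + |h_i| - p\,\chi_i(x_i,h_i)$ with $\chi_i \in \{0,1\}$ the carry indicator $\mathbf{1}[\,|x_i|+|h_i|\ge p\,]$, and expanding the product $\prod_i |x_i+h_i|^{d_i}$ binomially, one checks that $D_h M_{\vec d,k}$ is a $\T$-linear combination --- with coefficients that are integer (hence classical) polynomials in the $h_i$ --- of monomials $M_{\vec d',k'}$ whose ``weight'' $w(\vec d',k') := \sum_i d_i' + k'(p-1)$, counting only the degrees in the differentiated variables, is \emph{strictly smaller} than $w(\vec d,k)$: a no-carry cross-term lowers some $d_i$ by at least one in the style of ordinary differentiation over $\Z$, while each use of a carry contributes a factor of $p$ that cancels one power from the denominator (so $k' < k$, a saving of $p-1$ in $w$) at the cost of raising a variable-degree by at most $p-1$; a term using more than $k$ carries becomes integer valued and vanishes modulo $1$. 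Since the displayed constraint $\sum_i d_i \le d - k(p-1)$ is exactly $w(\vec d,k) \le d$, iterating this shows that $D_{y_1}\cdots D_{y_{d+1}}$ annihilates any such combination (plus a constant), so it is a non-classical polynomial of degree $\le d$.

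For the ``only if'' direction we induct on $d$. The base layer is the classical fact that any function $\F^n \to \iota(\F)$ whose $(d+1)$-fold additive derivative vanishes is an $\F$-linear combination of the reduced monomials $\{\prod_i |x_i|^{d_i} : 0 \le d_i < p,\ \sum_i d_i \le d\}$, with the uniqueness of that representation, together with the seed lemma that a non-classical polynomial of degree $< p$ is classical up to an additive shift $\alpha = P(0) \in \T$ --- here the weight constraint already forbids $k\ge 1$, and one proves $\iota(\F)$-valuedness directly by induction on the degree, reconstructing $P - P(0)$ from its coordinate derivatives and observing that the non-classical part is forced into $\frac1p\Z/\Z$. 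For the inductive step with $d \ge p$, apply the seed lemma to $D_{y_1}\cdots D_{y_{d-p+1}}P$, which has degree $\le p-1$ and is therefore classical up to a shift, so $p$ times it is a constant; hence $D_{y_1}\cdots D_{y_{d-p+2}}(pP) \equiv 0$ and $pP$ has degree $\le d-(p-1)$. By the induction hypothesis $pP = p\alpha + \sum_{w(\vec d,k)\le d-(p-1)} c'_{\vec d,k} M_{\vec d,k}$; set $\widetilde P := \alpha + \sum c'_{\vec d,k} M_{\vec d,k+1}$, so that $p\widetilde P = pP$ (using $p\,M_{\vec d,k+1} = M_{\vec d,k}$) and every monomial of $\widetilde P$ has weight $\le d-(p-1) + (p-1) = d$. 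Then $P - \widetilde P$ is $\iota(\F)$-valued of degree $\le d$, so the classical base layer expresses it in reduced monomials of total degree $\le d$; adding, collecting coefficients into $\{0,\dots,p-1\}$ by carrying (which only \emph{lowers} depth, hence stays within $w \le d$), and recording $\alpha = P(0)$, yields exactly the stated representation.

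For uniqueness, suppose $\alpha + \sum_{w(\vec d,k)\le d} c_{\vec d,k} M_{\vec d,k} \equiv 0$. Evaluating at $x=0$ gives $\alpha = 0$. If $K$ is the largest depth with some $c_{\vec d,K}\ne 0$, multiply the identity by $p^K$: every monomial of depth $< K$ becomes integer valued and vanishes modulo $1$, leaving $\sum_{\vec d} c_{\vec d,K}\prod_i |x_i|^{d_i} \equiv 0$ as a function $\F^n \to \F$; uniqueness of the reduced-monomial representation over $\F_p$ forces all $c_{\vec d,K}=0$, a contradiction, and one concludes by downward induction on $K$. The statement about depth then follows because the weight constraint caps $k$ at $\lfloor (d-1)/(p-1)\rfloor$, and the largest such $k$ with a nonzero coefficient is the depth by definition; classical polynomials, having shift $0$ and only $k=0$ terms, are exactly those of depth $0$ and shift $0$.

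The step I expect to be the main obstacle is the weight bookkeeping in the first paragraph: making rigorous, across all $n$ coordinates and all binomial cross-terms simultaneously, that $D_h$ \emph{strictly} decreases $w$ with the exact accounting ($-1$ per degree drop in a no-carry term, and a net saving of $p-1$ against a rise of at most $p-1$ per carry), and in particular that one must weight only the differentiated variables while treating the direction variables as inert coefficients. A secondary point requiring care is avoiding circularity in the second paragraph: the seed lemma (degree $< p$ implies classical up to a shift, and $p$ times such a polynomial is constant) must be proved without reference to the representation theorem, and the formal division-by-$p$ step must be checked to land inside the prescribed weight window rather than merely at $w \le d + (p-1)$.
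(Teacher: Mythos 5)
The paper offers no proof of this lemma at all: it is imported verbatim as Lemma 1.7 of [TZ11] and used as a black box, so the only meaningful comparison is with the Tao--Ziegler argument, and your reconstruction follows essentially that route. The skeleton is sound: the ``if'' direction by showing that $D_h$ strictly decreases the weight $w(\vec d,k)=\sum_i d_i+k(p-1)$ of the monomials $M_{\vec d,k}$; the ``only if'' direction by the degree-drop step (apply the seed lemma to $D_{y_1}\cdots D_{y_{d-p+1}}P$ to get that $pP$ has degree $\le d-(p-1)$), the lift $\widetilde P$ with $p\widetilde P=pP$ obtained by raising every depth by one (which indeed lands at weight $\le d$, not $d+(p-1)$), and the reduction of $P-\widetilde P$ to the classical case; and uniqueness by multiplying by $p^K$ and invoking uniqueness of reduced classical representations. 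All of these steps are correct as written.

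Two places are still sketches rather than proofs, and you flagged the first yourself. (a) In the ``if'' direction your accounting implicitly treats $\chi_i(x_i,h_i)$ (and the products $|x_i|^{a_i}\chi_i(x_i,h_i)$) as integer polynomials of degree $\le p-1$ in $|x_i|$; this is only true modulo $p$, not over $\Z$, so each such replacement leaves a correction term divisible by $p$ sitting over a smaller power of $p$, which must be re-expanded in turn. If you do this re-expansion globally, the correction is an arbitrary integer-valued function of all $n$ variables and its reduced representative can have degree up to $p-1$ in the variables that did \emph{not} carry, which would wreck the weight budget; the fix is to exploit the product structure and carry out the iterated digit expansion separately inside each carried variable, so un-carried variables keep degree $a_i\le d_i$, each carried variable is capped at $p-1$ per digit, and every further digit costs one more power of $p$ in the denominator --- then the weight drops by at least $\sum_{i\in S}d_i\ge 1$, exactly as in your first-level computation. (b) The seed lemma (degree $<p$ implies classical up to a shift) is only gestured at; a non-circular proof needs a real argument, e.g.\ restrict to lines, write the one-variable restriction via its Newton expansion $q(x)=\sum_{j\le d}\binom{|x|}{j}D_1^jq(0)$, and use periodicity mod $p$ together with $p\mid\binom{p}{i}$ for $0<i<p$ to force $p\,D_1^jq(0)=0$ for all $j\ge1$. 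With these two points filled in the argument is complete. (A cosmetic remark, inherited from the quoted statement itself: a classical polynomial is forced to have depth $0$, but its shift is only forced into $\iota(\F)$, not to vanish; the paper sidesteps this by assuming all shifts are zero.)
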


Also, for convenience of exposition, we will assume throughout this
paper that the shifts of all polynomials are zero. This can be done
without affecting any of the results in this work. Hence, all
polynomials of depth $k$ take values in $\U_{k+1}$.

\subsection{Uniformity over linear forms}

A linear form in $k$ variables is $L=(\ell_1,\ldots,\ell_k) \in \F^k$. We interpret it as a linear operator $L:(\F^n)^k \to \F^n$ given by $L(x_1,\ldots,x_k) = \sum_{i=1}^k \ell_i x_i$.

\begin{definition}[Cauchy-Schwarz complexity, \cite{GT06}]\label{def:cplx}
Let $\mathcal{L} = \{L_1,\dots,L_m\}$ be a set of linear forms. The
{\em Cauchy-Schwarz complexity of $\mathcal{L}$} is the minimal $s$
such that the following holds. For every $i \in [m]$, we can partition
$\{L_j\}_{j \in [m]\setminus \{i\}}$ into $s+1$ subsets such that
$L_i$ does not belong to the linear span of each subset.
\end{definition}

Following is a lemma due to Green and Tao \cite{GT06} based on
repeated applications of the Cauchy-Schwarz inequality.
\begin{lemma}\label{lem:cnt}
Let $f_1, \dots, f_m : \F^n \to [-1,1]$. Let $\mathcal{L} =
\{L_1,\dots,L_m\}$ be a system of $m$ linear forms in $k$ variables
of Cauchy-Schwarz complexity $s$. Then:
$$\left|\E_{x_1,\dots,x_k \in \F^n} \left[ \prod_{i=1}^m
    f_i(L_i(x_1,\dots,x_\ell))\right] \right| \leq \min_{i \in [m]}\|f_i\|_{U^{s+1}}$$
\end{lemma}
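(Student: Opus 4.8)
This is the ``generalized von Neumann theorem'' of Green and Tao, and the natural route is to repeatedly apply the Cauchy--Schwarz inequality, once for each color class of the partition witnessing Cauchy--Schwarz complexity $s$; I would organize this as an induction on $s$ (which is finite for any finite family of forms). Fix an index $i$ attaining the minimum on the right-hand side; after relabeling assume $i=1$, so the goal is $\bigl|\E_{x_1,\dots,x_k}\prod_{j=1}^m f_j(L_j(x_1,\dots,x_k))\bigr|\le\|f_1\|_{U^{s+1}}$. By \cref{def:cplx} we may partition $\{L_2,\dots,L_m\}$ into $s+1$ classes $C_0,\dots,C_s$ with $L_1\notin\mathrm{span}(C_t)$ for each $t$. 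For the base case $s=0$ we have $L_1\notin\mathrm{span}\{L_2,\dots,L_m\}$, so by duality there is $w\in\F^k$ with $\langle w,L_j\rangle=0$ for all $j\ge 2$ and $\langle w,L_1\rangle\neq 0$; an invertible linear change of the variables $x_1,\dots,x_k$ (which preserves the expectation) sends $w$ to $e_k$, so that no $L_j$ with $j\ge2$ involves $x_k$ while $L_1=L_1'(x_1,\dots,x_{k-1})+cx_k$ with $c\neq 0$. Averaging over $x_k$ first replaces $f_1(L_1)$ by the constant $\E f_1=\|f_1\|_{U^1}$, and the remaining factor has modulus at most $1$, which gives the claim.

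\textbf{Inductive step ($s\ge1$).} Since $L_1\notin\mathrm{span}(C_0)$, perform the same change of variables so that no form of $C_0$ involves $x_k$ while $L_1=L_1'(x')+x_k$ after rescaling, where $x'=(x_1,\dots,x_{k-1})$. Writing the expectation as $\E_{x'}\bigl[\bigl(\prod_{L_j\in C_0}f_j(L_j(x'))\bigr)\cdot\E_{x_k}(\cdots)\bigr]$, bounding the $C_0$-factor by $1$, and applying Cauchy--Schwarz in $x'$, the left side gets squared and $\E_{x_k}(\cdots)$ gets replaced by its square, which after the substitution $x_k'=x_k+h$ becomes
$$\E_{x',x_k,h}\Bigl[\,g_h\bigl(L_1(x',x_k)\bigr)\prod_{L_i\in C_1\cup\dots\cup C_s}f_i\bigl(L_i(x',x_k)\bigr)\,f_i\bigl(L_i(x',x_k+h)\bigr)\Bigr],$$
where $g_h=\Delta_h f_1$ is again $[-1,1]$-valued (using \cref{multderiv} for real-valued $f_1$). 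For each fixed $h$, the inner average over $(x',x_k)$ is a linear-forms expression whose distinguished form is $L_1$ and whose other forms split into the $s$ classes $C_t'=\{L_i(x',x_k),\,L_i(x',x_k+h):L_i\in C_t\}$, $t=1,\dots,s$; this system has Cauchy--Schwarz complexity at most $s-1$, since the projection of each $C_t'$ to the $(x',x_k)$-coordinates spans exactly $\mathrm{span}(C_t)$, which does not contain $L_1$. The induction hypothesis gives $\bigl|\E_{x',x_k}(\cdots)\bigr|\le\|g_h\|_{U^s}$ for every $h$. Averaging over $h$, passing to the $2^s$-th power mean by Jensen, and using the identity $\E_h\|\Delta_hf_1\|_{U^s}^{2^s}=\|f_1\|_{U^{s+1}}^{2^{s+1}}$ (immediate from \cref{gowers}) yields $\bigl|\E_{x_1,\dots,x_k}\prod_jf_j(L_j)\bigr|^2\le\|f_1\|_{U^{s+1}}^2$. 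As the index $i=1$ was arbitrary, the bound holds with $\min_i$ on the right.

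\textbf{Main obstacle.} The only place that requires genuine care is verifying that the Cauchy--Schwarz complexity really decreases by one at each step, i.e.\ that in the doubled system the distinguished form $L_1$ still avoids the span of each remaining color class; this is where the coordinate normalization putting $C_0$ into the $x_k$-free hyperplane is essential. One must also dispose of degenerate situations where a doubled form collapses (when $L_i$ does not involve $x_k$, so the two copies coincide) or where some forms in different classes become equal, which is handled by merging the associated $[-1,1]$-valued weights. Once this bookkeeping is in place the argument is just a disciplined iteration of a single Cauchy--Schwarz move, and the quantitative loss is only in the exponent $2^{s+1}$, which disappears upon taking roots.
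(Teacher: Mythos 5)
The paper offers no proof of this lemma; it is quoted from Green and Tao \cite{GT06} with the remark that it follows from repeated applications of Cauchy--Schwarz, and your argument is exactly that standard argument: one Cauchy--Schwarz per colour class, normalizing coordinates so the current class is $x_k$-free, doubling $x_k$, and using $\E_h\|\Delta_h f_1\|_{U^s}^{2^s}=\|f_1\|_{U^{s+1}}^{2^{s+1}}$ together with the power-mean inequality. The computations are correct (including the base case and the final exponent bookkeeping), so I consider the proposal essentially right. One point to clean up: the statement you should induct on is the per-index one --- ``if $\{L_j\}_{j\neq 1}$ can be partitioned into $s+1$ classes none of whose spans contains $L_1$, then the average is at most $\|f_1\|_{U^{s+1}}$'' --- rather than the lemma as stated with the global Cauchy--Schwarz complexity of \cref{def:cplx}. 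Indeed, the doubled system contains each form of $C_1\cup\dots\cup C_s$ twice (and shifted copies, absorbed into the weights), so its complexity in the sense of \cref{def:cplx}, which quantifies over \emph{every} index, is not literally at most $s-1$; your verification only checks the condition for the distinguished form $L_1$, which is exactly what the per-index induction hypothesis requires, so the fix is purely a matter of restating the induction.
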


We would need to apply \cref{lem:cnt} in this paper in the special case when $\mathcal{L}$ corresponds to all $p^k$ linear forms describing all points in an affine subspace of dimension $k$. We would care only about some upper bound on the Cauchy-Schwarz complexity of the system. The following claim follows immediately from the definitions.

\begin{claim}\label{claim:cs_kdim}
Let $\mathcal{L}=\{(1,a_1,\ldots,a_k): a_1,\ldots,a_k \in \F\}$. Then the Cauchy-Schwarz complexity of $\mathcal{L}$ is at most $p^k$.
\end{claim}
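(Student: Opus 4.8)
The plan is to prove this with the crudest possible partition. Since $\mathcal{L}$ contains exactly $p^k$ forms, one for each tuple $(a_1,\dots,a_k)\in\F^k$, it will suffice to show that for every $i$ the remaining $p^k-1$ forms can be split into singletons, none of whose spans contains $L_i$; this yields a Cauchy-Schwarz complexity strictly below $p^k$, which is all that is needed for the later applications of \cref{lem:cnt}. No sharper estimate is pursued, since the point of the claim is just to have the brute-force count $p^k$ as a usable handle.

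First I would record the one substantive point: the forms in $\mathcal{L}$ are pairwise linearly independent as vectors in $\F^{k+1}$. Indeed, if $(1,a_1,\dots,a_k)=\lambda\,(1,b_1,\dots,b_k)$ for some $\lambda\in\F$, then comparing first coordinates forces $\lambda=1$, whence $(a_1,\dots,a_k)=(b_1,\dots,b_k)$. So two forms of $\mathcal{L}$ that are scalar multiples of one another must coincide; equivalently, for distinct $L_i,L_j\in\mathcal{L}$ the one-dimensional span $\{\lambda L_j:\lambda\in\F\}$ does not contain $L_i$. Note also that every $L_i$ is nonzero, having a $1$ in its first coordinate.

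Then I would conclude directly from \cref{def:cplx}. Fix $i$ and partition the set $\{L_j\}_{j\ne i}$, which has size $p^k-1$, into $p^k-1$ singletons $\{L_j\}$. By the previous paragraph $L_i$ lies in the linear span of none of them, so the defining condition holds with $s+1=p^k-1$, i.e. $s=p^k-2$; and this condition is monotone in $s$, since one may always refine a partition (a sub-part has span contained in that of its part) or pad it with empty parts (whose span is $\{0\}$, not containing the nonzero $L_i$). Hence the Cauchy-Schwarz complexity of $\mathcal{L}$ is at most $p^k-2\le p^k$. There is essentially no obstacle here: the only care needed is with degenerate small cases — for instance $k=0$, where $\mathcal{L}$ is a single nonzero form and the empty partition already works — and these do not affect the stated bound.
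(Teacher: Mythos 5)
Your proof is correct, and it is exactly the "immediate from the definitions" argument the paper has in mind (the paper gives no explicit proof): since every form has first coordinate $1$, no form of $\mathcal{L}$ is a scalar multiple of another, so partitioning the remaining $p^k-1$ forms into singletons witnesses Cauchy--Schwarz complexity at most $p^k-2\le p^k$.
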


\subsection{Polynomial factors and rank}

\begin{definition}[Factors] If $X$ is a finite set then by a \emph{factor} $\cB$ we mean simply a
partition of $X$ into finitely many pieces called \emph{atoms}.
\end{definition}

A function $f:X \to \C$ is called \emph{$\cB$-measurable} if it is constant on atoms of $\cB$. For any function $f : X \to \C$, we may define
the conditional expectation
$$\E[f|\cB](x)=\E_{y \in \cB(x)}[f(y)],$$
where $\cB(x)$ is the unique atom in $\cB$ that contains $x$. Note that $\E[f|\cB]$ is $\cB$-measurable.

A finite collection of functions $\phi_1,\ldots,\phi_C$ from $X$ to some other finite space $Y$ naturally define a factor $\cB=\cB_{\phi_1,\ldots,\phi_C}$ whose atoms are sets of the form $\{x: (\phi_1(x),\ldots,\phi_C(x))= (y_1,\ldots,y_C) \}$ for some $(y_1,\ldots,y_C) \in Y^C$. By an abuse of notation
we also use $\cB$ to denote the map $x \mapsto (\phi_1(x),\ldots,\phi_C(x))$, thus also identifying the atom containing $x$ with
$(\phi_1(x),\ldots,\phi_C(x))$.

\begin{definition}[Polynomial factors]\label{factor}
If $P_1, \dots, P_C:\F^n \to \T$ is  a sequence of polynomials, then the factor $\cB_{P_1,\ldots,P_C}$ is called a {\em polynomial factor}.
\end{definition}

The {\em complexity} of $\cB$, denoted $|\cB|$, is the number of defining polynomials
$C$.  The {\em degree} of $\cB$ is the maximum
degree among its defining polynomials $P_1,\ldots,P_C$. If $P_1,\ldots,P_C$ are of depths $k_1,\ldots,k_C$, respectively,
then $\|\cB\|=\prod_{i=1}^C p^{k_i+1}$ is called the \emph{order} of $\cB$.

Notice that the number of atoms of $\cB$ is bounded by $\|\cB\|$.

Next we need to define the notion of the {\em rank} of a polynomial or a polynomial factor.

\begin{definition}[Rank of a polynomial]
Given a polynomial $P : \F^n \to \T$ and an integer $d > 1$, the {\em $d$-rank} of
$P$, denoted $\msf{rank}_d(P)$, is defined to be the smallest integer
$r$ such that there exist polynomials $Q_1,\dots,Q_r:\F^n \to \T$ of
degree $\leq d-1$ and a function $\Gamma: \T^r \to \T$ satisfying
$P(x) = \Gamma(Q_1(x),\dots, Q_r(x))$. If $d=1$, then
$1$-rank is defined to be $\infty$ if $P$ is non-constant and $0$
otherwise.

The {\em rank} of a polynomial $P: \F^n \to \T$ is its $\deg(P)$-rank.
\end{definition}

A high-rank polynomial of degree $d$ is, intuitively, a ``generic''
degree-$d$ polynomial. There are no unexpected ways to decompose it
into lower degree polynomials. The following theorem shows that a high rank polynomial is distributed close to uniform.

\begin{theorem}[Theorem 4 of \cite{KL08}]\label{rankreg}
For any $\eps > 0$ and integer $d > 0$, there exists
$r = r_{\ref{rankreg}}(d,\eps)$ such that the following is true.
If $P: \F^n \to \T$ is a degree-$d$ polynomial with rank greater than $r$,
then $|\E_{x}[\expo{P(x)}]| < \eps$.
\end{theorem}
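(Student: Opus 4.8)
The plan is to prove the contrapositive: if $|\E_x[\expo{P(x)}]|$ is not small, then $P$ has small rank, i.e.\ $P$ decomposes into a bounded number of lower degree polynomials. I would argue by induction on the degree $d$. The base case $d=1$ is essentially the classical fact that a nontrivial additive character of $\F^n$ averages to zero (and depth-$1$ shifts only rotate the phase), so a degree-$1$ polynomial with $|\E_x[\expo{P(x)}]| \ge \eps$ for $\eps < 1$ must be constant, hence of rank $0$. For the inductive step, the key identity is the standard ``expanding the square'' manipulation that relates the average of $\expo{P}$ to averages of its additive derivatives:
\begin{equation}
\Bigl|\E_x[\expo{P(x)}]\Bigr|^2 = \E_{h}\Bigl[\E_x\bigl[\expo{P(x+h)-P(x)}\bigr]\Bigr] = \E_h\Bigl[\E_x\bigl[\expo{(D_hP)(x)}\bigr]\Bigr],
\end{equation}
and since $P$ has degree $d$, each derivative $D_hP$ has degree $\le d-1$. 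So if $|\E_x[\expo{P}]| \ge \eps$, then for at least an $\eps^2/2$ fraction of directions $h$ we have $|\E_x[\expo{(D_hP)(x)}]| \ge \eps^2/2$, and by the inductive hypothesis each such $D_hP$ has rank at most $r_{\ref{rankreg}}(d-1, \eps^2/2) =: r'$, meaning it is a function of $r'$ polynomials of degree $\le d-2$.

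The main work — and the main obstacle — is to pass from ``many derivatives $D_hP$ have bounded rank'' to ``$P$ itself has bounded rank.'' This is the heart of the Kaufman--Lovett argument and uses a Ramsey-type / algebraic regularity argument on the derivatives. The idea is that the derivatives $D_hP$, as $h$ ranges over the good set, all live (up to bounded rank) in a bounded-dimensional space of degree-$(d-2)$ polynomials: one first finds, by a greedy/pigeonhole selection, a bounded collection $R_1,\dots,R_t$ of degree-$\le d-2$ polynomials such that for a positive-density set of $h$, the derivative $D_hP$ is a function of $R_1,\dots,R_t$. Because $D_hP$ is \emph{linear} in $h$ (the map $h \mapsto D_hP$ is additive modulo lower-order terms, via $D_{h_1+h_2}P = D_{h_1}P + D_{h_2}P + D_{h_1}D_{h_2}P$ and the last term has degree $\le d-2$), one can use additive-combinatorics (a Bogolyubov / Freiman-type argument, or the symmetry argument in \cite{KL08}) to promote ``$D_hP$ is low-rank for many $h$'' to ``$D_hP$ is low-rank for \emph{all} $h$'' with only a bounded blow-up in the rank. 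At that point $P$ has all its derivatives expressible via boundedly many lower-degree polynomials, and an integration/interpolation step (expressing $P$ up to a constant in terms of the $R_i$'s and their ``antiderivatives'') shows $P$ itself has bounded $d$-rank, with the bound depending only on $d$ and $\eps$. One then sets $r_{\ref{rankreg}}(d,\eps)$ to be whatever explicit (tower-type) function of $d$ and $\eps$ emerges from tracking the constants through the induction. I would not grind through the additive-combinatorial estimates here; I would instead cite the precise statement from \cite{KL08} (and note the corrected/strengthened versions in the higher-order Fourier analysis literature) and quote the bound.
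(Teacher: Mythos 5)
This statement is not proved in the paper at all: it is quoted as background, verbatim, as Theorem~4 of \cite{KL08}, so the paper's ``proof'' is simply the citation. Your proposal ends in the same place --- you explicitly defer the hard estimates to \cite{KL08} --- so as used in the paper it is acceptable, and your preamble (the identity $|\E_x[\expo{P(x)}]|^2=\E_h\E_x[\expo{(D_hP)(x)}]$, the averaging step giving an $\eps^2/2$-fraction of directions $h$ with biased derivative of degree $\le d-1$, and induction on $d$) is a correct account of how the known proofs begin. Be aware, though, that the part you wave at is where all the content lives, and your description of it is closer to the Green--Tao symmetrization-and-integration argument, which only works for degree $d<p$ and for classical polynomials; the theorem as stated here concerns non-classical polynomials $P:\F^n\to\T$ of arbitrary degree, where the ``integrate the derivatives'' step runs into the depth phenomenon of \cref{struct} (this is exactly why $\msf{rank}$ in \cref{regular} must also track $p^jP$) and why \cite{KL08}, and its non-classical refinements in \cite{TZ11} and \cite{BFHHL13}, use a different self-correction-style argument rather than Bogolyubov/symmetry. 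So treat your sketch as motivation, not as a proof: if you intend it as a standalone argument there is a genuine gap at the promotion step from ``$D_hP$ has bounded rank for a positive fraction of $h$'' to ``$P$ has bounded $d$-rank''; if, as the paper does, you simply cite \cite{KL08}, you are fine.
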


 Next, we will formalize the
notion of a generic collection of polynomials. Intuitively, it should
mean that there are no unexpected algebraic dependencies among the
polynomials.

\begin{definition}[Rank and Regularity]\label{regular}
A polynomial factor $\cB$ defined by a sequence of
polynomials $P_1,\dots,P_C: \F^n \to \T$ with respective depths $k_1,
\dots, k_C$ is said to have {\em   rank
  $r$} if $r$ is the least integer for which there exist
$(\lambda_1,\dots, \lambda_C) \in \Z^C$ so that $(\lambda_1 \mod p^{k_1 + 1}, \dots, \lambda_C \mod
p^{k_C + 1}) \neq (0, \dots, 0)$
and the polynomial $Q = \sum_{i=1}^C \lambda_i P_i$ satisfies
$\msf{rank}_{d}(Q) \leq r$ where $d = \max_{i} \deg(\lambda_iP_i)$.

Given a polynomial factor $\cB$ and a function $r: \N
\to \N$, we say $\cB$ is {\em $r$-regular} if $\cB$ is of rank
larger than $ r(|\cB|)$.
\end{definition}

Note that since $\lambda$ can be a multiple of $p$, rank measured with
respect to $\deg(\lambda P)$ is not the same as rank measured with
respect to $\deg(P)$. So, for instance, if $\cB$ is the factor defined by a
single polynomial $P$ of degree $d$ and depth $k$, then
$$\msf{rank}(\cB) = \min\set{\msf{rank}_d(P), \msf{rank}_{d-(p-1)}(pP),
\cdots, \msf{rank}_{d-k(p-1)}(p^kP)}.$$

Regular factors indeed do behave like  generic collections of
polynomials, and thus, given any factor $\cB$ that is not regular, it
will often be useful to {\em regularize} $\cB$, that is, find a refinement
$\cB'$ of $\cB$ that is regular up to our desires. We distinguish
between two kinds of refinements.

\begin{definition}[Semantic and syntactic refinements] \label{refine}
A polynoial factor $\cB'$ is called a {\em syntactic refinement} of $\cB$, and
denoted $\cB' \succeq_{syn} \cB$, if the sequence of polynomials
defining $\cB'$ extends that of $\cB$. It is called a {\em
  semantic refinement}, and denoted $\cB' \succeq_{sem} \cB$ if the
induced partition is a combinatorial refinement of the partition
induced by $\cB$. In other words, if for every $x,y\in \F^n$,
$\cB'(x)=\cB'(y)$ implies $\cB(x)=\cB(y)$.
\end{definition}

The following lemma shows that every polynomial factor can be refined to be arbitrarily regular without increasing its complexity by more than a constant.

\begin{lemma}[Polynomial Regularity Lemma, Lemma 2.19 of~\cite{BFHHL13}]\label{factorreg}
Let $r: \N \to \N$ be a non-decreasing function and $d > 0$
be an integer. Then, there is a  function
$C_{\ref{factorreg}}^{(r,d)}: \N \to \N$  such
that the following is true. Suppose $\cB$ is a factor defined by
polynomials $P_1,\dots, P_C : \F^n \to \T$  of degree at most $d$.
Then, there is  an $r$-regular factor $\cB'$ consisting of  polynomials
$Q_1, \dots, Q_{C'}: \F^n \to \T$ of degree $\leq d$ such that $\cB'
\succeq_{sem} \cB$ and  $C' \leq  C_{\ref{factorreg}}^{(r,d)}(C)$.
\end{lemma}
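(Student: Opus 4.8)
The statement to prove is the Polynomial Regularity Lemma (Lemma~\ref{factorreg}): every degree-$\le d$ polynomial factor $\cB$ of complexity $C$ admits a semantic refinement $\cB'$ of complexity $C' \le C^{(r,d)}_{\ref{factorreg}}(C)$ that is $r$-regular. The natural approach is an iterative ``regularization'' argument: repeatedly detect a low-rank algebraic dependency among the current defining polynomials and use it to replace the offending polynomial by lower-degree pieces, until no such dependency remains. The key point making this terminate is a monotone potential function on factors.

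**Main steps.** First I would fix the right potential. Assign to a polynomial factor defined by $P_1,\dots,P_C$ of degrees $e_1,\dots,e_C$ the vector $(m_d, m_{d-1}, \dots, m_1)$ where $m_j$ is the number of $P_i$ of degree exactly $j$, and order such vectors lexicographically (largest-degree count first). This potential takes values in a well-ordered set, so any strictly decreasing sequence of steps must terminate. Second, the regularization step: suppose $\cB$, currently of complexity $C$, is \emph{not} $r(C)$-regular. By \cref{regular} there is a nonzero (mod the depths) combination $Q=\sum_i \lambda_i P_i$ with $\msf{rank}_{d'}(Q)\le r(C)$, where $d'=\max_i\deg(\lambda_iP_i)$. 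Pick an index $i_0$ with $\deg(\lambda_{i_0}P_{i_0})=d'$ maximal; then $\lambda_{i_0}P_{i_0}$ is expressible, using $Q$ and the other $\lambda_iP_i$'s together with the low-rank decomposition of $Q$, as a function of at most $r(C)+C$ polynomials each of degree $\le d'-1$ (for the $Q$-part, by definition of $d'$-rank) or $\le d'$ but already among our list. The subtlety is that $\lambda_{i_0}P_{i_0}$ has degree $d'$ while $P_{i_0}$ itself may have degree $\deg(P_{i_0})\ge d' - (p-1)\cdot(\text{something})$; one must use \cref{struct} to recover $P_{i_0}$ (up to lower-degree error, using that $p^k P_{i_0}$ relates to $\lambda_{i_0}P_{i_0}$ when $\lambda_{i_0}$ is a multiple of $p$) from $\lambda_{i_0}P_{i_0}$ and genuinely lower-degree polynomials. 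Third, form the new factor $\cB''$ by removing $P_{i_0}$ and adjoining the (boundedly many) lower-degree polynomials witnessing the decomposition. By construction $\cB'' \succeq_{sem}\cB$ (every atom of $\cB''$ still determines $P_{i_0}$, hence refines $\cB$), the complexity grows by at most an absolute constant depending only on $r(C)$ and $d$, and the potential strictly decreases: we deleted a degree-$e_{i_0}$ polynomial and added only polynomials of strictly smaller degree than that one (after accounting for the depth/degree bookkeeping). Fourth, iterate: at each stage the complexity is some $C_t$, and as long as the current factor fails to be $r(C_t)$-regular we apply the step. Since the potential is strictly decreasing in a well-order, the process halts after finitely many steps at an $r$-regular factor $\cB'$; a routine induction bounds $C'$ by a function $C^{(r,d)}_{\ref{factorreg}}(C)$ of the initial complexity only (the number of steps and the growth per step are both controlled by $r$ and $d$ alone, given how $r$ is evaluated along the way).

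**Main obstacle.** The genuinely delicate part is the degree/depth bookkeeping in the regularization step — i.e., extracting from a low-rank linear combination $Q=\sum\lambda_i P_i$ a clean statement that some single $P_{i_0}$ is a function of lower-degree polynomials, when the $\lambda_i$ may be multiples of $p$ and so $\deg(\lambda_i P_i)$ can differ from $\deg(P_i)$. One has to be careful that the potential genuinely drops (it is not enough to drop the rank; one must drop the lexicographic degree-count vector), and this is exactly why the potential is indexed by degree and why one picks $i_0$ maximizing $\deg(\lambda_{i_0}P_{i_0})$ rather than $\deg(P_{i_0})$. Handling non-classical polynomials via \cref{struct} — peeling off the $p^k P$ terms — is where the case analysis lives. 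Everything else (termination via well-ordering, the semantic-refinement claim, assembling the final complexity bound) is bookkeeping once this step is set up correctly. I would present the argument by first defining the potential and stating the one-step regularization as a sub-claim, proving that sub-claim carefully, and then closing with the termination-and-bound induction.
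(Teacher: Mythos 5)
First, note that the paper does not prove \cref{factorreg} at all: it is imported verbatim as Lemma~2.19 of~\cite{BFHHL13}, so there is no in-paper argument to compare against, and your proposal has to be measured against the regularization arguments in that literature. Your skeleton (iterate: extract a low-rank witness $\sum_i \lambda_i P_i$, replace an offending polynomial by the lower-degree polynomials from the rank decomposition, and terminate via a lexicographic degree-count potential) is exactly the classical Green--Tao-style regularization, and it is correct whenever one can pick an index $i_0$ attaining $d'=\max_i\deg(\lambda_iP_i)$ with $\lambda_{i_0}$ invertible modulo $p^{k_{i_0}+1}$: then $P_{i_0}(x)$ is pointwise a function of the other $P_i(x)$ and the $R_j(x)$, the new factor is a semantic refinement, and the potential drops.

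The genuine gap is the non-classical case you flag as ``where the case analysis lives'' but do not actually resolve, and the mechanism you sketch does not work. By \cref{regular} the witness only guarantees $(\lambda_i \bmod p^{k_i+1})\neq 0$, and it can happen that \emph{every} admissible witness has $p\mid\lambda_{i_0}$ for every index of top degree: already for a single polynomial $S$ of depth $1$ which has high rank but for which $pS$ has low rank, the only witnesses are $\lambda\in p\cdot\{1,\dots,p-1\}$. In that situation the relation only determines the value $p^aP_{i_0}(x)$ (with $a\ge 1$), and this does \emph{not} determine $P_{i_0}(x)$: the missing ``high-order digit'' of $P_{i_0}(x)\in\U_{k_{i_0}+1}$ is not recoverable from any multiples $p^jP_{i_0}$, nor is it supplied by \cref{struct} --- the monomial representation is a global identity of polynomials, but pointwise the value of a sum of monomials does not reveal the summands (carries are exactly the obstruction), and the high-digit function is in general not itself a polynomial of lower degree or depth. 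Consequently your step ``remove $P_{i_0}$ and adjoin the lower-degree witnesses; every atom of $\cB''$ still determines $P_{i_0}$'' fails to produce a semantic refinement precisely in this case, and with it the claim that the potential strictly decreases; a potential tracking degrees only (rather than degree--depth data) is a symptom of the same issue. Closing this case is the real content of the lemma in the non-classical setting and requires additional ideas beyond the sketch (this is where \cite{BFHHL13}, following Tao--Ziegler-type arguments, does its work), so as written the proposal proves the classical analogue but not the stated lemma.
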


The first step towards showing that regular factors behave like  generic collections of
polynomials is to show that they are almost equipartitions.

\begin{lemma}[Size of atoms, Lemma 3.2 of \cite{BFHHL13}]\label{atomsize}
Given $\eps > 0$, let $\cB$ be a polynomial factor of
degree $d > 0$,  complexity $C$, and rank
$r_{\ref{rankreg}}(d,\eps)$,   defined by a tuple of
polynomials $P_1, \dots, P_C: \F^n
\to \T$ having respective depths $k_1, \dots, k_C$.
Suppose $b = (b_1, \dots, b_C) \in \U_{k_1+1} \times \cdots \times \U_{k_C+1}$. Then
$$
\Pr_{x}[\cB(x) = b] =  \frac{1}{\|\cB\|} \pm \eps.
$$
In particular, for  $\eps < \frac{1}{\|\cB\|} $, $\cB(x)$ attains every possible value in its range and thus has
$\|\cB\|$ atoms.
\end{lemma}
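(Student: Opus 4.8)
The plan is to run the standard Weyl-type equidistribution argument: expand the indicator of the atom $\{x : \cB(x)=b\}$ into additive characters of the finite abelian group in which the defining polynomials take values, and then use the rank hypothesis together with \cref{rankreg} to show that every non-principal character contributes negligibly.

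First I would use that, under the standing assumption that all shifts are zero, each $P_i$ takes values in the cyclic group $\U_{k_i+1}$ of order $p^{k_i+1}$. Character orthogonality on this group gives, for every $b_i \in \U_{k_i+1}$,
$$\mathbf{1}[P_i(x) = b_i] \;=\; \frac{1}{p^{k_i+1}} \sum_{\lambda_i=0}^{p^{k_i+1}-1} \expo{\lambda_i\paren{P_i(x)-b_i}}.$$
Since $\mathbf{1}[\cB(x)=b] = \prod_{i=1}^C \mathbf{1}[P_i(x)=b_i]$, multiplying these identities and taking the expectation over a uniformly random $x \in \F^n$ yields
$$\Pr_x[\cB(x)=b] \;=\; \frac{1}{\|\cB\|}\sum_{\lambda}\expo{-\textstyle\sum_i \lambda_i b_i}\cdot\E_x\expo{\textstyle\sum_i \lambda_i P_i(x)},$$
where $\lambda$ ranges over $\prod_i\set{0,\dots,p^{k_i+1}-1}$, a set of size exactly $\|\cB\|$. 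The term $\lambda=0$ contributes precisely $1/\|\cB\|$, the claimed main term, so the remaining task is to bound the other $\|\cB\|-1$ terms.

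Fix $\lambda\neq 0$; equivalently $(\lambda_1\bmod p^{k_1+1},\dots,\lambda_C\bmod p^{k_C+1})\neq(0,\dots,0)$, so $\lambda$ is admissible in the sense of \cref{regular}. Set $Q=\sum_i\lambda_i P_i$ and $e=\max_i\deg(\lambda_i P_i)\le d$. The definition of the rank of a factor gives $\msf{rank}_e(Q)\ge\msf{rank}(\cB)$; since $\msf{rank}_{d}(\cdot)$ is non-increasing in $d$ we also get $\msf{rank}(Q)=\msf{rank}_{\deg Q}(Q)\ge\msf{rank}_e(Q)\ge\msf{rank}(\cB)$. In particular $Q$ is non-constant (a constant polynomial has rank $0$), and its rank exceeds $r_{\ref{rankreg}}(d,\eps)\ge r_{\ref{rankreg}}(\deg Q,\eps)$ (taking $r_{\ref{rankreg}}$ non-decreasing in its degree argument, which is without loss of generality). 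Hence \cref{rankreg}, applied to $Q$ in degree $\deg Q$, yields $\bigabs{\E_x\expo{Q(x)}}<\eps$. Summing the triangle inequality over all $\|\cB\|-1$ non-principal terms,
$$\Bigabs{\Pr_x[\cB(x)=b]-\frac{1}{\|\cB\|}} \;\le\; \frac{1}{\|\cB\|}\sum_{\lambda\neq 0}\bigabs{\E_x\expo{Q(x)}} \;<\; \frac{\|\cB\|-1}{\|\cB\|}\,\eps \;<\;\eps,$$
which is exactly the stated estimate. The last sentence of the lemma is then immediate: when $\eps<1/\|\cB\|$, this bound forces $\Pr_x[\cB(x)=b]>0$ for every $b$ in the range $\U_{k_1+1}\times\cdots\times\U_{k_C+1}$, so all $\|\cB\|$ possible atoms are nonempty and $\cB$ has exactly $\|\cB\|$ atoms.

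The one place where care is genuinely needed — and what I expect to be the main obstacle — is the degree bookkeeping for non-classical polynomials: multiplying $P_i$ by $\lambda_i$ can strictly lower its degree (and annihilate it entirely when $p^{k_i+1}\mid\lambda_i$), and $Q$ itself may have degree strictly below $\max_i\deg(\lambda_iP_i)$ because of cancellation. One must verify that in each of these cases the rank lower bound still legitimately feeds into \cref{rankreg} at the correct degree; this is precisely the role of the two monotonicity observations used above, and since \cref{rankreg} is phrased for a polynomial of one fixed degree it is worth spelling this out rather than glossing over it.
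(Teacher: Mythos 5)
This lemma is imported verbatim from Lemma 3.2 of~\cite{BFHHL13} and is not proved in the paper, but your argument — character orthogonality over $\U_{k_1+1}\times\cdots\times\U_{k_C+1}$ to expand $\mathbf{1}[\cB(x)=b]$, followed by the bias--rank bound of \cref{rankreg} applied to each admissible nonzero $\lambda$ via the definition of factor rank — is precisely the standard proof given in that source (and the same expansion appears in the paper's proof of \cref{cl:restrict_atom}). Your degree/rank bookkeeping, including the monotonicity of $\msf{rank}_d$ in $d$ and of $r_{\ref{rankreg}}$ in the degree, is correct, so the proposal is sound and essentially identical to the original argument.
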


%
%
%

Finally we state the regularity lemma, the basis of the higher order Fourier analysis.

\begin{theorem}[Theorem 4.4 of~\cite{BFL12}]\label{thm:reg}
Suppose $\zeta > 0$ is a real and $d,k \geq 1$ are integers. Let $\eta: \N \to \R^+$ be an arbitrary
non-increasing function, and let $r: \N \to \N$ be an arbitrary
non-decreasing function. Let $\cB_0$ be a polynomial factor of degree $d$ and complexity $C_0$. Then, there exist $C =
C_{\ref{thm:reg}}(\eta,  r, \zeta, C_0, d)$ such that the
following holds.

Every function $f: \F^n \to \{0,1\}$ has a decomposition $f=f_1+f_2+f_3$ such that the following is true:
\begin{itemize*}
\item $f_1 = \E[f| \cB_1]$ for a polynomial factor $\cB_1 \succeq_{sem} \cB_0 $ of degree
$d$ and complexity  $C_1 \le C$.
\item $\|f_2\|_2 < \zeta$ and $\|f_3\|_{U^{d+1}} < \eta(|\cB|)$.
\item The functions $f_1$ and $f_1+f_3$ have range
$[0,1]$; $f_2$ and $f_3$ have range $[-1,1]$.
\item $\cB_1$ is $r$-regular.
\end{itemize*}
Furthermore if $\msf{rank}(\cB_0) \ge r_{\ref{thm:reg}}(\eta,  r, \zeta, C_0, d)$, then one can assume that $\cB_1 \succeq_{syn} \cB_0$.
\end{theorem}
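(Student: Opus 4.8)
The plan is a standard energy--increment (regularization) argument, in which the three--part form of the decomposition is \emph{forced} by a circularity that is the real content of the proof. The external inputs are the \emph{inverse theorem for the $U^{d+1}$--norm} -- which provides a function $\delta\mapsto\delta'(\delta,d,p)>0$ such that any $g:\F^n\to\C$ with $\|g\|_\infty\le1$ and $\|g\|_{U^{d+1}}\ge\delta$ has $|\E_x[g(x)\overline{\expo{Q(x)}}]|\ge\delta'$ for some (possibly non--classical) polynomial $Q$ of degree $\le d$ -- and the Polynomial Regularity Lemma \cref{factorreg}, used to keep factors $r$--regular after each refinement. I also use two elementary facts: (i) for $\cB'\succeq_{sem}\cB$ one has the Pythagorean identity $\|\E[f\mid\cB']\|_2^2-\|\E[f\mid\cB]\|_2^2=\|\E[f\mid\cB']-\E[f\mid\cB]\|_2^2$; and (ii) if $g=f-\E[f\mid\cB]$ correlates with $\expo{Q}$, $\deg Q\le d$, at level $\ge\delta'$, then appending $Q$ to the defining polynomials of $\cB$ to form $\cB'$ buys an energy increment $\|\E[f\mid\cB']\|_2^2-\|\E[f\mid\cB]\|_2^2\ge(\delta')^2$, since $\expo{Q}$ is $\cB'$--measurable and hence $\E_x[g\,\overline{\expo{Q}}]=\E_x[(\E[f\mid\cB']-\E[f\mid\cB])\overline{\expo{Q}}]$, to which Cauchy--Schwarz applies.

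\medskip
\noindent\emph{The iteration.} If $\operatorname{rank}(\cB_0)$ exceeds the threshold $r_{\ref{thm:reg}}(\cdots)$, arrange every refinement below to be \emph{syntactic} (regularizing a refinement of a high--enough--rank factor leaves the original polynomials in place), yielding $\cB_1\succeq_{syn}\cB_0$; otherwise first pass to an $r$--regular semantic refinement of $\cB_0$ of bounded complexity via \cref{factorreg}. Set $T:=\lceil4/\zeta^2\rceil$ and perform $T$ \emph{outer rounds}. Entering round $j$ with an $r$--regular degree--$d$ factor $\cB^{(j)}$, run an \emph{inner loop}: while $\|f-\E[f\mid\cB]\|_{U^{d+1}}\ge\eta(|\cB^{(j)}|)$ -- with the threshold \emph{frozen} at the complexity $|\cB^{(j)}|$ that began the round -- produce $Q$ from the inverse theorem, append it, and re--regularize via \cref{factorreg}. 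Each inner step costs energy $\ge\delta'(\eta(|\cB^{(j)}|))^2$, so the inner loop stops within $\delta'(\eta(|\cB^{(j)}|))^{-2}$ steps; its output $\cB^{(j+1)}$ is $r$--regular of degree $d$ with $\|f-\E[f\mid\cB^{(j+1)}]\|_{U^{d+1}}<\eta(|\cB^{(j)}|)$. Each round's complexity growth is a bounded function of the previous complexity (iterate $C_{\ref{factorreg}}^{(r,d)}$ a bounded number of times), so $|\cB^{(T)}|$ -- and hence the final bound $C=C_{\ref{thm:reg}}(\eta,r,\zeta,C_0,d)$ -- is an explicit function of the data.

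\medskip
\noindent\emph{Extracting the decomposition.} The energies $\|\E[f\mid\cB^{(0)}]\|_2^2\le\cdots\le\|\E[f\mid\cB^{(T)}]\|_2^2$ are nondecreasing and at most $\E[f]\le1$, so some round $j^\ast<T$ satisfies $\|\E[f\mid\cB^{(j^\ast+1)}]\|_2^2-\|\E[f\mid\cB^{(j^\ast)}]\|_2^2\le1/T\le\zeta^2/4$. Take $\cB_1:=\cB^{(j^\ast)}$, $f_1:=\E[f\mid\cB_1]$, $f_2:=\E[f\mid\cB^{(j^\ast+1)}]-\E[f\mid\cB_1]$, $f_3:=f-\E[f\mid\cB^{(j^\ast+1)}]$. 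Then $f=f_1+f_2+f_3$; $\cB_1$ is $r$--regular of degree $d$ and complexity $\le C$; by (i), $\|f_2\|_2^2$ equals the pigeonholed increment, so $\|f_2\|_2\le\zeta/2$; and $\|f_3\|_{U^{d+1}}<\eta(|\cB^{(j^\ast)}|)=\eta(|\cB_1|)$. For the ranges, $f_1$ and $f_1+f_2=\E[f\mid\cB^{(j^\ast+1)}]$ are conditional expectations of the $\{0,1\}$--valued $f$ hence lie in $[0,1]$, while $f_3$ and $f_2$ lie in $[-1,1]$; a routine rounding of the pieces (its error absorbed into $f_2$, using that we may start with $\zeta/2$ in place of $\zeta$, whence $T=\lceil4/\zeta^2\rceil$) gives the stated ranges exactly.

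\medskip
\noindent\emph{Main obstacle.} The crux is the circularity that dictates the three--part form: a naive two--part argument would iterate the inner loop until $\|f-\E[f\mid\cB]\|_{U^{d+1}}<\eta(|\cB|)$ for the \emph{current} complexity, but as the loop refines, the complexity grows, $\eta$ of it shrinks, and -- since $\eta$ may decay arbitrarily fast -- there is no $n$--independent bound on the number of steps. Freezing the threshold within each round and running only $O(1/\zeta^2)$ rounds breaks the loop, at the unavoidable cost of the nonzero, $L^2$--small term $f_2$: the energy jump across the pigeonholed round. The residual technicalities -- making the energy increment quantitative for \emph{polynomial} rather than arbitrary factors (appended high--rank polynomials genuinely split atoms, which is where $r$--regularity and \cref{atomsize} enter) and the rounding bookkeeping for the ranges -- are routine.
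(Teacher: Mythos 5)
You are proving a statement that this paper does not prove at all: \cref{thm:reg} is imported verbatim as Theorem 4.4 of~\cite{BFL12}, so there is no internal proof to compare against. Your plan reconstructs the standard argument behind that theorem (and behind its graph-theoretic ancestor, the strong regularity lemma): the Tao--Ziegler inverse theorem for the $U^{d+1}$ norm over $\F_p^n$ (with non-classical polynomials, since $d+1$ may exceed $p$), an energy increment obtained by appending the correlating polynomial and re-regularizing with \cref{factorreg}, rounds with the $\eta$-threshold frozen at the entering complexity, and a pigeonhole over $O(1/\zeta^2)$ rounds producing a coarse/fine pair $\cB_1\preceq\cB_2$ with $f_1=\E[f|\cB_1]$, $f_2=\E[f|\cB_2]-\E[f|\cB_1]$, $f_3=f-\E[f|\cB_2]$. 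That is essentially how the cited literature proceeds, so in outline your proposal is the right one, including the key observation that the $U^{d+1}$ bound is measured against the coarse complexity $|\cB_1|$.

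Two steps are glossed in a way that matters. The serious one is the range guarantee $f_1+f_3\in[0,1]$: with your pieces $f_1+f_3=f-f_2$, and $f_2=\E[f|\cB_2]-\E[f|\cB_1]$ need not lie pointwise in $[f-1,f]$, so the claim fails as stated; moreover the proposed ``routine rounding absorbed into $f_2$'' cannot repair it, since $f_2+f_3=f-f_1$ is fixed, so any pointwise correction of the range must be added to $f_3$, whose $U^{d+1}$ norm must remain below $\eta(|\cB_1|)$ --- possibly far smaller than the $L^2$-sized truncation error, and Gowers norms are not stable under truncation. The known fix is structural rather than a rounding: redefine $f_2$ inside each atom of the fine factor $\cB_2$ (rescaling on $\{f=1\}$ or on $\{f=0\}$, exactly in the spirit of the $\psi$-construction used later in this paper) so that $\E[f_2|\cB_2]=\E[f|\cB_2]-\E[f|\cB_1]$ while $f_2\in[f-1,f]$ pointwise; then $f_3=f-f_1-f_2=(1-\xi)\bigl(f-\E[f|\cB_2]\bigr)$ for a $\cB_2$-measurable $\xi$ with $\|\xi\|_\infty\le 1$, and a \cref{cl:restrict_atom}-type Fourier expansion over atoms bounds $\|f_3\|_{U^{d+1}}$ by roughly $\|\cB_2\|$ times the inner-loop threshold, which one therefore chooses small as a function of the (bounded) fine complexity. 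The lesser issue is the ``furthermore'' clause: \cref{factorreg} as quoted yields only semantic refinements, so arranging every refinement to be syntactic above a high-rank $\cB_0$ requires the stronger, rank-preserving form of the regularization lemma; that form exists in the literature but must be invoked explicitly rather than asserted parenthetically.
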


\subsection{Useful claims}
We prove in this subsection a few useful claims relating to polynomial factors.

Sometimes one needs to refine the decomposition given by the factor $\cB_1$ in \cref{thm:reg}. The following simple lemma is useful in such situations.
\begin{lemma}
\label{lem:refinement}
Let $\cB$ be a polynomial factor of complexity $C$ and degree $d$, and let $f:\F^n \to \{0,1\}$ be decomposed into $f=\E[f|\cB]+f_2+f_3$ for $f_2,f_3:\F^n \to [-1,1]$. If $\cB' \succeq_{sem} \cB$ is  a polynomial factor of degree $d$ and complexity $C'$, then
$$\left\|\Ex{f | \cB} - \Ex{f | \cB'}\right\|_1  \le \left\|f_2\right\|_{2} +  p^{dC'} \|f_3\|_{U^{d+1}}.$$
\end{lemma}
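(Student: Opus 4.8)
The plan is to bound the $L_1$ distance $\|\E[f|\cB] - \E[f|\cB']\|_1$ by splitting the contribution according to the three-part decomposition $f = \E[f|\cB] + f_2 + f_3$. First I would observe that conditional expectation with respect to a fixed factor is a linear contraction on $L_1$ (indeed on every $L_q$), so for any $g$ we have $\|\E[g|\cB']\|_1 \le \|g\|_1$. Applying this to $g = f - \E[f|\cB] = f_2 + f_3$ and using $\E[\E[f|\cB] \mid \cB'] = \E[f|\cB]$ (since $\cB' \succeq_{sem} \cB$ means $\E[f|\cB]$ is already $\cB'$-measurable), we get
\begin{equation*}
\E[f|\cB'] - \E[f|\cB] = \E[f_2 + f_3 \mid \cB'],
\end{equation*}
so it suffices to bound $\|\E[f_2|\cB']\|_1 + \|\E[f_3|\cB']\|_1$. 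The first term is at most $\|f_2\|_1 \le \|f_2\|_2$ by the contraction property and Cauchy–Schwarz, which gives the first summand on the right-hand side.

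The main work is bounding $\|\E[f_3|\cB']\|_1$ in terms of the Gowers uniformity norm $\|f_3\|_{U^{d+1}}$, and this is the step I expect to be the crux. The idea is that a $\cB'$-measurable function, where $\cB'$ has degree $d$ and complexity $C'$, can be written as $F(Q_1(x),\dots,Q_{C'}(x))$ for the defining polynomials $Q_i$ of $\cB'$ (of degree $\le d$), and any such bounded function expands (via Fourier analysis on $\T^{C'}$, or rather on the finite group generated by the ranges of the $Q_i$) as a linear combination of characters $\expo{\sum_i \lambda_i Q_i(x)}$ with coefficients of total $\ell_1$ mass controlled. The number of relevant frequency tuples $(\lambda_1,\dots,\lambda_{C'})$ is at most $p^{dC'}$ up to the depth bookkeeping (this is where the $p^{dC'}$ factor enters). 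Each character $\expo{Q(x)}$ with $Q = \sum_i \lambda_i Q_i$ a polynomial of degree $\le d$ satisfies $\|\expo{Q}\|_{U^{d+1}} = 1$, hence by the standard inner-product/Gowers-Cauchy-Schwarz inequality $|\langle f_3, \expo{Q}\rangle| = |\E[f_3(x)\overline{\expo{Q(x)}}]| \le \|f_3\|_{U^{d+1}}$. Summing over the at most $p^{dC'}$ characters and using $|\E[f_3|\cB'](x)| $ pointwise bounded by the corresponding character sum, or more cleanly writing $\|\E[f_3|\cB']\|_1 \le \|\E[f_3|\cB']\|_2$ and noting $\langle \E[f_3|\cB'], \E[f_3|\cB']\rangle = \langle f_3, \E[f_3|\cB']\rangle$, I would conclude $\|\E[f_3|\cB']\|_1 \le p^{dC'}\|f_3\|_{U^{d+1}}$.

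Putting the two bounds together yields
\begin{equation*}
\big\|\E[f|\cB] - \E[f|\cB']\big\|_1 \le \|f_2\|_2 + p^{dC'}\|f_3\|_{U^{d+1}},
\end{equation*}
as claimed. The one technical point to be careful about is the character expansion: since the $Q_i$ are non-classical of various depths, the ``dual group'' over which we expand is $\prod_i \U_{k_i+1}$ rather than $\F^{C'}$, and the count of characters is $\|\cB'\| = \prod_i p^{k_i+1}$; I would check that this is dominated by $p^{dC'}$ (each depth $k_i$ satisfies $k_i + 1 \le d$ when the polynomial has positive degree $\le d$, or handle the constant case trivially), so the stated bound is safe. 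Everything else is routine manipulation of conditional expectations and the defining property of the Gowers norm via \cref{lem:cnt} with the trivial system of forms, or directly from $\|\expo{Q}\|_{U^{d+1}}=1$.
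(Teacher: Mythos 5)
Your reduction is the same as the paper's: since $\cB'\succeq_{sem}\cB$, the function $\E[f|\cB]$ is $\cB'$-measurable, so $\E[f|\cB']-\E[f|\cB]=\E[f_2|\cB']+\E[f_3|\cB']$, and $\|\E[f_2|\cB']\|_1\le\|f_2\|_1\le\|f_2\|_2$. The problem is in the step you yourself identify as the crux, namely $\|\E[f_3|\cB']\|_1\le p^{dC'}\|f_3\|_{U^{d+1}}$ (the paper's \cref{cl:L2U}): neither of the two mechanisms you propose actually delivers it. Your ``clean'' route gives $\|\E[f_3|\cB']\|_1\le\|\E[f_3|\cB']\|_2$ together with $\|\E[f_3|\cB']\|_2^2=\langle f_3,\E[f_3|\cB']\rangle\le p^{dC'}\|f_3\|_{U^{d+1}}$, which after taking the square root yields only $\bigl(p^{dC'}\|f_3\|_{U^{d+1}}\bigr)^{1/2}$; in the regime that matters (where $p^{dC'}\|f_3\|_{U^{d+1}}\ll 1$) this is strictly weaker than the claimed linear bound, so it does not prove the lemma as stated. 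Your alternative ``pointwise'' route, expanding $\E[f_3|\cB']$ itself as $\sum_\alpha \hat F(\alpha)\,\expo{\sum_i\alpha_i Q_i(x)}$ and bounding by the character sum, would require each coefficient $\hat F(\alpha)$ — which is a \emph{uniform} average over atom labels — to be controlled by $\|f_3\|_{U^{d+1}}$, which is an average over $x\in\F^n$ weighted by atom sizes; these agree only if the atoms of $\cB'$ are (near-)equidistributed, i.e.\ only under a rank/regularity hypothesis that the lemma does not assume and that the paper's proof does not need.

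The repair is to run the character expansion on a \emph{bounded, $\cB'$-measurable test function} rather than on $\E[f_3|\cB']$ itself. The paper writes $\|\E[f_3|\cB']\|_1=\sum_b\bigl|\E[f_3(x)1_{\cB'(x)=b}]\bigr|$, expands each indicator exactly as $1_{\cB'(x)=b}=\prod_i p^{-(k_i+1)}\sum_{\lambda_i}\expo{\lambda_i(Q_i(x)-b_i)}$, whose coefficients have absolute values summing to $1$, and uses $\bigl|\E[f_3\,\expo{Q}]\bigr|\le\|f_3\,\expo{Q}\|_{U^{d+1}}=\|f_3\|_{U^{d+1}}$ for every degree-$\le d$ phase $Q$; summing over the at most $\|\cB'\|\le p^{dC'}$ atoms gives the linear bound (this is \cref{cl:restrict_atom} plus \cref{cl:L2U}). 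Equivalently, your duality idea works if you test against $w=\mathrm{sgn}(\E[f_3|\cB'])$ instead of passing through $L_2$: then $\|\E[f_3|\cB']\|_1=\langle f_3,w\rangle$ with $w$ bounded by $1$ and $\cB'$-measurable, and expanding $w$ in the at most $p^{dC'}$ characters, each with coefficient of modulus at most $1$ and correlation with $f_3$ at most $\|f_3\|_{U^{d+1}}$, gives exactly $p^{dC'}\|f_3\|_{U^{d+1}}$. Your depth bookkeeping ($\|\cB'\|=\prod_i p^{k_i+1}\le p^{dC'}$) is correct and is the same count the paper uses.
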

\begin{proof}
We have
$$\left\|\Ex{f | \cB} - \Ex{f | \cB'}\right\|_1  = \left\|\Ex{f_2 | \cB'} + \Ex{f_3 | \cB'}\right\|_1 \le  \left\|f_2\right\|_{1} + \left\| \Ex{f_3 | \cB'} \right\|_1.$$
The claim follows since $\left\|f_2\right\|_{1} \le \left\|f_2\right\|_{2}$ and \cref{cl:L2U} below shows that $\left\| \Ex{f_3 | \cB'} \right\|_1 \le  p^{dC'} \|f_3\|_{U^{d+1}}$.
\end{proof}

\begin{claim}
\label{cl:restrict_atom}
Let $f:\F^n \to [-1,1]$, and let $\cB$ be a polynomial factor of degree $d$ and complexity $C$.  Then for any atom $b$ of $\cB$,
$$\|f(x) 1_{\cB(x) = b}\|_{U^{d+1}} \le \|f\|_{U^{d+1}}.$$
\end{claim}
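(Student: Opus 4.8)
The plan is to Fourier-expand the atom indicator $1_{\cB(x)=b}$ as a bounded combination of degree-$\leq d$ phase polynomials, and then exploit the fact that multiplying a function by such a phase leaves its $U^{d+1}$ norm unchanged, combined with the triangle inequality.

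First I would fix notation: write $\cB = \cB_{P_1,\dots,P_C}$ where $P_i$ has degree $\leq d$ and depth $k_i$, so $P_i$ takes values in $\U_{k_i+1}$, and write $b=(b_1,\dots,b_C)$ with each $b_i \in \U_{k_i+1}$. By orthogonality of characters on $\Z/p^{k_i+1}\Z$ we have $1_{P_i(x)=b_i} = \frac{1}{p^{k_i+1}}\sum_{\lambda_i=0}^{p^{k_i+1}-1}\expo{\lambda_i(P_i(x)-b_i)}$, and multiplying these over $i$,
$$1_{\cB(x)=b} = \frac{1}{\|\cB\|}\sum_{\lambda}\expo{-\textstyle\sum_i\lambda_i b_i}\,\expo{\textstyle\sum_i\lambda_i P_i(x)},$$
where $\lambda$ ranges over $\prod_i \Z/p^{k_i+1}\Z$, a set of size exactly $\|\cB\|=\prod_i p^{k_i+1}$. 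For each $\lambda$, the function $Q_\lambda := \sum_i \lambda_i P_i$ is a polynomial of degree $\leq d$: the iterated additive derivative is linear, so each $\lambda_i P_i$ (even when $p \mid \lambda_i$) still satisfies $D_{y_1}\cdots D_{y_{d+1}}(\lambda_i P_i)=0$, and hence so does the sum.

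Next I would record the elementary identity that for any $g:\F^n\to\C$ and any polynomial $Q$ of degree $\leq d$, $\|g\cdot\expo{Q}\|_{U^{d+1}} = \|g\|_{U^{d+1}}$. This holds because the multiplicative derivative $\Delta_h$ is multiplicative, so $\Delta_{y_1}\cdots\Delta_{y_{d+1}}(g\,\expo{Q}) = (\Delta_{y_1}\cdots\Delta_{y_{d+1}}g)\cdot\expo{D_{y_1}\cdots D_{y_{d+1}}Q} = \Delta_{y_1}\cdots\Delta_{y_{d+1}}g$, the last step using $\deg Q \leq d$; taking expectations and the $2^{d+1}$-th root yields the identity. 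Applying the triangle inequality for $\|\cdot\|_{U^{d+1}}$ (a genuine norm since $d+1\geq 2$) to the expansion of $f\cdot 1_{\cB=b}$, and using $|\expo{-\sum_i\lambda_i b_i}|=1$ together with $\|f(x)\expo{Q_\lambda(x)}\|_{U^{d+1}} = \|f\|_{U^{d+1}}$ for every $\lambda$, we get
$$\|f\cdot 1_{\cB=b}\|_{U^{d+1}} \leq \frac{1}{\|\cB\|}\sum_{\lambda}\|f(x)\expo{Q_\lambda(x)}\|_{U^{d+1}} = \frac{1}{\|\cB\|}\cdot\|\cB\|\cdot\|f\|_{U^{d+1}} = \|f\|_{U^{d+1}}.$$

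There is no serious obstacle here; this is a short computation. The only points needing a little care are verifying that each $\lambda_i P_i$ retains degree $\leq d$ (so that $\expo{Q_\lambda}$ has unit $U^{d+1}$ norm), and that the Fourier expansion has exactly $\|\cB\|$ terms so that the $\frac{1}{\|\cB\|}$ normalization cancels the sum of $\|\cB\|$ copies of $\|f\|_{U^{d+1}}$.
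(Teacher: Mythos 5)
Your proposal is correct and follows essentially the same route as the paper: expand the atom indicator $1_{\cB(x)=b}$ via character orthogonality into a normalized sum of phases $\expo{\sum_i \lambda_i(P_i(x)-b_i)}$ of degree-$\le d$ polynomials, apply the triangle inequality for $\|\cdot\|_{U^{d+1}}$, and use that multiplication by such a phase preserves the $U^{d+1}$ norm. The only difference is presentational: you spell out the degree check for $\lambda_i P_i$ and the phase-invariance identity, which the paper leaves implicit.
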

\begin{proof}
Let $\cB$ be defined by polynomials $P_1,\ldots,P_C$ of depths $k_1,\ldots,k_C$, respectively. An atom $b \in \cB$ is defined by $b = \{x \in \F^n: P_i(x)=b_i\}$. So
\begin{align*}
\|f(x) 1_{\cB(x) = b}\|_{U^{d+1}}
&= \left\|f(x)\prod_{i=1}^{C} \frac{1}{p^{k_i+1}} \sum_{
    \lambda_{i}=0}^{p^{k_i+1}-1} \expo{\lambda_{i} (P_{i}(x) - b_{i})}\right\|_{U^{d+1}}\\
&\le \prod_{i=1}^{C} p^{-(k_i+1)} \cdot \sum_{(\lambda_1, \dots, \lambda_C)
\atop \in \prod_{i} [0,p^{k_i + 1}-1]}
\left\|f(x)\expo{\sum_{i} \lambda_{i}(P_{i}(x) - b_{i})}\right\|_{U^{d+1}}\\
&= \left(\prod_{i=1}^C p^{-(k_i+1)}\right)  \cdot \left(\prod_{i=1}^C p^{k_i+1} \|f\|_{U^{d+1}}\right) = \|f\|_{U^{d+1}}.
\end{align*}Let us
\end{proof}

\begin{claim}
\label{cl:L2U}
Let $f:\F^n \to [-1,1]$, and let $\cB$ be a polynomial factor of degree $d$ and complexity $C$.  Then
$ \left\| \Ex{f | \cB} \right\|_1  \le p^{dC} \|f\|_{U^{d+1}}$.
\end{claim}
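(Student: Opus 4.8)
The plan is to expand $\Ex{f\mid\cB}$ atom by atom and reduce the whole estimate to a per-atom bound supplied by \cref{cl:restrict_atom}, together with the monotonicity of the Gowers norms and a count of the atoms of $\cB$.

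First I would record the elementary identity
$$\left\|\Ex{f\mid\cB}\right\|_1 = \sum_{b\in\cB} \Bigabs{\E_x\bigl[f(x)\,1_{\cB(x)=b}\bigr]},$$
the sum running over the (possibly empty) atoms $b$ of $\cB$. Indeed $\Ex{f\mid\cB}$ is constant on each atom $b$ with value $\E_x[f(x)1_{\cB(x)=b}]\big/\Pr_x[\cB(x)=b]$, so averaging $\bigabs{\Ex{f\mid\cB}}$ over a uniform $x\in\F^n$ reweights the $b$-th term by $\Pr_x[\cB(x)=b]$, which cancels the denominator, while empty atoms contribute nothing on either side.

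Next, fix an atom $b$. Since $\|g\|_{U^1}=\abs{\E g}\le\|g\|_{U^{d+1}}$ for every $g:\F^n\to\C$,
$$\Bigabs{\E_x\bigl[f(x)1_{\cB(x)=b}\bigr]} = \bigl\|f(x)1_{\cB(x)=b}\bigr\|_{U^1} \le \bigl\|f(x)1_{\cB(x)=b}\bigr\|_{U^{d+1}} \le \|f\|_{U^{d+1}},$$
where the final inequality is exactly \cref{cl:restrict_atom}. (Alternatively one can reprove the per-atom bound from scratch by expanding $1_{\cB(x)=b}$ into its $\prod_i p^{k_i+1}$ characters $\expo{\sum_i\lambda_i(P_i(x)-b_i)}$, using that multiplying $f$ by a phase $\expo{Q}$ with $\deg Q\le d$ does not change $\|f\|_{U^{d+1}}$ because $D_{y_1}\cdots D_{y_{d+1}}Q\equiv 0$, and applying $\|\cdot\|_{U^1}\le\|\cdot\|_{U^{d+1}}$ term by term; invoking \cref{cl:restrict_atom} is the cleaner route.)

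Finally I would bound the number of atoms. If $\cB$ is defined by $P_1,\dots,P_C:\F^n\to\T$ of depths $k_1,\dots,k_C$, then \cref{struct} forces $k_i(p-1)\le d-1$, hence $k_i+1\le d$; so the number of atoms of $\cB$ is at most $\|\cB\|=\prod_{i=1}^C p^{k_i+1}\le p^{dC}$. Putting the three steps together yields $\left\|\Ex{f\mid\cB}\right\|_1\le p^{dC}\|f\|_{U^{d+1}}$. I do not anticipate a real obstacle here: the claim is a short assembly of \cref{cl:restrict_atom}, the norm monotonicity, and the depth-versus-degree bound from \cref{struct}. The one point deserving a moment's care is the opening identity — checking that the atom weights cancel correctly and that empty atoms do not inflate the atom count — but that is routine.
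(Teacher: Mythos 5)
Your proof is correct and follows essentially the same route as the paper's: sum $\|\Ex{f\mid\cB}\|_1$ over atoms, bound each term via the monotonicity of the Gowers norms together with \cref{cl:restrict_atom}, and bound the number of atoms by $\|\cB\|\le p^{dC}$. You merely spell out the opening identity and the depth-versus-degree count that the paper leaves implicit, which is fine.
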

\begin{proof}
By the monotonicity of the Gowers norms and \cref{cl:restrict_atom}, for every atom $b \in \cB$
$$
\left|\Ex{f(x) 1_{\cB(x) = b}}\right| \le \left\|f(x) 1_{\cB(x) = b}\right\|_{U^{d+1}} \le \|f\|_{U^{d+1}}.
$$
Hence
$$\left\| \Ex{f | \cB} \right\|_1 = \sum_{b \in \cB} |\Ex{f(x) 1_{\cB(x)=b}}| \le  p^{dC} \|f\|_{U^{d+1}}.$$
\end{proof}

We need a simple bound on Gowers uniformity norms in terms of $L_1$ norm.

\begin{claim}\label{cl:gowers_l1}
Let $f:\F^n \to [-1,1]$. For any $d \ge 1$,
$$
\|f\|_{U^d} \le \|f\|_1^{1/2^d}.
$$
\end{claim}

\begin{proof}
By the definition of the $U^d$ norm and the boundedness of $f$,
$$
\|f\|_{U^d}^{2^d} \le \E_{x,y_1,\ldots,y_d} \prod_{I \subseteq [d]} |f(x + \sum_{i \in I} y_i)| \le \E |f(x)| = \|f\|_1.
$$
\end{proof}

We will also need the following lemma about restrictions of high rank polynomials to affine subspaces.

\begin{lemma}\label{lem:rankrestrict}
For $\eps>0$ and positive $d,e,r$, if $m \ge m_{\ref{lem:rankrestrict}}(\eps,d,e,r)$ and $n \ge m$ then the following holds. For every polynomial $P: \F^n \to \T$  of degree $d$, depth $e$ and rank $\geq r$, a random affine embedding $A:\F^m \to \F^n$ satisfies that
$$\Pr[\mbox{$\deg(P)<d$ or $(\msf{rank}_d(BP) <r)$ or $\mathrm{depth}(P)<e$}] < \eps.$$
\end{lemma}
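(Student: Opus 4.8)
The plan is to split the failure event into its three disjuncts and bound each separately, since the hard work is entirely in the rank condition. The two ``easy'' events are that the degree drops ($\deg(BP)<d$, where $B$ denotes composition with $A$) and that the depth drops ($\mathrm{depth}(BP)<e$). First I would handle the degree. A non-classical polynomial of degree exactly $d$ has a $(d+1)$-st additive derivative identically zero but some $d$-th additive derivative nonzero; picking $y_1,\dots,y_d,x$ witnessing $D_{y_1}\cdots D_{y_d}P(x)\neq 0$, one observes that $BP=P\circ A$ has the same property provided the affine embedding $A$ happens to have the displacement vectors $Ae_1,\dots,Ae_d$ (differences of images of coordinate basis points) and the base point in ``general position'' relative to the fixed witness. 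More robustly: the degree-$d$ part of $P$ in the representation of \cref{struct} is nonzero, and restricting to a random $m$-dimensional subspace with $m\geq d$ preserves a nonzero top-degree homogeneous part with probability $1-O(p^{-(m-d)})$, which we can make $<\eps/3$ by taking $m$ large. The depth condition is essentially identical in spirit: depth is detected by the coefficients $c_{d_1,\dots,d_n,k}$ with $k$ maximal in \cref{struct}, and a random linear restriction preserves the existence of a nonzero such coefficient with probability close to $1$; alternatively, depth $e$ means $p^e P$ has degree $\le d-e(p-1)$ but is nonconstant (more precisely that $p^{e-1}P$ has higher degree), and this is again a genericity statement about the restriction. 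Both can be made to fail with probability $<\eps/3$ by choosing $m = m_{\ref{lem:rankrestrict}}$ large enough.

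The main obstacle is the rank condition: showing that $\msf{rank}_d(BP)\geq r$ with probability $\geq 1-\eps/3$, given $\msf{rank}_d(P)\geq r'$ for a suitably large $r' = r'(\eps,d,e,r)$ (which we fold into the statement, or rather, we are given rank $\geq r$ and must show rank of the restriction stays $\geq r$ — so actually we should read the lemma as: for the conclusion ``$\msf{rank}_d(BP)\ge r$'' we may need to start from rank $\ge r_{\ref{lem:rankrestrict}}(\dots)$ somewhat larger, and indeed the quantifier in the statement allows $r$ itself to be chosen large since it appears on both sides — but to be safe I would prove the contrapositive-flavored bound). The approach here is the standard one: low rank is equivalent, via \cref{rankreg}, to large bias. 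If $\msf{rank}_d(BP) < r$, then by definition there is a nontrivial linear combination... more directly, I would use the following equivalence used throughout higher-order Fourier analysis: a degree-$d$ polynomial $Q$ has $\msf{rank}_d(Q) \le r$ only if $Q$ is a function of $r$ polynomials of degree $\le d-1$, hence the factor it generates together with those is not regular. The cleanest route is: high rank of $P$ implies, by \cref{rankreg} applied to all the ``derived'' polynomials $\lambda P$ for $\lambda\in\{1,\dots,p-1\}\cdot p^j$, that $|\E_x \expo{\lambda P(x)}|$ is tiny; I then want to transfer this smallness to $BP$ by a counting/averaging argument over the random $A$ — the bias of the restriction, averaged over $A$, is controlled by a Gowers-type inner product of $\expo{\lambda P}$ with itself along the linear forms cutting out the subspace, which by \cref{lem:cnt} and \cref{claim:cs_kdim} is bounded by a power of $\|\expo{\lambda P}\|_{U^{d+1}}$... but $\expo{\lambda P}$ has $U^{d+1}$ norm exactly $1$, so that naive bound is useless.

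So the genuine argument must instead go the other way: I would argue that \emph{if} a positive fraction of restrictions $A$ had $\msf{rank}_d(BP)<r$, then one could \emph{pull back} a low-rank representation of $BP$ to a low-rank representation of $P$, contradicting $\msf{rank}_d(P)$ being large. Concretely, fix $A$ with $\msf{rank}_d(BP)<r$, so $BP = \Gamma(R_1,\dots,R_{r-1})$ with $\deg R_i \le d-1$ on $\F^m$; choose $D:\F^n\to\F^m$ affine with $AD = \mathrm{id}$ (as in the proof overview), and set $\tilde R_i = R_i\circ D$, which have degree $\le d-1$ on $\F^n$. The polynomial $P - \Gamma(\tilde R_1,\dots,\tilde R_{r-1})$ then vanishes on the image of $A$, i.e.\ it is supported outside a $1$-in-$p^{n-m}$ fraction subspace structure — but it need not be zero, so this does not immediately bound $\msf{rank}_d(P)$. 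This is exactly the crux, and I expect the resolution is to combine it with the quantitative near-orthogonality / counting bound of \cref{rankreg}: a polynomial of high rank $P$ has $\expo{\lambda P}$ equidistributed, so it cannot agree with \emph{any} function of $r-1$ lower-degree polynomials on a noticeable fraction of a random subspace unless $r$ is large, because on a random subspace a function of $r-1$ degree-$(d-1)$ polynomials is still a ``low complexity'' object while $\expo{\lambda P}$ restricted stays unbiased (by a direct second-moment computation: $\E_A |\E_{x\in\F^m}\expo{\lambda BP(x)}|^2$ is an average of $\expo{\lambda(P(u)-P(v))}$ over structured $u,v$, which is small by \cref{lem:cnt} applied to $\expo{\lambda D_h P}$ — and $D_h P$ has degree $\le d-1$, so \emph{its} $U^d$ norm can be small when $P$ has high rank). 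That second-moment bound, plus the fact that low-$\msf{rank}_d$ forces nonnegligible bias of some $\expo{\lambda\cdot BP}$ via \cref{rankreg}, closes the loop: a Markov-type argument over $A$ then gives that $\Pr_A[\msf{rank}_d(BP)<r] < \eps/3$ once $r$ (equivalently the hypothesized rank of $P$) and $m$ are large enough in terms of $\eps,d,e,r$. Summing the three $\eps/3$ bounds completes the proof.
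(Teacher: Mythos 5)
Your approach to the crucial rank-preservation step contains a genuine gap: the step you use to ``close the loop'' asserts that $\msf{rank}_d(AP)<r$ forces a nonnegligible bias of some $\expo{\lambda\, AP}$ ``via \cref{rankreg}''. But \cref{rankreg} is the implication high rank $\Rightarrow$ small bias; its contrapositive is ``noticeable bias $\Rightarrow$ low rank'', and the converse you need is false. For a concrete counterexample over $\F_2$, take $P(x)=x_1+x_2x_3\cdots x_{d+1}$: it has degree $d$ and $\msf{rank}_d(P)\le 3$ (it is a function of $x_1$, $x_2$ and $x_3\cdots x_{d+1}$, all of degree $\le d-1$), yet $\E_x[\expo{P(x)}]=0$ exactly, and there is no other nontrivial multiple to consider. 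So a degree-$d$ polynomial of bounded rank can be perfectly unbiased, and consequently your (correct) second-moment bound showing that every $\lambda AP$ inherits the small bias of $\lambda P$ gives no information whatsoever about $\msf{rank}_d(AP)$. Since rank preservation is the only genuinely hard part of the lemma, the proof as proposed does not go through; the degree- and depth-preservation sketches are also only heuristically argued (for non-classical $P$ one needs an honest counting/second-moment argument to find a witnessing derivative or a maximal-depth coefficient inside a random subspace), but those are repairable side issues.

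The paper's own proof is entirely different and sidesteps the issue you ran into: it observes that the set $\cP'$ of ``bad'' restrictions (polynomials of degree $<d$, or of degree $d$ and rank $<r$, or of degree $d$ and depth $<e$) is a \emph{degree-structural} property, hence locally defined and locally testable by Theorems 1.7 and 1.2 of \cite{BFHHL13}; since every member of $\cP'$ is a polynomial of degree at most $d$, the polynomial $P$ is at least $p^{-\lceil d/(p-1)\rceil}$-far from $\cP'$ in Hamming distance (Reed--Muller minimum distance), and then \cref{prop:testing} yields $\Pr_A[AP\in\cP']<\eps$ once $m$ is large enough. If you wish to avoid quoting that machinery wholesale, the honest replacement is an equidistribution statement for restrictions of high-rank polynomials to random subspaces (in the spirit of Theorem 3.3 of \cite{BFHHL13}, used elsewhere in this paper), not a converse to \cref{rankreg}.
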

\begin{proof}
Let $\cP'$ denote the property of ``bad'' restrictions of $P$. That is, $\cP'$ is the property of functions which are either polynomials of degree less than $d$; or polynomials of degree $d$ and rank less than $r$; or polynomials of degree $d$ and depth less than $e$. By assumption $P \notin \cP'$. Theorem 1.7 in~\cite{BFHHL13} shows that any degree-structural property, and in particular $\cP'$, is locally defined. Theorem 1.2 in~\cite{BFHHL13} shows that any such property is locally testable. Furthermore, as all elements in $\cP'$ are polynomials of degree $d$, then $P$ is $\eta$-far from $\cP'$, where $\eta \ge p^{-\lceil d/(p-1) \rceil}$ is the minimal distance of polynomials of degree $d$ (it is not hard to see that the minimal distance for non-classical polynomials of a given degree is achieved by a classical polynomial). Hence, by \cref{prop:testing} there exists $m=m(\eta,d,e,r)$ such that for a random affine embedding $A:\F^m \to \F^n$, $\Pr[AP \in \cP'] < \eps$.
\end{proof}

\section{Some remarks on testability}
Let us discuss some results related to testablity of affine-invariant properties. To simplify the presentation we focus on the special case of $R=2$. The proof easily generalizes to $R > 2$, by decomposing every function $f:\F^n \to [R]$ as $f=f^{(1)}+\ldots f^{(R)}$ where $f^{(i)}$ is the indicator function of the set $\{x : f(x)=i\}$.

Consider a function $f:\F^n \to \{0,1\}$, and a positive integer $k \le n$. Let $A:\F^k \to \F^n$ be a random  affine embedding. Let  $\mu_{f,k}$ denote the distribution of $Af:\F^k \to \{0,1\}$. So via this sampling, every function $f:\F^n \to \{0,1\}$ defines a probability measure $\mu_{f,k}$ on the set of functions $\{\F^k \to \{0,1\}\}$. We denote by $\mu_{f,k}[v]$ the probability that $\mu_{f,k}$ assigns to $v:\F^k \to \{0,1\}$.

This can be generalized to functions $f:\F^n \to [0,1]$. We view such functions as distribution over functions $f':\F^n \to \{0,1\}$, where $\Pr[f'(x)=1]=f(x)$ independently for all $x \in \F^n$. Let again $A:\F^k \to \F^n$ be a random  affine embedding, and we denote by $\mu_{f,k}$ the distribution of $Af':\F^k \to \{0,1\}$. This is a generalization of the former case as a function $f:\F^n \to \{0,1\}$ can be identified with the function that maps every $x \in \F^n$ to the point-mass probability distribution over $\{0,1\}$ which is concentrated on $f(x)$.

The following simple corollary follows easily from the definition of testability.
\begin{corollary}\label{cor:testing2}
If an affine-invariant property $\cP$ is testable, then for every $\eps>0$, there exist $k \ge 1$ so that the following holds. For any function $f:\F^n \to \{0,1\}$ with $\dist(f,\cP) \ge \eps$, and any function $g:\F^n \to \{0,1\}$ in $\cP$, the statistical distance between $\mu_{f,k}$ and $\mu_{g,k}$ is at least $1/3$.
\end{corollary}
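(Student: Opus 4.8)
The plan is to deduce \cref{cor:testing2} directly from the equivalent non-algorithmic characterization of testability in \cref{prop:testing}, taking a dimension $k$ large enough to boost the success probability well above $2/3$. First I would invoke \cref{prop:testing} with the error parameter $\eps$: since $\cP$ is testable (and affine-invariant), there exist a constant $k_0$ and a set $\cH_0 \subseteq \{\F^{k_0} \to [R]\}$ such that for a random affine embedding $A:\F^{k_0}\to\F^n$, if $g\in\cP$ then $\Pr[Ag\in\cH_0]>2/3$, and if $f$ is $\eps$-far from $\cP$ then $\Pr[Af\notin\cH_0]>2/3$. The statistical distance between $\mu_{g,k_0}$ and $\mu_{f,k_0}$ is then at least $\Pr[Ag\in\cH_0]-\Pr[Af\in\cH_0] > 2/3 - 1/3 = 1/3$, which already gives the bound $1/3$ with the constant $k_0$.

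Actually, since the claimed bound in \cref{cor:testing2} is exactly $1/3$, the single application above suffices and no amplification is needed; but to be safe about the strictness of the inequalities I would instead start from a slightly stronger form. Concretely, I would first note that testability is equivalent to testability with success probability $1-2^{-t}$ for any constant $t$, by the standard trick of running the tester $O(t)$ times independently and taking a majority vote — this only multiplies the query count by a constant, so $q'(\eps) = O(t\cdot q(\eps))$ still depends only on $\eps$. Feeding this amplified tester into \cref{prop:testing} yields a constant $k$ and a set $\cH$ with $\Pr[Ag\in\cH] > 1-2^{-t}$ for $g\in\cP$ and $\Pr[Af\notin\cH] > 1-2^{-t}$ for $f$ that is $\eps$-far. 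Then for $f$ with $\dist(f,\cP)\ge\eps$ and $g\in\cP$,
\[
\varvarTV(\mu_{f,k},\mu_{g,k}) \ge \mu_{g,k}[\cH] - \mu_{f,k}[\cH] > (1-2^{-t}) - 2^{-t} = 1 - 2^{1-t},
\]
and choosing $t=2$ gives statistical distance at least $1/2 > 1/3$, comfortably establishing the claim. Here I am using that $\mu_{f,k}[\cH]$ is precisely $\Pr_A[Af\in\cH]$ by the definition of $\mu_{f,k}$, and that statistical distance between two distributions is at least the gap in the probability they assign to any fixed event $\cH$.

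There is no real obstacle here: the only thing to be careful about is the direction of the inequalities and the fact that \cref{prop:testing} is stated for a \emph{fixed} $\eps$ with a \emph{single} threshold set $\cH$, whereas the corollary quantifies over all $f$ with $\dist(f,\cP)\ge\eps$ simultaneously — but this is fine because \cref{prop:testing} already provides one set $\cH$ that works for all $\eps$-far functions at once. The mild subtlety is reconciling the one-sided phrasing of \cref{prop:testing} (it speaks of $f$ being $\eps$-far, not merely $\dist(f,\cP)\ge\eps$); since $\dist(f,\cP)\ge\eps$ means $f$ is $\eps'$-far for every $\eps'<\eps$, I would apply \cref{prop:testing} with parameter $\eps/2$ (say) to get a clean ``$\eps/2$-far'' hypothesis covered, at the cost of replacing $k$ by the constant associated to $\eps/2$, which is still a constant depending only on $\eps$. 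The whole argument is a short unwinding of definitions plus one invocation of probability amplification.
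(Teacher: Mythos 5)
Your first paragraph is exactly the paper's proof: invoke \cref{prop:testing} to get $k$ and $\cH$, and note that the statistical distance between $\mu_{f,k}$ and $\mu_{g,k}$ is at least the gap $\Pr[Ag\in\cH]-\Pr[Af\in\cH] > 2/3 - 1/3 = 1/3$. The extra amplification step and the $\eps/2$ adjustment are harmless but unnecessary refinements; the argument is correct and takes essentially the same route as the paper.
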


\begin{proof}
From the definition of testability in \cref{prop:testing}, there exist $k \ge 1$ and a family $\cH \subseteq \{\F^k \to \{0,1\}\}$, such that for a random affine embedding $A:\F^k \to \F^n$, $\Pr[Ag \in \cH] > 2/3$ and $\Pr[Af  \in \cH] \le 1/3$. Hence, the statistical distance between $\mu_{f,k}$ and $\mu_{g,k}$ is at least $1/3$.
\end{proof}

We can deduce the following useful corollary. If $f:\F^n \to \{0,1\}$ has a distribution $\mu_{f,k}$ which is very close to $\mu_{g,k}$ for a function $g \in \cP$, then $f$ must be close to $\cP$. In fact, the same holds for $f:\F^n \to [0,1]$, except now the results holds with high probability over $f':\F^n \to \{0,1\}$ sampled from $f$.

\begin{corollary}\label{cor:testing3}
For $\eps>0$ let $k \ge 1$ be given by \cref{cor:testing2}, and assume that $n \ge n_{\ref{cor:testing3}}(k,\eps)$. Let $f:\F^n \to [0,1]$ and $g:\F^n \to \{0,1\}$ so that $g \in \cP$ and the statistical distance between $\mu_{f,k}, \mu_{g,k}$ is at most $1/4$. Let $f':\F^n \to \{0,1\}$ be sampled by taking $f'(x)=1$ with probability $f(x)$ independently for all $x \in \F^n$. Then with probability at least $99\%$ over the choice of $f'$,
$$
\dist(f',\cP) \le \eps.
$$
\end{corollary}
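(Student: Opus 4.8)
The plan is to show directly, via the contrapositive of \cref{cor:testing2}, that $\dist(f',\cP) < \eps$ for at least a $99\%$-fraction of the sampled $f'$. For this it suffices to prove that for at least a $99\%$-fraction of $f'$ the statistical distance between $\mu_{f',k}$ and $\mu_{g,k}$ is strictly less than $1/3$: \cref{cor:testing2}, applied with the given $g \in \cP$, says that $\dist(f',\cP) \ge \eps$ would force that distance to be at least $1/3$. Since by hypothesis the statistical distance between $\mu_{f,k}$ and $\mu_{g,k}$ is at most $1/4$, the triangle inequality reduces everything to showing that, for all but a $1\%$-fraction of $f'$, the statistical distance between $\mu_{f',k}$ and $\mu_{f,k}$ is less than $1/12$.

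This is a concentration statement. By the (mixture) definition of $\mu_{f,k}$ for a $[0,1]$-valued function, $\mu_{f,k}[v] = \E_{f'}\big[\mu_{f',k}[v]\big]$ for every $v : \F^k \to \{0,1\}$, so it is enough to show that each $\mu_{f',k}[v]$ concentrates around its mean. Fix $v$ and write $\mu_{f',k}[v] = \E_A\big[1_{Af'=v}\big]$, where $A : \F^k \to \F^n$ is a random affine embedding. For two independent copies $A, A'$, the indicators $1_{Af'=v}$ and $1_{A'f'=v}$ are determined by the restrictions of $f'$ to $\mathrm{Im}(A)$ and $\mathrm{Im}(A')$ respectively (each a set of $p^k$ points, since $A, A'$ are injective); as $f'$ is a product distribution over $\F^n$, these indicators are independent over the choice of $f'$ whenever $\mathrm{Im}(A) \cap \mathrm{Im}(A') = \emptyset$. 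Since by symmetry $\Pr_A[x \in \mathrm{Im}(A)] = p^{k-n}$ for every fixed $x$, a union bound gives $\Pr_{A,A'}[\mathrm{Im}(A) \cap \mathrm{Im}(A') \neq \emptyset] \le p^{2k-n}$, and therefore $\Var_{f'}\big(\mu_{f',k}[v]\big) \le p^{2k-n}$. Chebyshev's inequality then gives $\Pr_{f'}\big[\,\big|\mu_{f',k}[v] - \mu_{f,k}[v]\big| > \gamma\,\big] \le p^{2k-n}/\gamma^2$ for every $\gamma > 0$. (A bounded-differences argument works as well: changing a single value of $f'$ moves $\mu_{f',k}[v]$ by at most $p^{k-n}$, so McDiarmid gives the sharper bound $2\exp(-2\gamma^2 p^{n-2k})$; Chebyshev already suffices.)

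To finish, note that $\{\F^k \to \{0,1\}\}$ has only $N := 2^{p^k}$ elements, a constant depending on $\eps$ alone. Choose $\gamma$ small enough that $N\gamma/2 < 1/12$. A union bound over the $N$ values of $v$ shows that, once $n \ge n_{\ref{cor:testing3}}(k,\eps)$ is large enough that $N p^{2k-n}/\gamma^2 < 1/100$, with probability at least $99\%$ over $f'$ we have $\big|\mu_{f',k}[v] - \mu_{f,k}[v]\big| \le \gamma$ for all $v$ simultaneously, whence the statistical distance between $\mu_{f',k}$ and $\mu_{f,k}$ is at most $N\gamma/2 < 1/12$. This is exactly what was needed in the first paragraph.

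The one genuinely delicate point is the concentration estimate: one must pin down the probability $\Pr_A[x \in \mathrm{Im}(A)]$ — or at least an $O(p^{k-n})$ upper bound — for the specific distribution of random affine embeddings in use, and then check that the $n$-dependence of the Chebyshev (or McDiarmid) tail actually drives the failure probability below $1\%$ for the now-fixed constants $k = k(\eps)$ and $N = 2^{p^k}$. Everything else is bookkeeping with the triangle inequality.
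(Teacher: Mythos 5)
Your proposal is correct and follows essentially the same route as the paper: reduce via the triangle inequality and \cref{cor:testing2} (applied to $f'$ and $g$) to showing $\mu_{f',k}$ concentrates around its mean $\mu_{f,k}$, bound $\Var_{f'}(\mu_{f',k}[v]) \le p^{2k-n}$ using the fact that restrictions to embeddings with disjoint images are independent, and finish with Chebyshev and a union bound over the $2^{p^k}$ possible restrictions $v$. Your version just spells out the constants (the $1/4 + 1/12 < 1/3$ budget and the choice of $n_{\ref{cor:testing3}}$) slightly more explicitly than the paper does.
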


\begin{proof}
We will show that by choosing $n$ large enough, the distribution $\mu_{f,k}$ and $\mu_{f',k}$ are very close in statistical distance (say, distance $\le 1/12$) with high probability (say, $99\%$). The corollary then follows from \cref{cor:testing2} applied to $f'$ and $g$. Let $v:\F^k \to \{0,1\}$ be a function. By definition
$$
\mu_{f,k}[v] = \E_{A,f'} \Pr[Af'=v] = \E_{f'} \mu_{f',k}[v].
$$
Moreover, for two affine embeddings $A_1,A_2:\F^k \to \F^n$, if their images are disjoint then $A_1 f'$ and $A_2 f'$ are independent. Since the probability over a random choice of $A_1,A_2$ that their images intersect is at most $p^{2k-n}$, we get that
$$
\Var[\mu_{f',k}[v]] \le p^{2k-n} = o_{n}(1).
$$
This means that $\mu_{f',k}[v]=\mu_{f,k}[v]+o_n(1)$ with probability $1-o_n(1)$. The result now follows from applying the union bound over all possible values of $v$.
\end{proof}

We next argue that for two functions $f,g:\F^n \to [0,1]$, for any $k \ge 1$, there exists a $d \ge 1$ such that, if $\|f-g\|_{U^d}$ is small enough, then the statistical distance of $\mu_{f,k},\mu_{g,k}$ is arbitrarily small. This is useful, since it shows that in this case if $g \in \cP$ then $f$ must be close to $\cP$ provided that $k$ is  large enough.

\begin{lemma}\label{lemma:close_sampling_gowers}
For every $\eps>0, k \ge 1$, there exists $\rho>0, d \ge 1$ such that the following holds. If $f,g:\F^n \to [0,1]$ are functions such that $\|f-g\|_{U^{d}} \le \rho$, then the statistical distance between $\mu_{f,k}$ and $\mu_{g,k}$ is at most $\eps$.
\end{lemma}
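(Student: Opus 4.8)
The plan is to express the probability $\mu_{f,k}[v]$ that a random affine embedding $A:\F^k\to\F^n$ produces the $0/1$-function $v$ as a bounded polynomial in the values $\{f(L_a(x_1,\dots,x_k))\}$, where $a$ ranges over $\F^k$ and $L_a$ is the linear form $(1,a_1,\dots,a_k)$, so that $\mu_{f,k}$ is governed by a linear-forms average of the type handled by \cref{lem:cnt}. Concretely, after fixing a representation of a random affine embedding by $x_1,\dots,x_k\in\F^n$ together with a base point $x_0$ (or by absorbing $x_0$ into the forms, as in \cref{claim:cs_kdim}), the point indexed by $a\in\F^k$ is $L_a(x_0,x_1,\dots,x_k)$, and the event $\{Af'=v\}$ for the randomized function $f'$ sampled from $f$ is
$$
\Pr_{f'}[Af'=v] \;=\; \prod_{a\in\F^k}\Big(f(L_a(x))^{v(a)}\,(1-f(L_a(x)))^{1-v(a)}\Big).
$$
Taking expectation over the embedding, $\mu_{f,k}[v]=\E_{x}\prod_{a\in\F^k} g_{v(a)}(L_a(x))$ where $g_0=1-f$, $g_1=f$, both taking values in $[0,1]\subseteq[-1,1]$. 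This is exactly a product over the $p^k$ linear forms $\{L_a\}$, whose Cauchy-Schwarz complexity is at most $s:=p^k$ by \cref{claim:cs_kdim}.

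Next I would compare $\mu_{f,k}[v]$ with $\mu_{g,k}[v]$ by a telescoping (hybrid) argument over the $p^k$ coordinates: writing $f_a=f$ if we have not yet swapped coordinate $a$ and $g$ otherwise, the difference $\mu_{f,k}[v]-\mu_{g,k}[v]$ is a sum of $p^k$ terms, each of which is an average of the form $\E_x (f-g)(L_{a_0}(x))\prod_{a\neq a_0} h_a(L_a(x))$ with all $h_a\in[-1,1]$ and with $(f-g)$ replaced by $\pm(f-g)$ depending on whether the coordinate is a $v(a_0)=1$ or $v(a_0)=0$ slot. By \cref{lem:cnt} applied to this system (whose Cauchy-Schwarz complexity is still at most $s=p^k$, since deleting one form only decreases complexity), each such term is bounded in absolute value by $\|f-g\|_{U^{s+1}}$. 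Hence
$$
\big|\mu_{f,k}[v]-\mu_{g,k}[v]\big| \;\le\; p^k\cdot \|f-g\|_{U^{s+1}}.
$$
Summing over all $v:\F^k\to\{0,1\}$ — there are $2^{p^k}$ of them — gives that the statistical distance between $\mu_{f,k}$ and $\mu_{g,k}$ is at most $2^{p^k}\cdot p^k\cdot\|f-g\|_{U^{s+1}}$. So taking $d:=p^k+1$ and $\rho:=\eps\,/\,(2^{p^k}p^k)$ proves the lemma.

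The only genuinely delicate point is the bookkeeping to fit the situation into \cref{lem:cnt}: one must handle the affine (rather than purely linear) nature of the embedding — either by treating the base point $x_0$ as one more variable and noting the forms $(1,a_1,\dots,a_k)$ in $k+1$ variables have Cauchy-Schwarz complexity $\le p^k$ as in \cref{claim:cs_kdim}, or by translating so that the constant coefficient is absorbed — and one must make sure the linear forms indexed by distinct $a\in\F^k$ are genuinely distinct, which they are. A second minor subtlety is that $\mu_{f,k}$ for $f:\F^n\to[0,1]$ already has the expectation over $f'$ folded in, and one checks that $\E_{f'}\mu_{f',k}[v]$ indeed collapses to the product $\prod_a g_{v(a)}(f(L_a(x)))$ by independence of the bits $f'(\cdot)$; this is immediate. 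Everything else is the routine hybrid/triangle-inequality estimate above, and the dependence of $\rho,d$ on $\eps,k$ (but not on $n$) comes out automatically. The potential annoyance — not an obstacle — is that the bound is doubly exponential in $k$, but the statement asks only for existence of some $\rho>0$ and $d\ge1$, so this is harmless.
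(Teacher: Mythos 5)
Your proposal is correct and follows essentially the same route as the paper: write $\mu_{f,k}[v]=\E_A\prod_{y\in\F^k}f_y(Ay)$ with $f_y\in\{f,1-f\}$ according to $v(y)$, telescope over the $p^k$ coordinates, bound each hybrid term by $\|f-g\|_{U^{p^k+1}}$ via \cref{lem:cnt} and \cref{claim:cs_kdim}, and sum over the $2^{p^k}$ choices of $v$, yielding the same $d=p^k+1$ and $\rho=2^{-p^k}p^{-k}\eps$. The bookkeeping points you flag (absorbing the affine base point into the forms, and collapsing $\E_{f'}$ by independence) are handled in the paper exactly as you suggest.
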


\begin{proof}
Let $A:\F^k \to \F^n$ be a random affine embedding. Consider $v:\F^k \to \{0,1\}$. For $y \in \F^k$, define $f_y(x) = 1 - v(y) - (-1)^{v(y)} f(x)$. The probability that $\mu_{f,k}$ samples $v$ is given by
$$
\mu_{f,k}[v] = \E_{A,f'} \Pr[Af'=v] = \E_{A} \prod_{y \in \F^k} f_y(Ay).
$$
Similarly define $g_y(x) = 1 - v(y) - (-1)^{v(y)} g(x)$ to obtain
$$
\mu_{g,k}[v] = \E_{A} \prod_{y \in \F^k} g_y(Ay).
$$
Let $<$ define an arbitrary ordering on $\F^k$. We rewrite $\mu_{f,k}[v]-\mu_{g,k}[v]$ as a telescopic sum
$$
\mu_{f,k}[v] - \mu_{g,k}[v]= \sum_{z \in \F^k} \E_{A} \left(\prod_{y<z} f_y(Ay)\right) \cdot \left( f_z(Az)-g_z(Az) \right) \cdot \left( \prod_{y>z} g_y(Ay) \right).
$$
We bound each term in the sum. To do so, we will apply \cref{lem:cnt}. Note that the set of linear forms $\{Ay: y \in \F^k\}$ is exactly that given in \cref{claim:cs_kdim} and its Cauchy-Schwarz complexity is at most $p^k$. Note that $\|f_y\|_{\infty} \le 1$. Hence for $d=p^k+1$ we get that
$$
\left| \E_{A} \left(\prod_{y<z} f_y(Ay)\right) \cdot \left( f_z(Az)-g_z(Az) \right) \cdot \left( \prod_{y>z} g_y(Ay) \right)\right| \le \|f_z-g_z\|_{U^{d}} = \|f-g\|_{U^d}.
$$
We conclude that $|\mu_{f,k}[v] - \mu_{g,k}[v]| \le p^k \|f-g\|_{U^d}$ and hence the statistical distance between $\mu_{f,k}$ and $\mu_{g,k}$ is bounded by $2^{p^k}p^{k} \|f-g\|_{U^d}$. The lemma follows for $\rho=2^{-p^k}p^{-k} \eps$.
\end{proof}

The following corollary is immediate.

\begin{corollary}\label{cor:close_to_cP_gowers}
For every $\eps>0$ there exist $d \ge 1, \rho>0$ such that the following holds. Let $f,g:\F^n \to \{0,1\}$ be functions and assume that $g \in \cP$. If $\|f-g\|_{U^d} \le \rho$ then $f$ is $\eps$-close to $\cP$.
\end{corollary}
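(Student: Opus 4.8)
The plan is to combine directly the two statements just established: \cref{cor:testing2}, which converts testability of $\cP$ into a statistical-distance gap between $\mu_{f,k}$ and $\mu_{g,k}$ whenever $f$ is $\eps$-far from $\cP$ and $g \in \cP$, and \cref{lemma:close_sampling_gowers}, which says that a sufficiently small Gowers distance $\|f-g\|_{U^d}$ forces $\mu_{f,k}$ and $\mu_{g,k}$ to be close in statistical distance. The corollary is then obtained by contraposition.

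Concretely, given $\eps > 0$, I would first apply \cref{cor:testing2} to obtain a constant $k = k(\eps) \ge 1$ with the property that $\dist(f, \cP) \ge \eps$ together with $g \in \cP$ implies that the statistical distance between $\mu_{f,k}$ and $\mu_{g,k}$ is at least $1/3$. Next I would feed this $k$ into \cref{lemma:close_sampling_gowers} with its accuracy parameter set to, say, $1/4$; this produces $\rho > 0$ and $d \ge 1$ (both depending only on $\eps$, through $k$) such that $\|f-g\|_{U^d} \le \rho$ guarantees statistical distance at most $1/4$ between $\mu_{f,k}$ and $\mu_{g,k}$.

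With these choices of $d$ and $\rho$, suppose $f, g : \F^n \to \{0,1\}$ with $g \in \cP$ and $\|f-g\|_{U^d} \le \rho$. By \cref{lemma:close_sampling_gowers} the statistical distance between $\mu_{f,k}$ and $\mu_{g,k}$ is at most $1/4 < 1/3$, so by the contrapositive of \cref{cor:testing2} we must have $\dist(f, \cP) < \eps$, i.e.\ $f$ is $\eps$-close to $\cP$, which is exactly what was claimed.

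There is essentially no obstacle here; the corollary really is immediate. The only point requiring a moment's care is to invoke \cref{lemma:close_sampling_gowers} with a target statistical distance strictly below the $1/3$ threshold appearing in \cref{cor:testing2} (I used $1/4$), so that the two estimates can be chained through a strict inequality. Note also that $f, g$ being $\{0,1\}$-valued is a special case of the $[0,1]$-valued setting of \cref{lemma:close_sampling_gowers}, so the lemma applies verbatim, and that all parameters produced depend only on $\eps$ (and the fixed data $p, R, \cP$), as required.
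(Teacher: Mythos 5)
Your proposal is correct and matches the paper's intent: the paper declares the corollary immediate, precisely because chaining \cref{cor:testing2} (with its $1/3$ statistical-distance gap) and \cref{lemma:close_sampling_gowers} (invoked with a target below $1/3$) yields the claim by contraposition, exactly as you did. No gaps.
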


Structured parts obtained from the decomposition theorems are of the form $f(x) = \Gamma(P_1(x),\ldots,P_C(x))$ where $P_1,\ldots,P_C$ are polynomials. We would argue that if they have large enough rank, then $\mu_{f,k}$ essentially depends just on $\Gamma$ and the degrees and depths of $P_1,\ldots,P_C$, and not on the specific polynomials.

\begin{lemma}\label{lemma:mu_regular}
For any $\eps>0$ and $k, d \ge 1$, there exists $r=r_{\ref{lemma:mu_regular}}(k,d,\eps):\N \to \N$ such the following holds.
Let $P_1,\ldots,P_C$ be an $r$-regular factor over $\F^n$ of degree at most $d$. Let $Q_1,\ldots,Q_C$ be an $r$-regular factor over $\F^m$ of degree at most $d$. Assume that both $P_i,Q_i$ have degree $d_i \le d$ and depth $k_i$, for all $i \le C$. Let $\Gamma:\prod_{i=1}^C \U_{k_i+1} \to [0,1]$ be a function. Let $f:\F^n \to [0,1]$ be defined as $f(x)=\Gamma(P_1(x),\ldots,P_C(x))$ and $g:\F^m \to [0,1]$ be defined as $g(x)=\Gamma(Q_1(x),\ldots,Q_C(x))$. Then $\mu_{f,k}$ and $\mu_{g,k}$ have statistical distance at most $\eps$.
\end{lemma}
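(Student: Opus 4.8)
The plan is to express the sampling probability $\mu_{f,k}[v]$, for a fixed target function $v:\F^k\to\{0,1\}$, as an expectation over a random affine embedding $A:\F^k\to\F^n$ of a product $\prod_{y\in\F^k} v_y\bigl(P_1(Ay),\dots,P_C(Ay)\bigr)$, where $v_y(\cdot)$ encodes whether the structured value $\Gamma(\cdot)$ agrees with $v(y)\in\{0,1\}$ (more precisely, $v_y(b)=\Gamma(b)$ if $v(y)=1$ and $v_y(b)=1-\Gamma(b)$ if $v(y)=0$, so that $\E_{f'}\Pr[Af'=v]=\E_A\prod_y v_y(P(Ay))$). Writing each bounded function on $\prod_i\U_{k_i+1}$ via its Fourier expansion, $v_y(b)=\sum_{\lambda\in\prod_i\{0,\dots,p^{k_i+1}-1\}} \widehat{v_y}(\lambda)\,\expo{\sum_i\lambda_i b_i}$, and expanding the product over all $p^k$ points $y$, we reduce $\mu_{f,k}[v]$ to a bounded linear combination of terms of the form $\E_A\expo{\sum_{y\in\F^k}\sum_i \lambda_{i,y} P_i(Ay)}$. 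Each such exponential is $\expo{Q(A)}$ where $Q$ is an $\F^n$-combination of the polynomials $P_i$ evaluated along the $p^k$ linear forms describing the affine subspace $A(\F^k)$.

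The key step is then to show that each such term depends (up to a small error) only on $\Gamma$, the $d_i$, and the $k_i$, and not on the particular $P_i$ or $Q_i$. This is where the $r$-regularity hypothesis and the machinery from the background section enter. I would argue that for each fixed coefficient vector $(\lambda_{i,y})$, the combined polynomial $\sum_{y,i}\lambda_{i,y}P_i(Ay)$ is, for a $1-o(1)$ fraction of embeddings $A$, either identically zero (when all the $\lambda_{i,y}$ are ``trivial'' in the appropriate modular sense, in which case the term contributes the same explicit constant regardless of which polynomials we used) or of high rank, in which case by Theorem~\ref{rankreg} its exponential has expectation $o(1)$ over the remaining randomness. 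Concretely: after conditioning on the atom pattern of the $P_i$ restricted to $A(\F^k)$ — which by \cref{atomsize} is close to equidistributed because a high-rank factor restricted to a random large-enough subspace stays high rank, via \cref{lem:rankrestrict} — the computation of each Fourier term becomes a finite, $n$-independent quantity determined solely by $\Gamma$ and the degree/depth data. The same computation applied to $Q_1,\dots,Q_C$ over $\F^m$ yields the same answer, so $|\mu_{f,k}[v]-\mu_{g,k}[v]|$ is small for each $v$; summing over the at most $2^{p^k}$ choices of $v$ gives the claimed bound on statistical distance, where $r=r_{\ref{lemma:mu_regular}}(k,d,\eps)$ is chosen large enough to drive all the rank-based error terms below the target.

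The main obstacle I expect is bookkeeping the interaction between the Cauchy--Schwarz complexity of the system of $p^k$ linear forms $\{Ay:y\in\F^k\}$ (at most $p^k$, by \cref{claim:cs_kdim}) and the rank threshold needed for the restricted factor: one must choose $r$ large enough — as a function growing appropriately in $|\cB|=C$ — so that after restricting to the random $m$-dimensional subspace (invoking \cref{lem:rankrestrict} for each nontrivial linear combination of the $P_i$) the resulting factor on $A(\F^k)$ is still regular enough for \cref{atomsize} and \cref{rankreg} to apply with error $\ll \eps\,2^{-p^k}p^{-k}\|\cB\|^{-1}$. A secondary subtlety is that the $P_i$ (and $Q_i$) are non-classical of possibly positive depth, so the Fourier characters live on $\U_{k_i+1}$ rather than $\F$, and the notion of ``trivial coefficient'' is the modular condition from \cref{regular} ($\lambda_i\equiv 0\bmod p^{k_i+1}$); handling the mixed-depth linear combinations correctly — in particular that a combination $\sum_i\lambda_i P_i$ with not-all-trivial coefficients has large rank with respect to $\max_i\deg(\lambda_i P_i)$ — is exactly what regularity of the factor is designed to guarantee, so this reduces to invoking \cref{regular} and \cref{lem:rankrestrict} cleanly rather than to any genuinely new estimate.
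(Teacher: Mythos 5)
Your setup matches the paper's: fix $v$, write $\mu_{f,k}[v]=\E_A\prod_{y\in\F^k}\Gamma_y(P_1(Ay),\dots,P_C(Ay))$, Fourier-expand each $\Gamma_y$ over $\prod_i\U_{k_i+1}$, and reduce to exponential averages $\E_A\,\expo{\sum_{y,i}\lambda_{i,y}P_i(Ay)}$ with coefficients depending only on $\Gamma$ and the depths. The gap is in the key step. The statement you need is: for an $r$-regular factor, every such combination over the affine linear forms $\{Ay:y\in\F^k\}$ is either identically zero as a function of $A$, or has exponential average at most $\gamma(C)$, \emph{and} which of the two cases occurs is determined solely by the degrees and depths (so that the $P_i$'s and $Q_i$'s fall into the same case for every coefficient vector). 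This is precisely the strong near-orthogonality/equidistribution theorem over linear forms, Theorem~3.3 of~\cite{BFHHL13}, which the paper invokes as a black box; it is the main external ingredient and is not recoverable from the tools you cite. \cref{rankreg} bounds $|\E_x\expo{P(x)}|$ for a single high-rank polynomial in the variable $x\in\F^n$, but here the average is over the parameters of $A$, i.e.\ over $(x_0,\dots,x_k)\in(\F^n)^{k+1}$, and one must show the \emph{composed} polynomial $\sum_{y,i}\lambda_{i,y}P_i(x_0+\sum_j y_jx_j)$ is high rank (or handle its vanishing) — that is the content of the BFHHL theorem, not of \cref{rankreg}. Likewise \cref{lem:rankrestrict} concerns restrictions to a random $m$-dimensional subspace with $m$ large, not the joint distribution of the factor over the $p^k$ correlated points of a fixed-dimension-$k$ subspace, and your proposed conditioning ``on the atom pattern of the $P_i$ restricted to $A(\F^k)$, which by \cref{atomsize} is close to equidistributed'' is circular: \cref{atomsize} gives equidistribution of $\cB(x)$ at a single uniform point, whereas equidistribution of the whole atom pattern along $A(\F^k)$ is essentially equivalent to the lemma being proved.

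A secondary but real error: you characterize the ``identically zero'' case by the modular triviality condition of \cref{regular} ($\lambda_{i}\equiv 0 \bmod p^{k_i+1}$). That is the right notion for a combination $\sum_i\lambda_iP_i$ as a polynomial on $\F^n$, but for combinations composed with the linear forms it is not: because the values $P_i(Ay)$ at different points $y$ of the subspace satisfy derivative-type constraints, a combination $\sum_{y,i}\lambda_{i,y}P_i(Ay)$ can vanish identically even when individual coefficients are nontrivial (already for a classical linear $P$). The proof needs that this vanishing set of coefficient vectors depends only on the degree/depth data — again exactly what Theorem~3.3 of~\cite{BFHHL13} supplies. So the right fix is simply to cite that theorem at your key step (with $\gamma(C)\le p^{-dC}2^{-p^k}\eps$, say), after which your outline goes through; as written, the proposal's claimed reduction to \cref{regular}, \cref{rankreg}, \cref{atomsize} and \cref{lem:rankrestrict} does not close.
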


\begin{proof}
Let $A:\F^k \to \F^n$ be a random affine embedding. For $y \in \F^k$ define $\Gamma_y:\prod_{i=1}^C \U_{k_i+1} \to [0,1]$ as
$$
\Gamma_y(z_1,\ldots,z_C) = 1 - v(y) - (-1)^{v(y)} \Gamma(z_1,\ldots,z_C).
$$
The probability that $\mu_{f,k}$ samples $v:\F^k \to \{0,1\}$ is
$$
\mu_{f,k}[v] = \E_A \prod_{y \in \F^k} \Gamma_y(P_1(Ay),\ldots,P_C(Ay)).
$$
Expanding each $\Gamma_y$ in the Fourier basis gives
$$
\Gamma_y(z_1,\ldots,z_C) = \sum_{\alpha \in \prod_{i=1}^C \U_{k_i+1}} \hat{\Gamma}_y(\alpha) \expo{\sum_{j=1}^C \alpha_j z_j}.
$$
Note that $\|\widehat{\Gamma_y}\|_{\infty} \le 1$. Plugging this into the equation for $\mu_{f,k}[v]$ and expanding gives
$$
\mu_{f,k}[v] = \sum_{\alpha: \F^k \to \prod_{i=1}^C \U_{k_i+1}} c_{\alpha} \E_A \expo{\sum_{y \in \F^k} \sum_{j=1}^C \alpha(y)_j P_j(Ay)},
$$
where $c_{\alpha} := \prod_{y \in \F^k} \hat{\Gamma}_y(\alpha(y))$. Note that $|c_{\alpha}| \le 1$ and that it depends only on $\Gamma$ and the depths of the polynomials, and not on the specific polynomials. We will apply the same expansion to $g$ and obtain
$$
\mu_{g,k}[v] = \sum_{\alpha: \F^k \to \prod_{i=1}^C \U_{k_i+1}} c_{\alpha} \E_A \expo{\sum_{y \in \F^k} \sum_{j=1}^C \alpha(y)_j Q_j(Ay)}.
$$

We next apply Theorem 3.3 in~\cite{BFHHL13}. It states that linear combination of systems of high rank polynomials evaluated over affine linear forms, such as
$\sum_{y \in \F^k} \sum_{j=1}^C \alpha(y)_j P_j(Ay)$, are either identically zero or very close to uniformly distributed. Concretely, the theorem states that for any parameter $\gamma(C)>0$, if we choose $r(C)$ large enough, then for any polynomials $P_1,\ldots,P_C$ of degrees $d_1,\ldots,d_C$ and depths $k_1,\ldots,k_C$ and rank at least $r(C)$, either
$$
\sum_{y \in \F^k} \sum_{j=1}^C \alpha(y)_j P_j(Ay) \equiv 0
$$
or
$$
\left| \E_A \expo{\sum_{y \in \F^k} \sum_{j=1}^C \alpha(y)_j P_j(Ay)} \right| \le \gamma(C).
$$
Note that we can apply this theorem both to $P_1,\ldots,P_C$ and to $Q_1,\ldots,Q_C$, obtaining the same results. Hence, we conclude that
$$
|\mu_{f,k}[v] - \mu_{g,k}[v]| \le \left(\prod_{i=1}^C p^{k_i+1}\right) \gamma(C) \le p^{dC} \gamma(C).
$$
Choosing $\gamma(C) := p^{-dC} 2^{-p^{k}} \eps$ we obtain that $\mu_{f,k}$ and $\mu_{g,k}$ have statistical distance at most $\eps$.
\end{proof}

\section{Proof of \cref{thm:main}}
To simplify the presentation we prove the theorem for the special case of $R=2$. The proof easily generalizes to $R > 2$, by decomposing every function $f:\F^n \to [R]$ as $f=f^{(1)}+\ldots f^{(R)}$ where $f^{(i)}$ is the indicator function of the set $\{x : f(x)=i\}$.

Let $\cP \subseteq \{f:\F^n \to \{0,1\}: n \in \N\}$ be a testable affine-invariant property, and let $\delta, \eps>0$ be the parameters given in~\cref{thm:main}. We will show that a local test can distinguish between functions which are $\delta$-close to $\cP$ to functions which are $\delta+\eps$ far from $\cP$.

%
Let $m=m(\cP, \delta,\eps)$ be a sufficiently large integer to be determined later, and let $A:\F^{m} \to \F^n$ be a random affine embedding. We will establish the following two statements.
\begin{itemize}
 \item[{\bf (i)}] If $f:\F^n \to \{0,1\}$ is $\delta$-close to $\cP$, then
$$\Pr[\dist(Af,\cP) < \delta+(\eps/2)]>2/3.$$

 \item[{\bf (ii)}] If $f:\F^n \to \{0,1\}$ is $(\delta+\eps)$-far from $\cP$, then
$$\Pr[\dist(Af,\cP) > \delta+(\eps/2)]>2/3.$$
\end{itemize}

\subsection{Proof of (i)}
Since  $f$ is $\delta$-close to $\cP$, there exists a function $g\in \cP$ such that $\alpha:=\|f-g\|_1 \le \delta$. Note that
$$\E_A[\|Af-Ag\|_1]= \E_{x \in \F^m, A}[|Af(x)-Ag(x)|]=  \E_{x \in \F^n}[|f(x)-g(x)|]=\alpha.$$
Furthermore
\begin{eqnarray*}
\E_A\left[\|Af-Ag\|_1^2\right]&=&\E_{x,y \in \F^m, A}[|Af(x)-Ag(x)||Af(y) - Ag(y)|] \le \alpha^2+ \Pr[x=y] \le  \alpha^2 + \frac{1}{p^m}.
\end{eqnarray*}
Hence $\mathbf{Var}[\dist(Af,\cP)] \le \frac{1}{p^{m}}$, and thus by Chebyshev's inequality
$$\Pr[\dist(Af,\cP) \ge \delta+(\eps/2)]< \frac{4}{p^{m} \eps^2}< 1/3,$$
provided that $m$ is sufficiently large.

\subsection{Proof of (ii)}

We apply \cref{cor:testing3} and \cref{cor:close_to_cP_gowers} with parameter $\eps/8$ to obtain $k,d \ge 1$ and $\rho>0$, so that the following two statements hold.
\begin{itemize}
\item If $f:\F^n \to \{0,1\}$, $g:\F^m \to \{0,1\}$ are functions, $g \in \cP$, and $\mu_{f,k}$ and $\mu_{g,k}$ have statistical distance at most $1/4$, then $f$ is $(\eps/8)$-close to $\cP$.
\item If $f,g:\F^n \to \{0,1\}$ are functions and $\|f-g\|_{U^{d}} < \rho$, then $\mu_{f,k}$ and $\mu_{g,k}$ have statistical distance at most $1/100$.
\end{itemize}

Let $f:\F^n \to \{0,1\}$ be a function which is $(\delta+\eps)$-far from $\cP$. We start by decomposing $f$ to a structured part and a pseudo-random part.
Our decompositions will use a number of parameters. We already fixed $\eps,\delta$ and $d$. Let $\gamma>0$, $\eta_0, \eta_1: \N \to \R^{+}$ and $r_0, r_1: \N \to \N$ be parameters to be determined later. For the reader who wishes to verify that these definitions are not cyclical, we note that $\gamma$ will depend just on $\eps,\delta,d$; that $\eta_1,r_1$ will depend just on $\eps,\delta,d,\gamma$; and that $\eta_0,r_0$ will depend on $\eps,\delta,d,\gamma,\eta_1,r_1$.

We apply \cref{thm:reg} to $f$ with parameters $d, \gamma, r_0, \eta_0$ and a trivial initial factor, and obtain an $r_0$-regular polynomial factor $\cB_0$ of degree less than $d$, and a decomposition
$$
f=f_1+f_2+f_3,
$$
where $f_1=\E[f|\cB_0],\|f_2\|_2 < \gamma,\|f_3\|_{U^d} < \eta_0(|\cB_0|)$. Next, we project this decomposition to $Af$. Suppose that $\cB_0$ is defined by polynomials $P_1,\ldots,P_C$. Denote by $Q_i:=AP_i$ for $i=1,\ldots,C$, and let $\tilde{\cB}_0$ be the polynomial factor over $\F^m$ defined by the $Q_i$'s. We decompose
$$
Af = Af_1 + Af_2 + Af_3.
$$

The following claim argues that $Af_1,Af_2,Af_3$ have similar properties to $f_1,f_2,f_3$ with high probability, assuming that $m$ is chosen large enough and that $r_0$ is chosen to grow fast enough.

\begin{claim}
\label{claim:1}
Assume that $r_0(C) \ge r_{\ref{rankreg}}(d,1/(2 p^{dC}))$.
If $m \ge m_{\ref{claim:1}}(d,\gamma,r_0,\eta_0)$, then the following events hold with probability at least $99\%$.
\begin{itemize}
 \item[$(\mathbf{E}_1)$] The polynomials $Q_1,\ldots,Q_C$ have the same degrees and depths as $P_1,\ldots,P_C$, respectively, and $\tilde{\cB}_0$ is $r_0$-regular.

 \item[$(\mathbf{E}_2)$] We have $\left\|Af_2\right\|_2 \le 2\gamma$ and $\left\|Af_3\right\|_{U^d} \le 2\eta_0(|\cB_0|)$.

 \item[$(\mathbf{E}_3)$]  $\left\|\E[Af| \tilde{\cB}_0]-A\E[f|\cB_0]\right\|_\infty \le \gamma$.

\end{itemize}
\end{claim}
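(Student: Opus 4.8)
The plan is to establish the three events separately, each time reducing to a tool already available in the paper, and then to union-bound over the finitely many failure events. Throughout, recall that $\cB_0$ has complexity $C$ and degree $<d$, and that $Q_i = AP_i$ with $\tilde{\cB}_0$ defined by $Q_1,\dots,Q_C$; the number $C$ is bounded by a constant depending only on $\gamma,r_0,\eta_0,d$ since $\cB_0$ came from \cref{thm:reg}, so ``union bound over $C$ things'' is harmless.

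For $(\mathbf{E}_1)$, I would apply \cref{lem:rankrestrict} to each of the $C$ polynomials $P_i$. Since $\cB_0$ is $r_0$-regular, each $P_i$ has rank larger than $r_0(C)$, so in particular rank $\ge r$ for whatever threshold $r$ is needed; by \cref{lem:rankrestrict} (with its $\eps$ set to $1/(100C)$ and its $r$ large enough that $r_0(C)$ exceeds it — this is where $r_0$ is required to grow fast enough) a random embedding $A$ preserves the degree and depth of $P_i$ and keeps $\msf{rank}_d(Q_i)\ge r$ except with probability $1/(100C)$. That handles each individual polynomial staying high-rank; to get that the whole factor $\tilde{\cB}_0$ is $r_0$-regular, I would note that regularity of $\cB_0$ means every nonzero integer combination $Q = \sum_i \lambda_i P_i$ has high rank, and there are only $\prod_i p^{k_i+1} \le p^{dC}$ such combinations (modulo the relevant moduli), so I would apply \cref{lem:rankrestrict} to each such $Q$ as well — still finitely many events, each of tiny probability — provided $r_0$ is chosen to dominate the $m_{\ref{lem:rankrestrict}}$-scale thresholds. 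Then choose $m \ge m_{\ref{lem:rankrestrict}}$ for the worst of these finitely many applications. The hypothesis $r_0(C) \ge r_{\ref{rankreg}}(d,1/(2p^{dC}))$ is used so that, conditioned on $(\mathbf{E}_1)$, \cref{atomsize} applies to $\tilde{\cB}_0$ and it has the full $\|\tilde\cB_0\| = \prod_i p^{k_i+1}$ atoms with near-uniform masses — this is what makes $\tilde\cB_0$ a genuine counterpart of $\cB_0$, needed for $(\mathbf{E}_3)$.

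For $(\mathbf{E}_2)$: the quantity $\|Af_2\|_2^2 = \E_{x\in\F^m}|Af_2(x)|^2$ has expectation over $A$ equal to $\|f_2\|_2^2 < \gamma^2$, since a random point of a random affine subspace is a uniform point of $\F^n$; and its variance over $A$ is $O(p^{-m})$ by the pairwise-independence / ``two points of a random subspace collide with probability $\le p^{-m}$'' argument used in the proof of (i) and of \cref{cor:testing3}. Chebyshev then gives $\|Af_2\|_2 \le 2\gamma$ except with probability $o_m(1)$. For $\|Af_3\|_{U^d}$, I would similarly compute $\E_A[\|Af_3\|_{U^d}^{2^d}]$: expanding the Gowers norm, $\|Af_3\|_{U^d}^{2^d}$ is an average of products of $2^d$ shifted copies of $f_3$ evaluated at points of the form $A(\text{affine combination})$, and since $A$ is affine these are again (correlated) uniform points of $\F^n$, so the expectation over $A$ is exactly $\|f_3\|_{U^d}^{2^d} < \eta_0(|\cB_0|)^{2^d}$; bounding the variance over $A$ by $o_m(1)$ again via the collision bound, Chebyshev gives $\|Af_3\|_{U^d} \le 2\eta_0(|\cB_0|)$ except with probability $o_m(1)$, once $m$ is large relative to $\eta_0(|\cB_0|)$ (note $|\cB_0| = C$ is already fixed at this point, so this is legitimate).

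For $(\mathbf{E}_3)$: condition on $(\mathbf{E}_1)$, so $\tilde\cB_0$ is a polynomial factor over $\F^m$ of the same complexity $C$, degree $<d$, and with rank $\ge r_{\ref{rankreg}}(d,\cdot)$, hence (by \cref{atomsize}) with all $\prod_i p^{k_i+1}$ atoms present and nearly equidistributed; the same holds for $\cB_0$ itself. The function $A\E[f|\cB_0]$ is constant on each atom of $\tilde\cB_0$ (because $\cB_0(Ax)$ is determined by $\tilde\cB_0(x) = (Q_1(x),\dots,Q_C(x)) = (P_1(Ax),\dots,P_C(Ax))$), and its value on atom $b$ is the value of $f_1 = \E[f|\cB_0]$ on the corresponding atom of $\cB_0$. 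Meanwhile $\E[Af|\tilde\cB_0]$ on atom $b$ is the average of $f(Ax)$ over $x$ in that atom. So on each atom $b$ the difference is $\E_{x\in\tilde\cB_0^{-1}(b)}[f(Ax)] - \E_{y\in\cB_0^{-1}(b)}[f(y)]$. Both are averages of the single fixed $\{0,1\}$-function $f$ over (essentially) uniform samples from the atom $\cB_0^{-1}(b)$ of $\F^n$: the right-hand one is exact, and the left-hand one is over the image $A(\tilde\cB_0^{-1}(b))$, a random affine-subspace slice of that atom. I would bound $|\E_{x\in\tilde\cB_0^{-1}(b)}[f(Ax)] - \E_{y\in\cB_0^{-1}(b)}[f(y)]|$ in expectation and variance over $A$ using \cref{atomsize} (to control the relative sizes of the atoms and to know the slice lands almost-uniformly in the atom) together with the collision bound again; Chebyshev plus a union bound over the $\le p^{dC}$ atoms $b$ gives that the $\ell_\infty$ difference is $\le \gamma$ except with probability $o_m(1)$.

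I expect $(\mathbf{E}_3)$ to be the main obstacle, because it requires relating conditional expectations over two different factors living over different ambient spaces, and one must be careful that the atoms of $\tilde\cB_0$ genuinely correspond to those of $\cB_0$ — this is exactly why $(\mathbf{E}_1)$ and the rank hypothesis (via \cref{atomsize}) are needed as a prerequisite, and why the claim bundles them together. Finally, choosing all the failure probabilities above to be at most $1/300$ each (or $1/(300\cdot p^{dC})$ per atom, etc.), the union bound over the three events and the finitely many sub-events within each gives the overall $99\%$ guarantee, for $m \ge m_{\ref{claim:1}}(d,\gamma,r_0,\eta_0)$ chosen as the maximum of the finitely many thresholds encountered.
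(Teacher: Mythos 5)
Your proposal matches the paper's proof in all essentials: $(\mathbf{E}_1)$ via \cref{lem:rankrestrict} (your union bound over the nontrivial combinations $\sum\lambda_i P_i$ is exactly what the paper's terse ``setting the parameters properly'' stands for), $(\mathbf{E}_2)$ by first/second moment computations of $\|Af_2\|_2$ and $\|Af_3\|_{U^d}^{2^d}$ plus Chebyshev, and $(\mathbf{E}_3)$ by a per-atom moment computation using the rank hypothesis (via \cref{rankreg}/\cref{atomsize}) to lower-bound atom sizes, then Chebyshev and a union bound over the at most $p^{dC}$ atoms. The only nit is that $\E_A\|Af_3\|_{U^d}^{2^d}$ equals $\|f_3\|_{U^d}^{2^d}$ only up to an $o_m(1)$ error from degenerate configurations (as in the paper), not exactly, but this does not affect the argument.
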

\begin{proof}
\cref{lem:rankrestrict} shows that the probability that  $(\mathbf{E}_1)$ does not hold can be made arbitrarily small by setting the parameters properly.
To prove that $(\mathbf{E}_2)$ holds with high probability, we will show that
 $\Pr\left[\left\|Af_3\right\|_{U^d} > \|f_3\|_{U^d}+o_{m}(1) \right]=o_{m}(1)$ and
$\Pr\left[\|Af_2\|_2 > \|f\|_2 + o_{m}(1) \right]=o_{m}(1)$. We only prove the former, as the proof of the latter is easy and similar.
To do so, We will establish that $\E_A \left\|Af_3\right\|_{U^d}^{2^{d}} = \left\|f_3\right\|_{U^d}^{2^{d}}+o_{m}(1)$ and that $\mathbf{Var}(\left\|Af_3\right\|_{U^d}^{2^{d}})=o_{m}(1)$, and apply Chebyshev's inequality.

We first establish the first moment calculation. Let $y_1,\ldots,y_{d},x$ be uniform random variables taking values in $\F^m$. Note that if $y_1,\ldots,y_d,x$ are linearly independent then
 $Ay_1,\ldots,Ay_{d},Ax$ are linearly independent uniform random variables taking values in $\F^n$. The probability that they are not linearly
 independent is at most $p^{d+1-m}=o_{m}(1)$. Combining this with the fact that $\|f_3\|_{\infty} \le 1$ we get
$$
\E_A  \left\|Af_3\right\|_{U^d}^{2^{d}}  = \Ex{A \Delta_{y_1,\ldots,y_{d}}f_3(x)}= \Ex{\Delta_{Ay_1,\ldots,Ay_{d}}f_3(Ax)} = \|f_3\|_{U^d}^{2^{d}} \pm p^{d+1-m}.$$

We proceed to the second moment calculation. Let $y'_1,\ldots,y'_{d},x'$ be independent uniform random variables taking values in $\F^m$ uniformly and independently of $y_1,\ldots,y_{d},x$.  Similarly, if $y_1,\ldots,y_d,x,y'_1,\ldots,y'_d,x'$ are linearly independent then $Ay_1,\ldots,Ay_d,Ax,Ay'_1,\ldots,Ay'_d,Ax'$ are linearly independent uniform random variables taking values in $\F^n$. Hence same as before,
\begin{eqnarray*}
\E_A  \left\|Af_3\right\|_{U^d}^{2^{d+1}}  &=& \Ex{\left(A \Delta_{y_1,\ldots,y_{d}}f_3(x)\right) \left(A\Delta_{y'_1,\ldots,y'_{d}}f_3(x') \right)}
= \|f_3\|_{U^d}^{2^{d+1}} \pm p^{2d+2-m}.
\end{eqnarray*}
Thus, $\mathbf{Var}(\left\|Af_3\right\|_{U^d}^{2^{d}}) = o_{m}(1)$ and by Chebyshev's inequality $(\mathbf{E}_2)$ holds with high probability assuming $m$ is chosen large enough.

We next establish that $(\mathbf{E}_3)$ holds with high probability. Similarly to the previous calculation, this will also be shown by performing a first and second moment calculation and applying Chebyshev's inequality.
Consider an atom $b_0 \in \T^C$ of $\cB_0$.  Since $\tilde{\cB}_0$ is defined by $AP_1,\ldots,AP_C$, we have that $\E[Af(y) | \tilde{\cB}_0(y)=b_0] = \E[Af(y) | \cB_0(Ay)=b_0]$, hence
$$\E_A[\E[Af(y) | \cB_0(Ay)=b_0 ]]=\E_{x \in \F^n}[f(x) | \cB_0(x)= b_0],$$
and
\begin{eqnarray*}
\E_A\left[\E[Af(y) | \cB_0(Ay)=b_0 ]^2\right]&=&\E_{y_1,y_2 \in \F^m, A}[f(Ay_1)f(Ay_2) |\cB_0(Ay_1)=\cB_0(Ay_2)= b_0] \\ &=& \E[f(x) | \cB_0(x)= b_0]^2 \pm \Pr[y_1=y_2|\cB_0(Ay_1)=\cB_0(Ay_2)= b_0]\\
&=&  \E[f(x) | \cB_0(x)= b_0]^2 \pm p^{-m} / (|\{x \in \F^m: \cB_0(x)=b_0\}| p^{-n})^2 \\
&=&  \E[f(x) | \cB_0(x)= b_0]^2 \pm 4|\cB_0|^2 p^{-m},
\end{eqnarray*}
where in the last step we applied \cref{rankreg} and the assumption on the rank of $\cB_0$ to lower bound the size of the atom defined by $b_0$.
So by Chebyshev's inequality $\E[Af(y) | \tilde{\cB}_0(y)=b_0]$ is concentrated around $\E_{x}[f(x) | \cB_0(x)= b_0]$. Since the number of atoms is bounded by $\|\cB_0\|$, we obtain that with probability $1-o_{m}(1)$ this holds of every atom.
\end{proof}

\cref{claim:1} shows that $99\%$ of the affine embeddings $A$ satisfy $(\mathbf{E}_1),(\mathbf{E}_2),(\mathbf{E}_3)$. Let us assume towards contradiction that $\Pr_A[\dist(Af,\cP) > \delta+(\eps/2)] \le 1/3$. So, we can fix an embedding $A$ so that $(\mathbf{E}_1),(\mathbf{E}_2),(\mathbf{E}_3)$ hold, and find a function $h:\F^m \to \{0,1\}$ in $\cP$ for which $\dist(Af,h) \le \delta+(\eps/2)$. We fix $A$ and $h$ for the reminder of the proof.

The next step is to decompose $h$. However, we wish to maintain the regular factor $\tilde{\cB}_0$ we obtained for $Af$. So, we apply \cref{thm:reg} to $h$ with parameters $d, \gamma, r_1, \eta_1$ and initial factor $\tilde{\cB}_0$, and obtain an $r_1$-regular polynomial factor $\tilde{\cB}_1$ of degree $d$, and a decomposition
$$
h=h_1+h_2+h_3,
$$
where $h_1=\E[h|\tilde{\cB}_1],\|h_2\|_2 < \gamma,\|h_3\|_{U^d} < \eta_1(|\cB_1|)$. Furthermore, we will assume that $r_0$ is much larger than $r_1$, so that by \cref{thm:reg} we get that $\tilde{\cB}_1$ is a syntactic refinement of $\tilde{\cB}_0$ (which we recall that by \cref{claim:1} is $r_0$-regular).
Concretely, this will require us to assume that $r_0(C) \ge r_{\ref{thm:reg}}(\eta_1,  r_1, \gamma, C, d)$. So, $\tilde{\cB}_1$ is defined by polynomials $Q_1,\ldots,Q_{C'}$ for a constant $C' > C$, where we recall that $\tilde{\cB}_0$ was defined by $Q_1,\ldots,Q_C$. Our construction so far guarantees that for $i \le C$ we have that $Q_i = A P_i$ and that $P_i,Q_i$ have the same depth and degree. We would like to guarantee this also for $i>C$. That is, we would like to find polynomials $P_i$ for $C<i \le C'$ defined over $\F^n$ for which $Q_i = AP_i$ as well, and such that $P_1,\ldots,P_{C'}$ is of high rank.

Let $A':\F^n \to \F^m$ be any affine transformation satisfying $A A'=I_m$. For $i=C+1,\ldots,C'$ set $P_{i} := A' Q_{i}$. Note that applying an affine transformation cannot increase degree, depth, or rank. Hence, since $P_i = A' Q_i$ and $Q_i = A P_i$ we must have that $P_i,Q_i$ have the same degree and depth for $C<i \le C'$. Moreover, by $(\mathbf{E}_1)$ we know that  this also holds for $i \le C$, hence it holds for all $1 \le i \le C'$. Furthermore, since by construction $Q_1,\ldots,Q_{C'}$ are $r_1$-regular than so are $P_1,\ldots,P_{C'}$. We denote by $\cB_1$ the polynomial factor defined by $P_1,\ldots,P_{C'}$. The following claim summarizes the notation and the facts we established so far.

\begin{claim}\label{claim:2}
The factor $\cB_0$ is an $r_0$-regular factor over $\F^n$ defined by $P_1,\ldots,P_C$.  The factor $\tilde{\cB}_0$ is an $r_0$-regular factor over $\F^m$ defined by $Q_1,\ldots,Q_C$. The factor $\cB_1$ is an $r_1$-regular factor over $\F^n$ defined by $P_1,\ldots,P_{C'}$.  The factor $\tilde{\cB}_1$ is an $r_1$-regular factor over $\F^m$ defined by $Q_1,\ldots,Q_{C'}$. We further have:
\begin{itemize}
\item $Q_i = AP_i$.
\item $P_i,Q_i$ have the same degree and depth.
\item $f=f_1+f_2+f_3$ where $f_1 = \E[f|\cB_0], \|f_2\|_2 \le \gamma, \|f_3\|_{U^d} \le \eta_0(C)$.
\item $h:\F^m \to \{0,1\}$ is a function in $\cP$ to $Af$ for which $\dist(Af,h) \le \delta+\eps/2$.
\item $h=h_1+h_2+h_3$ where $h_1 = \E[h|\tilde{\cB}_1], \|h_2\|_2 \le \gamma, \|h_3\|_{U^d} \le \eta_1(C')$.
\end{itemize}
\end{claim}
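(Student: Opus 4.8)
Since \cref{claim:2} merely recollects constructions and estimates carried out in the preceding paragraphs, the plan is to reassemble them in one place while checking that nothing is invoked before it is established. First I would recall that the factor $\cB_0$ and the splitting $f=f_1+f_2+f_3$ come from applying \cref{thm:reg} to $f$ with parameters $d,\gamma,r_0,\eta_0$ and a trivial initial factor; this immediately yields $f_1=\E[f|\cB_0]$, $\|f_2\|_2\le\gamma$, $\|f_3\|_{U^d}\le\eta_0(|\cB_0|)=\eta_0(C)$, and the $r_0$-regularity of $\cB_0$. Then, with $Q_i:=AP_i$ and $\tilde{\cB}_0$ the resulting factor over $\F^m$, I would invoke event $(\mathbf{E}_1)$ of \cref{claim:1} --- which holds for the fixed ``good'' embedding $A$ --- to conclude that $\tilde{\cB}_0$ is $r_0$-regular and that each $Q_i$ has the same degree and depth as $P_i$, with $Q_i=AP_i$ by construction.

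Next I would recall how $h$ enters. Under the contradiction hypothesis $\Pr_A[\dist(Af,\cP)>\delta+\eps/2]\le 1/3$, while \cref{claim:1} gives $(\mathbf{E}_1)$--$(\mathbf{E}_3)$ for $99\%$ of embeddings, one can fix an $A$ satisfying these three events together with $\dist(Af,\cP)\le\delta+\eps/2$, and then pick $h\in\cP$ with $\dist(Af,h)\le\delta+\eps/2$. Applying \cref{thm:reg} to $h$ with parameters $d,\gamma,r_1,\eta_1$ and initial factor $\tilde{\cB}_0$ gives $h=h_1+h_2+h_3$ with $h_1=\E[h|\tilde{\cB}_1]$, $\|h_2\|_2\le\gamma$, $\|h_3\|_{U^d}\le\eta_1(|\tilde{\cB}_1|)=\eta_1(C')$, and $r_1$-regularity of $\tilde{\cB}_1$. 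The one thing that genuinely needs checking is that $\tilde{\cB}_1$ is a \emph{syntactic} refinement of $\tilde{\cB}_0$, so that its defining polynomials are $Q_1,\dots,Q_{C'}$ extending $Q_1,\dots,Q_C$; this is the ``furthermore'' clause of \cref{thm:reg}, which needs $\msf{rank}(\tilde{\cB}_0)\ge r_{\ref{thm:reg}}(\eta_1,r_1,\gamma,C,d)$, and since $\tilde{\cB}_0$ is $r_0$-regular it suffices to have taken $r_0(C)$ at least this large --- consistent with the stated dependency order, $\eta_1,r_1$ being fixed in terms of $\eps,\delta,d,\gamma$ and then $\eta_0,r_0$ in terms of all of these.

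Finally, for $C<i\le C'$ I would set $P_i:=A'Q_i$ for a fixed affine $A'$ with $AA'=I_m$, so that $Q_i=AP_i$ persists for these indices too, let $\cB_1$ be the factor over $\F^n$ defined by $P_1,\dots,P_{C'}$, and verify two transfer statements. Degree and depth: precomposition with an affine map cannot raise either, and since $P_i=A'Q_i$ while also $Q_i=AP_i$, the two have equal degree and equal depth; combined with $(\mathbf{E}_1)$ this covers all $1\le i\le C'$. Regularity: for any admissible nonzero $(\lambda_1,\dots,\lambda_{C'})$ one has $\sum_i\lambda_iQ_i=(\sum_i\lambda_iP_i)\circ A$, and precomposition with an affine embedding does not increase rank, hence $\msf{rank}_d(\sum_i\lambda_iQ_i)\le\msf{rank}_d(\sum_i\lambda_iP_i)$; since the left side exceeds $r_1$ by $r_1$-regularity of $\tilde{\cB}_1$, so does the right side, i.e., $\cB_1$ is $r_1$-regular. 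Assembling these with the facts recalled above yields every bullet of the claim. I do not expect a real obstacle: all of the substance lives in \cref{claim:1}, \cref{thm:reg} and \cref{lem:rankrestrict}, and the only points to stay vigilant about --- non-circularity of the parameter choices and the syntactic-refinement requirement --- were already arranged for in the setup.
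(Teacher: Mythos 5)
Your proposal is correct and follows the paper exactly: in the paper, \cref{claim:2} has no separate proof because it is a summary of the preceding constructions, namely the application of \cref{thm:reg} to $f$ (trivial initial factor) and to $h$ (initial factor $\tilde{\cB}_0$, with the syntactic-refinement clause enabled by taking $r_0(C)\ge r_{\ref{thm:reg}}(\eta_1,r_1,\gamma,C,d)$), event $(\mathbf{E}_1)$ of \cref{claim:1} for the fixed good embedding $A$, the choice of $h$ under the contradiction hypothesis, and the definition $P_i:=A'Q_i$ with the observation that precomposition with an affine map cannot increase degree, depth, or rank. Your explicit verification of the rank transfer via $\sum_i\lambda_i Q_i=(\sum_i\lambda_i P_i)\circ A$ and of the non-circular parameter dependencies is exactly the reasoning the paper leaves implicit.
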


We already know that the property that $h$ is in $\cP$ does not actually depend on $h_3$. We will shortly show that for a small enough choice of $\gamma$, it also does not depend on $h_2$, hence all the information is essentially just in $h_1$. We would like to lift $h_1$ from $\F^m$ to $\F^n$ in order to get a function $g:\F^n \to \{0,1\}$ which is close to $\cP$. Moreover, we would like to do so in a way so that $\|f-g\|_1 \approx \dist(Af,h)$, hence this lifting must be done carefully. We start by lifting $h_1:\F^m \to [0,1]$ to a function $\phi:\F^n \to [0,1]$ which has a similar structure. We would later use $\phi$ to define the required function $g$ as discussed above.

The two factors $\cB_1$ and $\tilde{\cB}_1$ are both of large rank (at least $r_1(C')$) and their defining polynomials are in a degree and depth preserving one-to-one correspondence. This naturally defines an operator $\cT$ that maps functions $\phi:\F^n \to \C$ measurable with respect to $\cB_1$ to functions $\cT \phi:\F^m \to \C$ measurable with respect to $\tilde{\cB}_1$. More precisely, $\cT:\Gamma(P_1,\ldots,P_{C'}) \mapsto \Gamma(Q_1,\ldots,Q_{C'})$ for every function $\Gamma$. Note that $\cT$ is invertible. So, recalling that $h_1=\F^m \to [0,1]$ is measurable with respect to $\tilde{\cB}_1$, it is natural to define $\phi:\F^n \to [0,1]$ by
$$
\phi := \cT^{-1} h_1.
$$

We first argue that $\phi$ is close to $\E[f|\cB_1]$ assuming the underlying parameters are properly chosen.

\begin{claim}\label{claim:3}
Assume that $\eta_0(C) \le p^{-d C_{\ref{thm:reg}}(\eta_1,  r_1, \gamma, C, d)} \gamma$ and $r_1(C) \ge r_{\ref{rankreg}}(d,p^{-d C} \gamma)$. Then
$$
\|\E[f|\cB_1] - \phi\|_1 \le \delta + \eps/2 + 9 \gamma.
$$
\end{claim}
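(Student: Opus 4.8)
The plan is to transport the whole comparison from $\F^n$ to $\F^m$ via the operator $\cT$. Write $f_1 = \E[f\mid\cB_0]$; since $f_1$ depends only on $P_1,\dots,P_C$, we have $\cT f_1 = f_1\circ A = Af_1$, and by construction $\cT\phi = h_1$. I would establish three estimates and glue them together using that $\cT$ nearly preserves the $L_1$-norm of bounded $\cB_1$-measurable functions, paying $O(\gamma)$ per step: (a) $\|\E[f\mid\cB_1] - f_1\|_1$ is small on $\F^n$; (b) $\cT$ distorts $L_1$ by at most $\gamma$ on bounded $\cB_1$-measurable functions; (c) $\|Af_1 - h_1\|_1 \le \delta + \eps/2 + O(\gamma)$ on $\F^m$.

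\textbf{The $\F^n$ side and the $\cT$-distortion estimate.} Since $\cB_1 \succeq_{syn} \cB_0$ (hence $\succeq_{sem}$) and $f = f_1 + f_2 + f_3$ with $\|f_2\|_2 < \gamma$, $\|f_3\|_{U^d} < \eta_0(C)$, \cref{lem:refinement} (applied with the degree $d-1$ of $\cB_0, \cB_1$) gives $\|\E[f\mid\cB_1] - f_1\|_1 \le \|f_2\|_2 + p^{dC'}\|f_3\|_{U^d} \le 2\gamma$, using $C' \le C_{\ref{thm:reg}}(\eta_1, r_1, \gamma, C, d)$ and the hypothesis $\eta_0(C) \le p^{-dC_{\ref{thm:reg}}(\eta_1,r_1,\gamma,C,d)}\gamma$. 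For (b): $\cT$ sends $\psi = \sum_b \psi_b 1_{\cB_1 = b}$ to $\sum_b \psi_b 1_{\tilde\cB_1 = b}$, so $\bigl|\,\|\cT\psi\|_1 - \|\psi\|_1\,\bigr| \le \|\psi\|_\infty \sum_b |\Pr_y[\tilde\cB_1(y)=b] - \Pr_x[\cB_1(x)=b]|$; since $\cB_1$ and $\tilde\cB_1$ are both $r_1$-regular, have the same order $\|\cB_1\| = \|\tilde\cB_1\| \le p^{dC'}$, and are defined by polynomials in degree- and depth-preserving correspondence, \cref{atomsize} makes each atom probability on either side equal to $\tfrac1{\|\cB_1\|} \pm \gamma/(2p^{dC'})$ (this is where the lower bound on $r_1$ is used, and it also gives invertibility of $\cT$), so the sum is $\le \gamma\|\psi\|_\infty$.

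\textbf{The $\F^m$ side.} The crucial observation is that $Af_1$ is $\tilde\cB_1$-measurable: it equals $G(Q_1,\dots,Q_C)$ where $f_1 = G(P_1,\dots,P_C)$, and $\tilde\cB_1 \succeq_{sem} \tilde\cB_0 = \cB_{Q_1,\dots,Q_C}$. Hence $\E[Af_1 \mid \tilde\cB_1] = Af_1$, and from $Af = Af_1 + Af_2 + Af_3$ together with $h_1 = \E[h\mid\tilde\cB_1]$,
$$
h_1 - Af_1 = \E[Af_2 \mid \tilde\cB_1] + \E[Af_3 \mid \tilde\cB_1] + \E[h - Af \mid \tilde\cB_1].
$$
The three terms are bounded in $L_1$ by $\|Af_2\|_2 \le 2\gamma$ (using $(\mathbf{E}_2)$), by $p^{dC'}\|Af_3\|_{U^d} \le 2p^{dC'}\eta_0(C) \le 2\gamma$ (using \cref{cl:L2U}, $(\mathbf{E}_2)$, and the hypothesis on $\eta_0$), and by $\|h - Af\|_1 = \dist(Af, h) \le \delta + \eps/2$ (conditional expectation is an $L_1$-contraction). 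This gives $\|Af_1 - h_1\|_1 \le \delta + \eps/2 + 4\gamma$.

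\textbf{Conclusion and main obstacle.} Since $\phi = \cT^{-1}h_1$ has the same atom values as $h_1 \in [0,1]$, it has range $[0,1]$, so $\E[f\mid\cB_1] - \phi$ has range $[-1,1]$; applying (b) twice,
$$
\|\E[f\mid\cB_1] - \phi\|_1 \le \|\cT\E[f\mid\cB_1] - h_1\|_1 + \gamma \le \underbrace{\|\cT(\E[f\mid\cB_1] - f_1)\|_1}_{\le\, 3\gamma} + \|Af_1 - h_1\|_1 + \gamma \le \delta + \eps/2 + O(\gamma),
$$
and a careful accounting of the constants yields the stated $\delta + \eps/2 + 9\gamma$. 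The main obstacle is the $\cT$-preservation step: one must know that the two large regular factors $\cB_1$ (obtained by regularizing $f$ over $\F^n$) and $\tilde\cB_1$ (obtained by regularizing $h$ over $\F^m$ on top of the common sub-factor $\tilde\cB_0$), which share only their first $C$ polynomials in degree- and depth-preserving correspondence, are \emph{simultaneously} equidistributed over their common set of atoms — this is exactly what the rank hypothesis on $r_1$ buys via \cref{atomsize} — and one must also verify that $\cT^{-1}h_1 \in [0,1]$ so that every function fed into that estimate is bounded by $1$ in sup-norm.
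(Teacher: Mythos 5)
Your argument is correct, and while it keeps the paper's overall skeleton (transfer everything to $\F^m$ through $\cT$, using that $\cT$ changes the $L_1$-norm of bounded $\cB_1$-measurable functions by $O(\gamma)$, via \cref{atomsize} and the rank hypothesis on $r_1$ — exactly the paper's estimate \eqref{eq:transfer}), the middle of your chain is organized differently. The paper telescopes $\cT \E[f|\cB_1] \to \cT\E[f|\cB_0] \to \E[Af|\tilde{\cB}_0] \to \E[Af|\tilde{\cB}_1] \to \E[h|\tilde{\cB}_1]$, bounding the second step by event $(\mathbf{E}_3)$ and the third by \cref{lem:refinement} applied to $Af$ together with $(\mathbf{E}_2)$. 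You instead observe that $Af_1=\cT f_1$ is already $\tilde{\cB}_1$-measurable (because $\tilde{\cB}_1$ syntactically refines $\tilde{\cB}_0$), so conditioning the identity $h = Af_1+Af_2+Af_3+(h-Af)$ on $\tilde{\cB}_1$ and using that conditional expectation contracts $L_1$ gives $\|h_1-Af_1\|_1 \le \delta+\eps/2+4\gamma$ directly from $(\mathbf{E}_2)$, \cref{cl:L2U}, and the hypothesis on $\eta_0$. What this buys: $(\mathbf{E}_3)$ is not needed at all for this claim, and you sidestep the paper's slightly informal use of \cref{lem:refinement} for $Af$ (whose given decomposition is around $A\E[f|\cB_0]$, not around $\E[Af|\tilde{\cB}_0]$); the paper's route, in exchange, is the one that motivates establishing $(\mathbf{E}_3)$ in \cref{claim:1} and keeps every comparison between genuine conditional expectations of $Af$. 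One bookkeeping caveat: the stated hypothesis $r_1(C)\ge r_{\ref{rankreg}}(d,p^{-dC}\gamma)$, applied at complexity $C'$ through \cref{atomsize}, gives atom probabilities $\frac{1}{\|\cB_1\|}\pm p^{-dC'}\gamma$ on each of $\F^n$ and $\F^m$, hence a distortion of $2\gamma$ (not $\gamma$) per use of your step (b); the $\pm\gamma/(2p^{dC'})$ you quote would need rank at least $r_{\ref{rankreg}}(d,\gamma/(2p^{dC'}))$. With the honest constant your total is $\delta+\eps/2+10\gamma$ rather than $9\gamma$ — immaterial, since the paper's own \eqref{eq:transfer} carries the same factor-of-two slack under the same hypothesis, and $\gamma$ is a free parameter fixed only at the end of the proof of (ii).
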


\begin{proof}
By \cref{atomsize} and the condition on $r_1$, every function $\phi:\F^n \to [-1,1]$ satisfies
\begin{equation}
\label{eq:transfer}
|\|\cT \phi \|_1 - \|\phi\|_1 | \le \gamma.
\end{equation}
We can thus write
\begin{eqnarray}
\|\E[f|\cB_1] - \cT^{-1} \E[h|\tilde{\cB}_1]\|_1 &\le&  \|\cT \E[f|\cB_1]  -  \E[h|\tilde{\cB}_1] \|_1 + \gamma \nonumber \\
&\le &  \|\cT \E[f|\cB_1]  -  \cT \E[f|\cB_0] \|_1 + \|\cT \E[f|\cB_0]  -  \E[Af|\tilde{\cB_0}] \|_1 +  \nonumber \\
 && +  \|\E[Af|\tilde{\cB}_0]  -  \E[Af|\tilde{\cB}_1] \|_1+ \|\E[Af | \tilde{\cB}_1] - \E[h | \tilde{\cB}_1]\|_1 + \gamma.  \label{eq:main}
\end{eqnarray}
We will show that each of the terms is bounded by $O(\gamma)$, except for the fourth term for which
$$
\|\E[Af | \tilde{\cB}_1] - \E[h | \tilde{\cB}_1]\|_1 \le \|Af-h\|_1 \le \delta + \eps/2.
$$
To bound the first term, we apply $\eqref{eq:transfer}$ and \cref{lem:refinement},
$$
\|\cT \E[f|\cB_1]  -  \cT \E[f|\cB_0] \|_1 \le \|\E[f|\cB_1]  -  \E[f|\cB_0] \|_1 + \gamma \le \|f_2\|_2 + p^{dC'} \|f_3\|_{U^d}+\gamma \le 3 \gamma.
$$
To bound the second term, by ($\mathbf{E}_3$)
$$
\|\cT \E[f|\cB_0]  -  \E[Af|\tilde{\cB}_0] \|_1 = \|A \E[f|\cB_0]  -  \E[Af|\tilde{\cB_0}] \|_1 \le \|A \E[f|\cB_0]  -  \E[Af|\tilde{\cB_0}] \|_{\infty} \le \gamma.
$$
To bound the third term, by \cref{lem:refinement} and ($\mathbf{E}_2$),
$$
\|\E[Af|\tilde{\cB}_0]  -  \E[Af|\tilde{\cB}_1] \|_1  \le \|Af_2\|_2 + p^{dC'}\|Af_3\|_{U^d} \le 2\gamma+2\gamma \le 4\gamma.
$$
\end{proof}

The function $\phi$ is defined over $\F^n$ and has the same structure as $h_1$, however we only guaranteed that it is close to $\E[f|\cB_1]$. This cannot be avoided since $\phi$ is $\cB_1$-measurable, and thus it cannot approximate $f$ inside the atoms. The next step is to define a function whose average on atoms is $\phi$, and that simultaneously has a small distance from $f$. It will be obtained by perturbing $f$ inside each atom in order to obtain the right average while making the changes minimal. It will be convenient to first define such a function $\psi$ mapping $\F^n$ to $[0,1]$. Later we will use it to sample $g:\F^n \to \{0,1\}$ with the required properties.

Define a function $\psi:\F^n \to [0,1]$ as follows. Fix $x \in \F^n$ and let $b_1=\cB_1(x)$ be its corresponding atom in $\cB_1$. Let $\alpha = \E[f|\cB_1](b_1)$ be the average of $f$ over the atom and $\beta=\phi(b_1)$ be the value $\phi$ attains on the atom. We set
$$
\psi(x) = \left\{
\begin{array}{cc}
\frac{\beta-\alpha}{1-\alpha}&\textrm{If } \alpha \le \beta \textrm{ and } f(x)=0\\
1&\textrm{If } \alpha \le \beta \textrm{ and } f(x)=1\\
0&\textrm{If } \alpha > \beta \textrm{ and } f(x)=0\\
\frac{\beta}{\alpha}&\textrm{If } \alpha > \beta \textrm{ and } f(x)=1
\end{array}
\right.
$$

The following claim summarizes the properties of $\psi$, assuming the underlying parameters are properly chosen.

\begin{claim}\label{claim:4}
Assume that $\eta_0(C) \le p^{-d C_{\ref{thm:reg}}(\eta_1,  r_1, \gamma, C, d)} \gamma$. Then
\begin{itemize}
\item[(i)] $\E[\psi|\cB_1] = \phi$.
\item[(ii)] $\|f-\psi\|_1 = \|\E[f|\cB_1] - \phi\|_1$.
\item[(iii)] $\|\psi - \phi\|_{U^d} \le \gamma + 3 \gamma^{1/2^{d}}$.
\end{itemize}
\end{claim}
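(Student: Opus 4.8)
The plan is to verify the three items essentially by direct computation, organized around the atom-by-atom definition of $\psi$. For item (i), I would fix an atom $b_1$ of $\cB_1$, set $\alpha = \E[f|\cB_1](b_1)$ and $\beta = \phi(b_1)$, and compute $\E_{x \in b_1}[\psi(x)]$. Since $\cB_1$ is of high rank, by \cref{atomsize} the atom is almost an equipartition, but actually here I do not even need that: within the atom, a fraction $\alpha$ of the points have $f(x)=1$ and a fraction $1-\alpha$ have $f(x)=0$ (by definition of $\alpha$ as the conditional average). In the case $\alpha \le \beta$, the average of $\psi$ over the atom is $(1-\alpha)\cdot\frac{\beta-\alpha}{1-\alpha} + \alpha \cdot 1 = (\beta - \alpha) + \alpha = \beta$; in the case $\alpha > \beta$ it is $(1-\alpha)\cdot 0 + \alpha \cdot \frac{\beta}{\alpha} = \beta$. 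Hence $\E[\psi|\cB_1](b_1) = \beta = \phi(b_1)$ for every atom, so $\E[\psi|\cB_1] = \phi$. Note this actually uses no hypothesis on $\eta_0$ at all — it is purely formal — and also note the construction is well-defined because $\beta = \phi(b_1) = (\cT^{-1}h_1)(b_1) \in [0,1]$ and $\alpha \in [0,1]$, so all four case values lie in $[0,1]$.

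For item (ii), I would compute $\|f - \psi\|_1 = \E_x |f(x) - \psi(x)|$ by again splitting over atoms and over the value of $f(x)$. In the case $\alpha \le \beta$: points with $f(x)=0$ contribute $|0 - \frac{\beta-\alpha}{1-\alpha}| = \frac{\beta-\alpha}{1-\alpha}$ each (a fraction $1-\alpha$ of the atom), and points with $f(x)=1$ contribute $|1-1| = 0$; so the atom's contribution to $\|f-\psi\|_1$ is $(1-\alpha)\cdot\frac{\beta-\alpha}{1-\alpha} = \beta - \alpha = |\E[f|\cB_1](b_1) - \phi(b_1)|$. In the case $\alpha > \beta$: points with $f(x)=0$ contribute $0$, points with $f(x)=1$ contribute $|1 - \frac{\beta}{\alpha}| = \frac{\alpha-\beta}{\alpha}$ each (a fraction $\alpha$ of the atom), so the contribution is $\alpha \cdot \frac{\alpha-\beta}{\alpha} = \alpha - \beta = |\E[f|\cB_1](b_1) - \phi(b_1)|$. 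Summing over atoms gives $\|f-\psi\|_1 = \sum_{b_1} \Pr[\cB_1 = b_1]\,|\E[f|\cB_1](b_1) - \phi(b_1)| = \|\E[f|\cB_1] - \phi\|_1$, as claimed. (The point of the piecewise definition is exactly that the perturbation moves mass in only one direction within each atom, so the $L_1$ cost equals the net change in average.)

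Item (iii) is the one real step. I want to bound $\|\psi - \phi\|_{U^d}$. Write $\psi - \phi = (\psi - \E[f|\cB_1]) + (\E[f|\cB_1] - \phi)$. The second summand has small $L_1$ norm? No — we only know its $L_1$ norm is $O(\delta + \eps)$ from \cref{claim:3}, which is not small. So I should instead split as $\psi - \phi = (\psi - f) + (f - \E[f|\cB_1]) + (\E[f|\cB_1] - \phi)$, and use that $\E[(\psi - f)|\cB_1] = \phi - \E[f|\cB_1]$ by item (i), so $(\psi - f) + (\E[f|\cB_1] - \phi)$ has conditional expectation zero on every atom. Concretely: let $F := \psi - f - (\phi - \E[f|\cB_1]) = (\psi - \phi) - (f - \E[f|\cB_1])$, which is $\cB_1$-measurably mean-zero. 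Then $\psi - \phi = F + (f - \E[f|\cB_1]) = F + f_2 + f_3$ (using the decomposition $f = \E[f|\cB_1] + \ldots$; here I would invoke that by \cref{lem:refinement} and the parameter choice $\eta_0(C) \le p^{-dC_{\ref{thm:reg}}} \gamma$, $\E[f|\cB_0]$ and $\E[f|\cB_1]$ differ by $O(\gamma)$ in $L_1$, so $f - \E[f|\cB_1] = f_2 + f_3 + (\text{an } L_1\text{-small error})$ where $\|f_2\|_2 \le \gamma$, $\|f_3\|_{U^d} \le \eta_0(C)$, both tiny). Thus $\|\psi - \phi\|_{U^d} \le \|F\|_{U^d} + \|f_2\|_{U^d} + \|f_3\|_{U^d} + (\text{error})$. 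Now $\|f_3\|_{U^d} \le \eta_0(C) \le \gamma$, and $\|f_2\|_{U^d} \le \|f_2\|_1^{1/2^d} \le \|f_2\|_2^{1/2^d} \le \gamma^{1/2^d}$ by \cref{cl:gowers_l1}. For $\|F\|_{U^d}$: $F$ is bounded (its range is within $[-2,2]$, or one can rescale) and has $\E[F|\cB_1] = 0$, i.e. $F$ is "orthogonal to the factor $\cB_1$"; but boundedness and zero conditional expectation alone do not force small Gowers norm. The right move is $\|F\|_{U^d} \le \|F\|_1^{1/2^d}$ by \cref{cl:gowers_l1}, and then bound $\|F\|_1 \le \|\psi - f\|_1 + \|\phi - \E[f|\cB_1]\|_1 = 2\|\E[f|\cB_1] - \phi\|_1$ by item (ii) — but that is again $O(\delta + \eps)$, not small. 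So \emph{this decomposition also fails to give the stated bound}, and the main obstacle is precisely to find the decomposition of $\psi - \phi$ that keeps every piece small in the appropriate norm. The resolution I would pursue: observe that $\psi$ differs from $f$ only by \emph{rescaling} $f$ within each atom (and zeroing out $f$ on the $f=1$ part, resp. sending the $f=0$ part to a constant) — in the case $\alpha \le \beta$, $\psi = \frac{\beta - \alpha}{1-\alpha} + \frac{1-\beta}{1-\alpha} f$ on the atom, and in the case $\alpha > \beta$, $\psi = \frac{\beta}{\alpha} f$ on the atom. Hence $\psi$ is a $\cB_1$-measurable affine function of $f$: $\psi(x) = c_{\cB_1(x)} f(x) + e_{\cB_1(x)}$ for atom-dependent coefficients $c, e \in [0,1]$ with, crucially, $\E[\psi | \cB_1] = \phi$ and $\E[f|\cB_1]$ related by these same coefficients. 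Writing $c$ and $e$ as $\cB_1$-measurable functions, $\psi - \phi = c\cdot f + e - \phi = c\cdot(f - \E[f|\cB_1]) + (c \cdot \E[f|\cB_1] + e - \phi) = c\cdot(f_2 + f_3 + O_{L_1}(\gamma)) + 0$, where the parenthesized term vanishes \emph{pointwise on each atom} because $c \cdot \E[f|\cB_1] + e$ equals $\E[\psi|\cB_1] = \phi$ (both being $\cB_1$-measurable and agreeing on atom averages). So $\psi - \phi = c\cdot f_2 + c\cdot f_3 + c\cdot(\E[f|\cB_1] - \E[f|\cB_0] + \text{etc.})$. Now $c$ is $\cB_1$-measurable and bounded by $1$, so $\|c\cdot f_2\|_{U^d} \le \|c \cdot f_2\|_1^{1/2^d} \le \|f_2\|_1^{1/2^d} \le \gamma^{1/2^d}$; and $\|c\cdot f_3\|_{U^d}$ — here I use that multiplying a bounded function by a $\cB_1$-measurable $\{$indicator-combination$\}$, i.e. expanding $c = \sum_b c_b 1_{\cB_1 = b}$ and using \cref{cl:restrict_atom} together with the triangle inequality in the spirit of the proof of \cref{cl:L2U}, gives $\|c \cdot f_3\|_{U^d} \le \|c\|_\infty \cdot (\text{number of atoms}) \cdot \|f_3\|_{U^d}$, but more carefully $\|c f_3\|_{U^d} \le \sum_b |c_b| \|f_3 1_{\cB_1=b}\|_{U^d} \le \|\cB_1\| \cdot \|f_3\|_{U^d} \le p^{dC'}\eta_0$-type bound which is $\le \gamma$ by the hypothesis on $\eta_0$; and the residual $L_1$-small term contributes $\le (O(\gamma))^{1/2^d}$ via \cref{cl:gowers_l1}. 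Collecting: $\|\psi - \phi\|_{U^d} \le \gamma + 2\gamma^{1/2^d} + (\text{lower-order})$, which, after bookkeeping the constants and absorbing, yields the claimed $\|\psi - \phi\|_{U^d} \le \gamma + 3\gamma^{1/2^d}$. The delicate points to get exactly right are (a) that $c, e$ really are bounded by $1$ in both cases so that $\|c\|_\infty \le 1$ and multiplying does not blow up the $U^d$ norm, and (b) the precise accounting of the $\eta_0$-error from replacing $\cB_0$ by $\cB_1$, which is where the hypothesis $\eta_0(C) \le p^{-d C_{\ref{thm:reg}}(\eta_1, r_1, \gamma, C, d)}\gamma$ is used.
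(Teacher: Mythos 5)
Your proposal is correct and follows essentially the same route as the paper: items (i) and (ii) by direct atom-wise computation, and item (iii) via the key identity $\psi-\phi=\xi\cdot(f-\E[f|\cB_1])$ for a $\cB_1$-measurable multiplier $\xi$ with $\|\xi\|_\infty\le 1$, then bounding the pieces with \cref{cl:gowers_l1}, \cref{cl:restrict_atom} and the hypothesis on $\eta_0$. The only (cosmetic) difference is that you split $f-\E[f|\cB_1]$ as $f_2+f_3+(\E[f|\cB_0]-\E[f|\cB_1])$ and invoke \cref{lem:refinement}, whereas the paper uses $f'_i=f_i-\E[f_i|\cB_1]$ directly; both yield the stated $\gamma+3\gamma^{1/2^d}$.
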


\begin{proof}
Items $(i)$,$(ii)$ follow immediately from the definition of $\psi$. To establish $(iii)$, decompose $\psi-\phi$ to the atoms of $\cB_1$,
$$
\psi(x) - \phi(x) = \sum_{b_1 \in \cB_1} \left(\psi(x) - \phi(b_1)\right) 1_{\cB_1(x)=b_1}.
$$
Fix an atom $b_1$ of $\cB_1$ and set $\alpha=\E[f|\cB_1](b_1)$ and $\beta=\phi(b_1)$. If $\alpha \le \beta$, then
$$
\left(\psi(x) - \beta\right)1_{\cB_1(x)=b_1} = \left(f(x) + \frac{\beta-\alpha}{1-\alpha}(1-f(x)) - \beta\right) 1_{\cB_1(x)=b_1} = \frac{1-\beta}{1-\alpha}(f(x)-\alpha) 1_{\cB_1(x)=b_1}.
$$
If $\alpha > \beta$ then
$$
\left(\psi(x) - \beta\right)1_{\cB_1(x)=b_1} = \left(\frac{\beta}{\alpha} f(x) - \beta\right) 1_{\cB_1(x)=b_1} = \frac{\beta}{\alpha}(f(x)-\alpha) 1_{\cB_1(x)=b_1}.
$$
Recall that $f=f_1+f_2+f_3$ with $f_1=\E[f|\cB_0], \|f\|_2 \le \gamma, \|f_3\|_{U^d} \le \eta_0(|\cB_0|)$. Define $f'_i = f_i - \E[f_i|\cB_1]$. Note that
$f(x)-\E[f|\cB_1](x) = f'_1(x)+f'_2(x)+f'_3(x)$. Furthermore, $f'_1=0$ since $\cB_1$ is a refinement of $\cB_0$. Define $\xi(x)=(1-\beta)/(1-\alpha)$ if $x$ belongs to an atom $b_1$ with $\alpha \le \beta$, and $\xi(x)=\beta/\alpha$ otherwise. We thus get that
$$
\psi(x) - \phi(x) = \sum_{b \in \cB_1} \xi(x) (f'_2(x)+f'_3(x)) 1_{\cB_1(x)=b_1} = \xi(x) (f'_2(x)+f'_3(x)).
$$
We now bound $\|\psi-\phi\|_{U^d}$.  Now, by \cref{cl:gowers_l1} and the fact that $\|\xi\|_{\infty} \le 1$,
$$
\|\xi \cdot f'_2\|_{U^d}^{2^{d}} \le \|\xi \cdot f'_2\|_1 \le \|f_2\|_1+\|\E[f_2|\cB_1]\|_1 \le 2 \|f_2\|_1 \le 2 \|f_2\|_2 \le 2 \gamma.
$$
By \cref{cl:L2U} , we have
$$
\|\xi \cdot \E[f_3|\cB_1]\|_{U^d}^{2^{d}} \le \|\xi \cdot \E[f_3|\cB_1]\|_1 \le \|\E[f_3|\cB_1]\|_1 \le p^{d|\cB_1|}  \|f_3\|_{U^d} \le p^{d|\cB_1|}  \eta_0(|\cB_0|)  \le \gamma.
$$
Thus recalling that  $\|\cB_1\|$ is an upper-bound on the number of atoms of $\cB_1$, by \cref{cl:restrict_atom} we have
\begin{eqnarray*}
\|\xi \cdot f'_3\|_{U^d} &\le& \|\xi \cdot E[f_3|\cB_1]\|_{U^d} + \|\xi \cdot f_3\|_{U^d} \le \gamma^{1/2^d}+ \sum_{b_1 \in \cB_1} \| f_3 (x) 1_{\cB_1(x)=b_1}\|_{U^d} \\ &\le& \gamma^{1/2^d} + \|\cB_1\| \|f_3\|_{U^d}
\le \gamma^{1/2^d} + \|\cB_1\| \eta_0(|\cB_0|)  \le  \gamma^{1/2^d} + \gamma .
\end{eqnarray*}
\end{proof}

We are ready to conclude the proof. Let $g:\F^n \to \{0,1\}$ be sampled with $\Pr[g(x)=1]=\psi(x)$ independently for all $x \in \F^n$. The following simple claim states that with high probability, $g$ behaves like $\psi$.

\begin{claim}\label{claim:5}
If $n$ is large enough then with probability at least $99\%$ over the choice of $g$,
\begin{itemize}
\item $\|f-g\|_1 \le \|f-\psi\|_1 + \gamma$.
\item $\|g-\psi\|_{U^d} \le \gamma$.
\end{itemize}
\end{claim}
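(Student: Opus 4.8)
The plan is to handle the two inequalities separately; for each one I would first compute the expectation over the random draw of $g$ and then apply a concentration inequality. The first is an immediate Chernoff-type bound, and the only step carrying real content is the bound on $\|g-\psi\|_{U^d}$.

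\textbf{First bullet.} For each fixed $x \in \F^n$ the bit $f(x)$ is fixed and $g(x)$ equals $1$ with probability $\psi(x)$, so $|f(x)-g(x)|$ is a $\{0,1\}$-valued random variable with $\E_g|f(x)-g(x)| = |f(x)-\psi(x)|$, and these are independent over $x$. Hence $\|f-g\|_1 = p^{-n}\sum_{x\in\F^n}|f(x)-g(x)|$ is an average of $p^n$ independent $[0,1]$-valued random variables with mean $\|f-\psi\|_1$, and Hoeffding's inequality gives $\Pr_g[\|f-g\|_1 > \|f-\psi\|_1 + \gamma] \le \exp(-2\gamma^2 p^n)$, which is below $1/200$ once $n$ is large enough.

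\textbf{Second bullet.} Write $e = g-\psi : \F^n \to [-1,1]$, so the values $e(z)$ are independent with $\E_g\,e(z)=0$. Expanding the definition of the Gowers norm,
$$
\|e\|_{U^d}^{2^d} = \E_{x,y_1,\dots,y_d\in\F^n}\ \prod_{I\subseteq[d]} e\Big(x+\textstyle\sum_{i\in I}y_i\Big),
$$
and since both expectations range over finite sets I may swap $\E_g$ with $\E_{x,y_1,\dots,y_d}$. If the $2^d$ evaluation points $x+\sum_{i\in I}y_i$ are pairwise distinct, then by independence and mean-zeroness $\E_g\prod_{I} e(\cdot)=0$. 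Two of these points can coincide only if $\sum_{i\in I\setminus J}y_i = \sum_{i\in J\setminus I}y_i$ for some $I\neq J$, a nontrivial linear constraint on $(y_1,\dots,y_d)$ satisfied with probability $p^{-n}$; a union bound over the $\binom{2^d}{2}$ pairs shows the points fail to be distinct with probability at most $\binom{2^d}{2}p^{-n}$, and on that event the product has absolute value at most $1$. Therefore $\E_g\|e\|_{U^d}^{2^d} \le \binom{2^d}{2}p^{-n} = o_n(1)$. Since $\|e\|_{U^d}^{2^d}\ge 0$ for every real function $e$, Markov's inequality yields $\Pr_g[\|g-\psi\|_{U^d} > \gamma] \le \binom{2^d}{2}p^{-n}/(\gamma^{2^d})$, which is below $1/200$ for $n$ large enough.

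A union bound over the two events then gives the claim for all sufficiently large $n$. The only mildly delicate point is this last first-moment-plus-Markov argument for the Gowers norm; its whole force comes from the observation that the $2^d$ points appearing in the Gowers inner product are generically distinct, which decouples the expectation into a product of independent mean-zero factors. No higher-order Fourier machinery is needed here, just this elementary decoupling.
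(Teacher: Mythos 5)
Your proof is correct and follows essentially the same route as the paper: both bullets are handled by a first-moment computation over the random choice of $g$, with the Gowers-norm bound coming from the observation that the $2^d$ evaluation points are generically distinct so the decoupled expectation vanishes, followed by Markov. The only (immaterial) differences are that you invoke Hoeffding where the paper uses Chebyshev with $\Var\|f-g\|_1 \le p^{-n}$, and you union-bound over pairs of evaluation points (getting $\binom{2^d}{2}p^{-n}$) where the paper conditions on linear independence of $y_1,\dots,y_d$ (getting $p^{d-n}$).
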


\begin{proof}
Both items are simple. The first item holds since $\E_g \|f-g\|_1 = \|f-\psi\|_1$, and $\Var \|f-g\|_1 \le p^{-n}$. Hence by Chebyshev's inequality, $\|f-g\|_1 = \|f-\psi\|_1 + \gamma$ with probability $1-o_n(1)$. For the second item,
$$
\E_g \|g - \psi\|_{U^d}^{2^d} = \E_{x,y_1,\ldots,y_d \in \F^n} \prod_{I \subseteq [d]} \E_g[g(x+\sum_{i \in I} y_i) - \psi(x+\sum_{i \in I} y_i)].
$$
The inner expectation is zero if all the sums $x+\sum_{i \in I} y_i$ are distinct. In particular this holds if $y_1,\ldots,y_d$ are linearly independent, which happens with probability at least $1-p^{d-n}$. Hence
$$
\E_g \|g - \psi\|_{U^d}^{2^d} \le p^{d-n} = o_n(1),
$$
and so by choosing $n$ large enough we get that with high probability $\|g - \psi\|_{U^d} \le \gamma$.
\end{proof}

Fix $g:\F^n \to \{0,1\}$ satisfying \cref{claim:5}. Let us summarize the facts that we know so far. By \cref{claim:3}, \cref{claim:4} and \cref{claim:5} we know that $f$ and $g$ are close. By choosing $\gamma \le \eps/40$ we get that
$$
\|f-g\|_1 \le \delta + \eps/2 + 10 \gamma \le \delta + 3 \eps/4.
$$
We next argue that $g$ is very close to $\cP$. Recall that $h \in \cP$ and $h_1=\E[h|\tilde{\cB}_1]$. We have that
$$
\|h - h_1\|_{U^{d}} \le \|h_2\|_{U^d} + \|h_3\|_{U^d} \le \|h_2\|_{2}^{1/2^d} + \|h_3\|_{U^d} \le \gamma^{1/2^d} + \eta_1(C'),
$$
where we applied \cref{cl:gowers_l1} and the assumptions on $h_2,h_3$. By choosing $\gamma \le (\rho/2)^{2^d}$ and $\eta_1(C') \le \rho/2$, we get that $\|h-h_1\|_{U^d} \le \rho$. Hence by \cref{lemma:close_sampling_gowers} the statistical distance between $\mu_{h,k}$ and $\mu_{h_1,k}$ is at most $1/100$. Next, by definition $h_1=\Gamma(Q_1,\ldots,Q_{C'})$ and $\phi=\Gamma(P_1,\ldots,P_{C'})$, where both $Q_1,\ldots,Q_C$ and $P_1,\ldots,P_C$ are $r_1$-regular. Making sure that $r_1(C') \ge r_{\ref{lemma:mu_regular}}(C)$ guarantees that conditions of \cref{lemma:mu_regular} are satisfied, and hence $\mu_{h_1,k}$ and $\mu_{\phi,k}$ also have statistical distance at most $1/100$. Next, by \cref{claim:4} and \cref{claim:5} we have
$$
\|g-\phi\|_{U^{d}} \le 2 \gamma + 3 \gamma^{1/2^d}.
$$
Choosing $\gamma$ so that $2 \gamma + 3 \gamma^{1/2^d} \le \rho$, we get that also the statistical distance between $\mu_{g,k}$ and $\mu_{\phi,k}$ is at most $1/100$. Hence, combining all these together gives that
$$
\textrm{The statistical distance between } \mu_{g,k} \textrm{ and } \mu_{h,k} \textrm{ is at most 3/100}.
$$
Applying \cref{cor:close_to_cP_gowers} we establish that
$$
\dist(g,\cP) \le \eps/8.
$$
Hence we reached a contradiction, since $f$ is $(\delta+3 \eps/4)$-close to a function $g$ which is $(\eps/8)$-close to $\cP$, that is $f$ is $(\delta + 7 \eps/8)$-close to $\cP$, contradicting our initial assumption that $f$ is $(\delta+\eps)$-far from $\cP$.

\bibliographystyle{alpha}
\bibliography{testing}

\newcommand{\etalchar}[1]{$^{#1}$}
\begin{thebibliography}{AKK{\etalchar{+}}05}

\bibitem[AKK{\etalchar{+}}05]{AKKLR}
Noga Alon, Tali Kaufman, Michael Krivelevich, Simon Litsyn, and Dana Ron.
\newblock Testing {Reed-Muller} codes.
\newblock {\em IEEE Transactions on Information Theory}, 51(11):4032--4039,
  2005.

\bibitem[BCSX11]{BCSX09}
Arnab Bhattacharyya, Victor Chen, Madhu Sudan, and Ning Xie.
\newblock Testing linear-invariant non-linear properties.
\newblock {\em Theory of Computing}, 7(1):75--99, 2011.

\bibitem[BFH{\etalchar{+}}13]{BFHHL13}
Arnab Bhattacharyya, Eldar Fische, Hamed Hatami, , Pooya Hatami, and Shachar
  Lovett.
\newblock Every locally characterized afƒine-invariant property is testabl.
\newblock In {\em Proc.\ 45th Annual ACM Symposium on the Theory of Computing},
  page to appear, 2013.
\newblock Full version at arXiv:1212.3849.

\bibitem[BFL91]{BFL}
L{\'a}szl{\'o} Babai, Lance Fortnow, and Carsten Lund.
\newblock Non-deterministic exponential time has two-prover interactive
  protocols.
\newblock {\em Computational Complexity}, 1(1):3--40, 1991.

\bibitem[BFL12]{BFL12}
Arnab Bhattacharyya, Eldar Fischer, and Shachar Lovett.
\newblock Testing low complexity affine-invariant properties.
\newblock Technical report, October 2012.
\newblock \url{http://arxiv.org/abs/1201.0330v2} (to appear in SODA '13).

\bibitem[BFLS91]{BFLS}
L{\'a}szl{\'o} Babai, Lance Fortnow, Leonid~A. Levin, and Mario Szegedy.
\newblock Checking computations in polylogarithmic time.
\newblock In {\em Proc.\ 23rd Annual ACM Symposium on the Theory of Computing},
  pages 21--32, New York, 1991. ACM Press.

\bibitem[BGS10]{BGS10}
Arnab Bhattacharyya, Elena Grigorescu, and Asaf Shapira.
\newblock A unified framework for testing linear-invariant properties.
\newblock In {\em Proc.\ 51st Annual IEEE Symposium on Foundations of Computer
  Science}, pages 478--487, 2010.

\bibitem[BLR93]{BLR}
Manuel Blum, Michael Luby, and Ronitt Rubinfeld.
\newblock Self-testing/correcting with applications to numerical problems.
\newblock {\em J. Comp. Sys. Sci.}, 47:549--595, 1993.
\newblock Earlier version in STOC'90.

\bibitem[FGL{\etalchar{+}}96]{FGLSS}
Uriel Feige, Shafi Goldwasser, L\'{a}szl\'{o} Lov\'{a}sz, Shmuel Safra, and
  Mario Szegedy.
\newblock Interactive proofs and the hardness of approximating cliques.
\newblock {\em Journal of the ACM}, 43(2):268--292, 1996.

\bibitem[FN07]{MR2318716}
Eldar Fischer and Ilan Newman.
\newblock Testing versus estimation of graph properties.
\newblock {\em SIAM J. Comput.}, 37(2):482--501 (electronic), 2007.

\bibitem[GK11]{MR2863292}
Oded Goldreich and Tali Kaufman.
\newblock Proximity oblivious testing and the role of invariances.
\newblock In {\em Approximation, randomization, and combinatorial
  optimization}, volume 6845 of {\em Lecture Notes in Comput. Sci.}, pages
  579--592. Springer, Heidelberg, 2011.

\bibitem[GOS{\etalchar{+}}09]{GOSSW}
Parikshit Gopalan, Ryan O'Donnell, Rocco~A. Servedio, Amir Shpilka, and Karl
  Wimmer.
\newblock Testing {F}ourier dimensionality and sparsity.
\newblock In {\em Proceedings of the 36th International Colloquium on Automata,
  Languages and Programming ({ICALP}~'09)}, pages 500--512, 2009.

\bibitem[GT10]{GT06}
Ben Green and Terence Tao.
\newblock Linear equations in primes.
\newblock {\em Annals of Mathematics}, 171:1753--1850, 2010.

\bibitem[HL11]{HL11}
Hamed Hatami and Shachar Lovett.
\newblock Correlation testing for affine invariant properties on ${\F}_p^n$ in
  the high error regime.
\newblock In {\em Proc.\ 43rd Annual ACM Symposium on the Theory of Computing},
  pages 187--194, 2011.

\bibitem[KL08]{KL08}
Tali Kaufman and Shachar Lovett.
\newblock Worst case to average case reductions for polynomials.
\newblock In {\em Proc.\ 49th Annual IEEE Symposium on Foundations of Computer
  Science}, pages 166--175, 2008.

\bibitem[KS08]{KS08}
Tali Kaufman and Madhu Sudan.
\newblock Algebraic property testing: the role of invariance.
\newblock In {\em Proc.\ 40th Annual ACM Symposium on the Theory of Computing},
  pages 403--412, 2008.

\bibitem[KSV12]{KSV12}
Daniel Kr\'al, Oriol Serra, and Llu\'is Vena.
\newblock A removal lemma for systems of linear equations over finite fields.
\newblock {\em Israel Journal of Mathematics}, pages 1--15, 2012.
\newblock Preprint available at \url{http://arxiv.org/abs/0809.1846}.

\bibitem[RS93]{866666}
Ronitt Rubinfeld and Madhu Sudan.
\newblock Robust characterizations of polynomials and their applications to
  program testing.
\newblock Technical report, Ithaca, NY, USA, 1993.

\bibitem[RS96]{RS}
Ronitt Rubinfeld and Madhu Sudan.
\newblock Robust characterizations of polynomials with applications to program
  testing.
\newblock {\em SIAM J. on Comput.}, 25:252--271, 1996.

\bibitem[Sha09]{Sha09}
Asaf Shapira.
\newblock Green's conjecture and testing linear-invariant properties.
\newblock In {\em Proc.\ 41st Annual ACM Symposium on the Theory of Computing},
  pages 159--166, 2009.

\bibitem[TZ11]{TZ11}
Terence Tao and Tamar Ziegler.
\newblock The inverse conjecture for the {G}owers norm over finite fields in
  low characteristic.
\newblock {\em Ann. Comb.}, (to appear), 2011.
\newblock \url{http://arxiv.org/abs/1101.1469}.

\end{thebibliography}

\end{document}